\documentclass{article} 
\usepackage{iclr2027_conference}
\usepackage{times}

\usepackage{amsmath,amsfonts,bm}
\usepackage{amsthm}
\usepackage{mathtools}
\usepackage{thmtools}

\def\eqref#1{equation~\ref{#1}}

\def\1{\bm{1}}

\def\mB{{\bm{B}}}

\def\mI{{\bm{I}}}

\DeclareMathAlphabet{\mathsfit}{\encodingdefault}{\sfdefault}{m}{sl}
\SetMathAlphabet{\mathsfit}{bold}{\encodingdefault}{\sfdefault}{bx}{n}

\newcommand{\E}{\mathbb{E}}

\newcommand{\R}{\mathbb{R}}

\newcommand{\bP}[2][]{\Pr\ifthenelse{\isempty{#1}}{}{_{#1}}\left[#2\right]}
\newcommand{\bE}[2][]{\mathop\mathbb{E}\ifthenelse{\isempty{#1}}{}{_{#1}}\left[#2\right]}
\newcommand{\bI}[2][] {\mathop\mathbb{I}\ifthenelse{\isempty{#1}}{}{_{#1}}\left[#2\right]}

\newtheorem{theorem}{Theorem}[section]
\newtheorem{lemma}[theorem]{Lemma}
\newtheorem{fact}[theorem]{Fact}

\newtheorem{corollary}[theorem]{Corollary}
\newtheorem{remark}[theorem]{Remark}

\DeclarePairedDelimiter{\inner}{\langle}{\rangle}
\DeclarePairedDelimiter{\norm}{\lVert}{\rVert}
\DeclarePairedDelimiter\abs{\lvert}{\rvert}
\iclrfinalcopy

\usepackage{hyperref}
\usepackage{cleveref}
\usepackage{url}
\usepackage{natbib}
\usepackage{amssymb}
\usepackage{wrapfig}

\usepackage{color}
\usepackage{bbm}

\usepackage{lipsum}
\usepackage{enumitem}
\usepackage[ruled,vlined]{algorithm2e}

\allowdisplaybreaks

\title{Finite-Precision Gram-Schmidt Walks}

 \author{Emile Anand, Jan van den Brand, Peter Chen \\
 Georgia Institute of Technology\\
\texttt{\{emile,vdbrand,pchen423\}@gatech.edu}
}

\newcommand{\clamp}{\epsilon_{t}}
\newcommand{\ec}{\epsilon_{iter}}

\newcommand{\els}{\epsilon_{LS}}
 
 \date{}
\begin{document}

\maketitle
        
\newcommand{\thinking}[1]{\\\noindent\textit{\underline{thinking}: #1}}
\renewcommand{\epsilon}{\varepsilon}

\begin{abstract}
    The Gram-Schmidt Walk is a randomized vector-balancing algorithm whose subgaussian guarantees support applications in discrepancy, experimental design, and data compression; however, these theoretical guarantees are established in exact arithmetic, whereas implementations must approximate least-squares directions, boundary updates, and sampling probabilities in finite precision. This is important because small numerical errors can change which coordinates freeze and thereby alter the subsequent trajectory. We analyze the concentration of the perturbed Gram-Schmidt walk directly under bounded, potentially biased and history-dependent errors. For $n$ input vectors of Euclidean norm at most one, we obtain a modified MGF bound depending on key error sources which recovers the original result as the error goes to zero. We also construct a full-column-rank instance in which bounded update errors produce bias of order $\min\{n^2\varepsilon_{\mathrm{iter}},n\}$, showing that updates accumulate error unavoidably under this model. Finally, we validate our findings in a variety of settings by ablating on the bit precision and problem size.\looseness=-1
\end{abstract}

\section{Introduction}
Selecting a small representative subset of a dataset is an important task in algorithmic statistics \citep{xiao2026unifyingdatasetpruningdistillation,göttlich2024usinglowdiscrepancypointsdata,chazelle2000discrepancy}, which requires preserving many statistics with relatively few points. Discrepancy theory offers a useful perspective on this task: assign signs to data vectors so that their signed sum is small. Given some matrix $A\in\R^{m\times n}$, we seek to assign weights $\{1,-1\}$ to its columns such that the sum is minimized in the infinity norm. In other words, we ask how to split the columns into two groups such that both groups have roughly the same sum as measured by the infinity norm of the difference. We call this quantity the discrepancy of the matrix,  
\begin{equation}
    \operatorname{disc}(A) = \min_{x \in \{-1,1\}^n} \norm{A x}_\infty
\end{equation} 
A balanced partition can serve as a compression step if its cardinality or weights are controlled, and finding such a coloring has natural applications in machine learning such as coresets for large datasets \citep{karnin2019discrepancycoresetssketchesmachine, liberty2026nearlyoptimalattentioncoresets,ren2026streaming}, attention approximation \citep{kochetkova2026streamingattentionapproximationdiscrepancy,anand2026smatattentionstructuredlongcontextsequence}, covariate balance in experimental design \citep{harshaw2023balancingcovariatesrandomizedexperiments}, and distribution thinning \citep{carrell2026lowrankthinning}, where one wishes to summarize a large dataset with a few carefully chosen points. In theoretical computer science, this problem has found applications to spectral sparsification \citep{lau2025spectral}, approximation algorithms \citep{eisenbrand2012binpackingdiscrepancypermutations}, computational geometry \citep{Chazelle2004TheDM}, Monte Carlo integration \citep{dick2014discrepancy}, and matrix-vector multiplication \citep{lingas2024booleanmatrixmultiplicationhighly,anand2026structuralcomplexitymatrixvectormultiplication}. Intuitively, if we can split a dataset into two balanced parts where the whole dataset can be summarized by one of the halves, then we can recursively repeat this procedure to find a small approximation of the dataset.\looseness=-1

One such algorithm is the Gram-Schmidt Walk of \cite{bansal2017gramschmidtwalkcurebanaszczyk} which recovers a seminal result of \cite{294762.294765} by allowing one to sample $x$ from a certain distribution such that $A x$ concentrates tightly around the zero vector. Given $B=[b_1,\ldots,b_n]\in\mathbb{R}^{m\times n}$ with $\|b_i\|_2\leq 1$ and an initial fractional assignment $z_0\in[-1,1]^n$, the walk returns $z\in\{1,-1\}^n$. At a high level, the algorithm takes a random walk in the unit cube $[-1, 1]^n$ with entries becoming integral as the algorithm progresses. Once an entry becomes integral we say that it is frozen, with future steps not moving along this dimension. The steps of the random walk are chosen to be mean zero while not increasing discrepancy too much among the alive entries such that the final output has small discrepancy. \cite{harshaw2023balancingcovariatesrandomizedexperiments} showed that for every $v\in\mathbb{R}^m$,
\begin{equation}
    \mathbb{E}\left[\exp\left(\langle B(z-z_0),v\rangle\right)\right]
  \leq \exp\left(\frac12\|v\|_2^2\right).
\end{equation}

Thus every fixed linear projection of the discrepancy has Gaussian-type tails with a dimension-independent scale. This property is also useful as a component of more elaborate algorithms. For instance, \citet{carrell2026lowrankthinning} develop Gram-Schmidt-based thinning procedures and implementations that accelerate repeated balancing steps.

The concentration guarantees above are established for an idealized Gram-Schmidt Walk operating in exact arithmetic: specifically, the least-squares directions and boundary steps are computed exactly, and each random choice follows its prescribed probability, and these guarantees do not automatically extend to the finite precision process which would be executed on a computer that uses a small number of bits. In particular, these aforementioned theoretical guarantees do not quantify how much numerical error the walk can tolerate due to the sensitivity of its movement within the hypercube, or how much working precision is sufficient to preserve the desired discrepancy bound. This motivates a finite-precision analysis that explains how numerical errors affect concentration and provides sufficient conditions for preserving a desired discrepancy guarantee.

Indeed, implementing these steps in practice introduces errors due to finite-precision. These errors can manifest in the computed direction, the boundary update, and the probabilities of hitting each boundary of the unit hypercube, and can affect the current iterate; however, more seriously, they can also affect which coordinate freezes, which in turn changes the least-squares system (which is also solved by maintaining a Cholesky factorization for improved runtimes) that must later be solved. Therefore, in the worst-case, the perturbed trajectory can deviate significantly from the ideal trajectory.
 It is not clear a priori that subgaussianity should be preserved under approximate arithmetic, as small perturbations in position can potentially change which entries become frozen in a given iteration, changing the path that the walk takes over the rest of the algorithm. So, any analysis that follows the perturbed trajectory alongside one exact trajectory may lose the structure responsible for concentration. Moreover, deterministic rounding may not preserve conditional mean zero, so an implementation can accumulate systematic discrepancy even when the individual update errors are small. 
Numerical sensitivity is already familiar from classical Gram-Schmidt orthogonalization, which can suffer severe loss of orthogonality in finite precision, especially for nearly linearly dependent inputs \citep{giraud2005rounding}. The Gram-Schmidt walk introduces a further challenge: numerical errors can affect both the computed directions and the branching and freezing decisions, changing the subsequent trajectory and the distribution of the final coloring. Understanding its numerical robustness therefore requires connecting local computational errors to the concentration of the final discrepancy.\looseness=-1

These observations expose a gap between theory and practice for the Gram-Schmidt Walk. The classical concentration guarantees assume exact arithmetic, while practical implementations use approximate least-squares directions, rounded updates, and inexact sampling probabilities. These errors can change which coordinates freeze and hence which subproblems the algorithm solves next, so the exact-arithmetic guarantees need not hold for the resulting finite-precision output distribution.

 We address this gap by studying a model with  updates $z_{t+1}\!=\!z_t\!+\!\hat\delta_t\hat u_t\!+\!\epsilon_t$ for $\norm{\epsilon_t}_\infty\!\leq\!\varepsilon_{\mathrm{iter}}$ where $\hat u_t$ and $\hat\delta_t$ are computed quantities, and $\epsilon_t$ includes rounding, clamping, and freezing corrections. Uniform budgets $\els$ and $\varepsilon_{\mathrm{pr}}$ additionally control least-squares image error and conditional branch-probability error, respectively. Our analysis permits biased, history-dependent errors and changes to the freezing trajectory. Section~\ref{sec finite precision gs walk} specifies the complete model, including feasibility and freezing requirements. Our contributions address the theory-practice gap in two directions:

\begin{enumerate}[leftmargin=*, itemsep=2pt, topsep=2pt, parsep=0pt, label=(\arabic*)]
    \item \textbf{Theoretical guarantees.} We establish concentration guarantees for the Gram-Schmidt Walk under biased, history-dependent perturbations. The key technical challenge is that numerical errors can change freezing decisions and subsequent least-squares directions, while destroying the exact conditional unbiasedness used in prior analyses. We address these interacting effects through a phase-based analysis that controls accumulated conditional drift without requiring proximity to an ideal execution. While our bounds retain the sharp exact-arithmetic quadratic term in the log moment-generating function, we further reveal how the discrepancy quality depends on arithmetic precision, the bit complexity, and the error of the least-squares solver.
    \item \textbf{Empirical validation.} Our experiments illustrate the precision-dependent discrepancy in practical numerical implementations. We show that finite precision can \emph{significantly} degrade the discrepancy achieved by the Gram-Schmidt Walk, and characterize the sensitivity of this degradation on the  precision and matrix conditioning. By measuring empirical discrepancy tails, we evaluate when approximate arithmetic preserves output quality and when it is a limitation.
\end{enumerate}

\subsection{Related Works}

\textbf{Gram-Schmidt walk.} \cite{bansal2017gramschmidtwalkcurebanaszczyk} introduced the Gram-Schmidt walk as an efficient constructive approach to Banaszczyk's vector balancing theorem, obtaining a distribution over colorings with subgaussian discrepancy \citep{294762.294765, dadush2016constructiveversionbanaszczyksvector,dadush2018balancing}. \cite{harshaw2023balancingcovariatesrandomizedexperiments} subsequently established the sharp 1-subgaussian guarantee and developed the walk's application to covariate balance in randomized experiments. These analyses exploit the geometry of the least-squares directions and the conditional mean-zero property of the updates.  In contrast, we study how this guarantee changes under biased, history-dependent perturbations. Such errors can change which coordinate freezes, changing the linear systems solved in future iterations. We analyze the realized perturbed trajectory directly, preserving the guarantee due to the exact-arithmetic in the log moment-generating-function bound but also revealing an additional term due to accumulated error.

\textbf{Algorithmic discrepancy theory.} Several lines of work develop efficient randomized constructions for discrepancy minimization. \cite{alweiss2020discrepancyminimizationselfbalancingwalk} introduced a self-balancing walk with guarantees for online vector balancing \citep{spencer1985six,bansal2020onlinevectorbalancinggeometric,kulkarni2023optimalonlinediscrepancyminimization} and the Komlós problem \citep{spencer1994ten}.  \cite{bansal2022unifiedapproachdiscrepancyminimization} give a stochastic-process framework that recovers several discrepancy bounds, while \cite{deng2025discrepancy} accelerate discrepancy minimization through faster implementations of Edge-Walk and hereditary-discrepancy algorithms. Our focus is the arithmetic accuracy needed to preserve the output concentration of a particular balancing algorithm. A different notion of stability appears in \cite{gamarnik2023geometric}, who establish barriers for algorithms whose outputs change little under perturbations of random input instances. Their input-stability condition concerns sensitivity of the selected coloring; our analysis permits changes in the trajectory and terminal coloring while controlling the resulting discrepancy distribution.

\textbf{Finite precision and numerical stability.} Classical numerical analysis relates roundoff error, backward error, and conditioning for linear-system and least-squares computation \citep{doi:10.1137/1.9780898718027}. Recent works makes precision requirements explicit in algorithms that repeatedly update linear-algebraic quantities. \cite{10.1145/3618260.3649787, 10353098,doi:10.1137/1.9781611978971.188} analyze the backward stability of inverse maintenance and distributed optimization, and derive bit-complexity bounds for regression and linear programming. \cite{diem2011notion,pan1985bit,anand_et_al:LIPIcs.ICALP.2024.10} extend this perspective to dynamic matrix expressions and their determinants. For implementations based on inverse updates, \cite{doi:10.1137/25M1747087} provide related forward- and backward-error bounds for the Sherman-Morrison-Woodbury formula evaluated with approximate inverses.  These results concern the accuracy and cost of numerical subroutines. The Gram-Schmidt Walk additionally requires controlling how these errors interact with adaptive freezing decisions and randomized updates. Our analysis addresses an additional question of  how local numerical errors affect concentration when they also change adaptive freezing decisions and conditional update distributions, connecting the working precision and conditioning to conditional drift, accumulated discrepancy error, and ultimately concentration of the terminal coloring, without requiring exact conditional unbiasedness.\looseness=-1

\textbf{Discrepancy in learning and data compression.} \cite{karnin2019discrepancycoresetssketchesmachine} connect low class discrepancy to small offline and streaming coresets. Particularly close to our setting, \cite{carrell2026lowrankthinning} analyze subgaussian thinning, introduce Gram-Schmidt-based thinning procedures, and give a cubic-time halving implementation based on local inverse updates. Their framework relates concentration of the balancing primitive to compression quality and supports applications to attention approximation, gradient reordering, and two-sample testing. Discrepancy also informs neural-network weight quantization  \citep{chee2025discquantquantizationmethodneural} and attention mechanisms \citep{kochetkova2026streamingattentionapproximationdiscrepancy}. These applications motivate understanding the precision requirements of balancing algorithms. Our contribution provides a concentration guarantee for an implementation of the Gram-Schmidt primitive. In turn, these guarantees are flexible and can be converted into guarantees for repeated compression or any learning pipeline by propagating the resulting error budget through that procedure.

\section{Preliminaries and Notation}

\newcommand{\dth}{\hat{\delta_t}}
\newcommand{\uth}{\hat{u_t}}

We write $[n]$ for $\{1,...,n\}$. For any vector $v \in \R^n$ and $i\in[n]$ we write $v(i)$ for the $i$th entry.
 
We first review a few key properties of subgaussian random variables and random vectors. 

\begin{fact}[Characterization of subgaussian random variables]For a real-valued random variable $X$, the following are equivalent up to constant factors in the parameter $K_i$ \citep{Vershynin_2026}:

\begin{enumerate}
    \item There exists some $K_1$ such that $\Pr[|X| \geq t] \leq 2 \exp \left( -t^2 / K_1 ^2 \right)$ for all $t \geq 0$,
    \item There exists some $K_2$ such that $\E [\exp(X^2 / K_2^2)] \leq 2$,
\end{enumerate}
and for $X$ such that $\E [X] = 0$, the following condition is equivalent to the previous two.
\begin{enumerate}
    \setcounter{enumi}{2}
    \item $\E[\exp (t X)] \leq \exp (K_3 ^2 t^2)$, for all $t \in \mathbb{R}$.
\end{enumerate}

\textup{
A random vector is subgaussian with parameter $K_2$ if all of its unit-direction projections are subgaussian with parameter not exceeding $K_2$, and for the third formulation given we can equivalently show for a random vector $Z$ that $\E[\exp (\inner{Z, v})] \leq \exp(K_3^2 \| v \|_2^2)$ for all vectors $v$, with the norm of $v$ playing the role of the scalar $t$.}
\end{fact}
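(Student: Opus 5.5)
We prove the cycle of implications $1 \Rightarrow 2 \Rightarrow 1$, and then $3 \Rightarrow 1$ and $2 \Rightarrow 3$ under $\E[X]=0$. At each step we track how the constant $K_i$ degrades by an absolute factor. The vector statement then follows by applying the scalar result to $\inner{Z,v/\norm{v}_2}$ and rescaling $t=\norm{v}_2$.

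\textbf{Tails to square-exponential moment ($1\Rightarrow 2$).} Use the layer-cake formula
\[
\E[\exp(X^2/K^2)] = 1+\int_0^\infty 2s K^{-2} e^{s^2/K^2}\Pr[|X|\ge s]\,ds .
\]
Inserting the tail bound $\Pr[|X|\ge s]\le 2e^{-s^2/K_1^2}$ and taking $K=\sqrt{3}K_1$ gives an integrand of $4sK^{-2}e^{-2s^2/(3K_1^2)}$. This integrates to $4\cdot\tfrac{1}{3K_1^2}\cdot\tfrac{3K_1^2}{4}=1$, so the expectation is at most $2$.

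\textbf{Square-exponential moment to tails ($2\Rightarrow 1$).} Apply Markov's inequality to $e^{X^2/K_2^2}$:
\[
\Pr[|X|\ge t]\le 2e^{-t^2/K_2^2},
\]
so $K_1=K_2$ works directly.

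\textbf{MGF bound to tails ($3\Rightarrow 1$).} Use the Chernoff method. We have $\Pr[X\ge t]\le e^{-\lambda t+K_3^2\lambda^2}$. Optimizing at $\lambda=t/(2K_3^2)$ gives $e^{-t^2/(4K_3^2)}$. Doing the same for $-X$ and taking a union bound yields condition 1 with $K_1=2K_3$.

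\textbf{Square-exponential moment to MGF bound ($2\Rightarrow 3$, mean zero).} This is the main obstacle, since it is the only step that uses $\E X=0$. Normalize $K_2=1$.

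For small $|\lambda|\le 1$:
\begin{itemize}
\item Start from the elementary inequality $e^{x}\le 1+x+x^2e^{|x|}$.
\item Taking expectations and using $\E X=0$ gives $\E e^{\lambda X}\le 1+\lambda^2\E[X^2e^{|\lambda X|}]$.
\item Apply $|\lambda X|\le \lambda^2/2+X^2/2$ and $X^2\le e^{X^2/2}$. Then Cauchy--Schwarz, or directly $X^2e^{X^2/2}\le 2e^{X^2}$, bounds the expectation by an absolute constant $C$.
\item This gives $1+C\lambda^2\le e^{C\lambda^2}$.
\end{itemize}

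For large $|\lambda|\ge 1$:
\begin{itemize}
\item Use $\lambda X\le \lambda^2/2+X^2/2$.
\item This gives $\E e^{\lambda X}\le e^{\lambda^2/2}\,\E e^{X^2/2}\le \sqrt2\,e^{\lambda^2/2}$ by Jensen's inequality.
\item Absorb $\sqrt2\le e^{\lambda^2}$ to get $e^{\tfrac32\lambda^2}$.
\end{itemize}

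Combining the two regimes gives $K_3^2=\max(C,3/2)$ times $K_2^2$, after rescaling.
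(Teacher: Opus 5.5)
Your proof is correct. The paper gives no proof of this Fact (it is quoted from Vershynin), so the relevant benchmark is the standard textbook argument. Yours follows it closely and produces explicit constants: $K_2=\sqrt3\,K_1$, $K_1=K_2$, $K_1=2K_3$, and $K_3^2\le 2\sqrt e\,K_2^2$. The last one comes from taking $C=2\sqrt e$, which follows from $e^{\abs{\lambda X}}\le e^{1/2}e^{X^2/2}$ for $\abs{\lambda}\le1$ together with $X^2e^{X^2/2}\le e^{X^2}$.

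The differences from the textbook cycle are minor:
\begin{itemize}
\item You pass from tails to the $\psi_2$ condition directly via the layer-cake formula, rather than through the moment growth $\norm{X}_{L^p}\lesssim K\sqrt p$.
\item For $\abs{\lambda}\le1$ you use $e^x\le1+x+x^2e^{\abs{x}}$. The usual argument instead uses $e^x\le x+e^{x^2}$ followed by Jensen, $\E e^{\lambda^2X^2}\le(\E e^{X^2})^{\lambda^2}\le2^{\lambda^2}$.
\item Your large-$\abs{\lambda}$ step via $\lambda X\le\lambda^2/2+X^2/2$ is the same as the textbook's.
\end{itemize}

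Your chain $3\Rightarrow1\Rightarrow2$ (Chernoff, then layer cake) is exactly the mechanism the paper uses in Lemma~\ref{lem:mgf_orlicz}. There it carries an extra linear term and the choice $K=\sqrt6\,\sigma$.

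In the vector part, make two things explicit. First, the reduction to $\inner{Z,v/\norm{v}_2}$ works uniformly over directions only because every constant in your scalar implications is absolute. Second, the equivalence with the third formulation requires $\E Z=0$, so that each projection is centered; only your $2\Rightarrow3$ step uses centering.
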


\textbf{Vector Balancing:}  Suppose we are given $B = [b_1\dots b_n]\in\R^{m\times n}$, where $\|b_i\|_2\leq 1$ for $i\in[n]$, let $z_0 \in [-1,1]^n$. Then, the linear discrepancy objective is to output $z\in\{-1,1\}^n$ such that $\|B(z - z_0)\|_\infty$ is minimized. The Gram-Schmidt Walk samples a coloring whose discrepancy vector has controlled directional tails. This problem has been studied extensively, even in the online setting where only one vector is revealed at a time and signs must be irrevocably assigned \citep{bansal2020onlinediscrepancyminimizationstochastic}.  \cite{harshaw2023balancingcovariatesrandomizedexperiments} shows that $B (z - z_0)$ is $1$-subgaussian (over the randomness of $z$).

\textbf{Gram-Schmidt Walk.} The Gram-Schmidt walk (Algorithm \ref{alg: GS walk original}) achieves its guarantee through an adaptive sequence of randomized updates. At each iteration, it chooses a pivot coordinate, selects a least-squares direction whose pivot entry is one, and moves to one of the two endpoints of the feasible line segment in the unit hypercube $[-1,1]^n$ with probabilities such that the exact update is conditionally mean $0$. Here, coordinates that reach $\{1,-1\}$ are frozen, and the direction is recomputed over the remaining coordinates. The concentration proof of \cite{harshaw2023balancingcovariatesrandomizedexperiments} exploits this probabilistic structure and the orthogonality of the revealed subspaces as the set of possible pivots shrinks.\looseness=-1

\begin{algorithm}[tbp]
    \DontPrintSemicolon
    \small
    \caption{Gram-Schmidt walk \citep{harshaw2023balancingcovariatesrandomizedexperiments}}
    \label{alg: GS walk original}

    Initialize $t \gets 1$ and $z_1 \gets z_0$\;
    Choose $p \in[n]$ uniformly from $A_1:=\{i:|z_1(i)|<1\}$\;
    \BlankLine

    \While{$z_t \notin \{\pm 1\}^n$}{
        Create the active set $A_t \gets \{i\in[n]: |z_t(i)|<1\}$\;
        \If{$p \notin A_t$}{
            Sample $p$ uniformly at random from $A_t$\;
        }
        Let $p_t\coloneqq p$ and let $U_{p_t} \gets \{u\in\R^n: u(p_t)=1,\;
        u(i)=0\text{ for all }i\notin A_t\}$\;
        Set $u_t \gets
        \operatorname*{arg\,min}_{u\in U_{p_t}}\|\mB u\|_2^2$\;
        Let $[-\delta_t^-,\delta_t^+] =
        \{\delta\in\R: z_t+\delta u_t\in[-1,1]^n\}$\;

        Sample $\delta_t$ according to $\delta_t =
            \begin{cases}
                \delta_t^+, & \text{with probability }
                \dfrac{\delta_t^-}{\delta_t^-+\delta_t^+},\\[4pt]
                -\delta_t^-, & \text{with probability }
                \dfrac{\delta_t^+}{\delta_t^-+\delta_t^+}.
            \end{cases}$

        Set $z_{t+1} \gets z_t+\delta_t u_t$, such that $\E[z_{t+1} | z_t] = z_t$\;
        Increment $t \gets t+1$\;
    }
    \Return{$z_t\in\{\pm1\}^n$}\;
\end{algorithm}
The canonical result of \cite{harshaw2023balancingcovariatesrandomizedexperiments} proves that the outputs are $1$-subgaussian.
\begin{theorem}[\citep{harshaw2023balancingcovariatesrandomizedexperiments}]
Let $\mB=[b_1 \ldots b_n]\in\mathbb R^{m\times n}$ satisfy
$\|b_i\|_2\le1$, and let $z_0\in[-1,1]^n$. Then, for all $v\in\R^m$, the terminal output $z$ of the walk in exact arithmetic satisfies
\begin{equation}\mathbb E\exp\bigl(\langle B(z-z_0),v\rangle\bigr)
\le\exp\left(\frac{1}{2} \|v\|_2^2 \right).\end{equation}
\end{theorem}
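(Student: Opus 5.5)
We follow the potential-function (martingale) argument of Harshaw et al. The key device is a random quadratic correction that compensates for the variance of each step. Write the increments as $z_{t+1}-z_t=\delta_t u_t$, so that $B(z-z_0)=\sum_t \delta_t B u_t$. We will build a supermartingale of the form
\begin{equation}
M_t=\exp\Bigl(\bigl\langle B(z_t-z_0),v\bigr\rangle-\tfrac12\,\Phi_t\Bigr),
\end{equation}
where $\Phi_t$ is a nonnegative predictable quantity. It must start at $\Phi_1=0$ and stay bounded by $\|v\|_2^2$ at termination. If $M_t$ is a supermartingale, then taking expectations at the terminal time (finite, since each step freezes at least one coordinate, so there are at most $n$ steps) gives $\mathbb E\exp(\langle B(z-z_0),v\rangle)\le\exp(\tfrac12\|v\|_2^2)$.

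\textbf{One-step bound.} Conditioned on the history, $\delta_t$ is a two-point mean-zero variable taking values $\delta_t^+$ and $-\delta_t^-$. Put $a_t=\langle Bu_t,v\rangle$. The exponential-moment inequality for such variables is the step we need, and we take the cleaner route of Harshaw et al. They prove
\begin{equation}
\mathbb E\bigl[\exp(\delta_t a_t-\tfrac12\delta_t^2 a_t^2)\mid\mathcal F_t\bigr]\le 1.
\end{equation}
This follows from $e^{x-x^2/2}\le 1+x$, applied to $x=\delta_t a_t$, together with $\mathbb E[\delta_t\mid\mathcal F_t]=0$. A cruder alternative, the variance bound $\mathbb E[\delta_t^2]=\delta_t^+\delta_t^-$, would only give Hoeffding-type constants. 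So we set $\Phi_{t+1}=\Phi_t+\delta_t^2 a_t^2$. With this choice $M_t$ is a supermartingale, but $\Phi$ is now random and non-predictable. The substitute is to show that the final sum $\sum_t \delta_t^2\langle Bu_t,v\rangle^2\le\|v\|_2^2$ holds deterministically on every trajectory. Then $\exp(\langle B(z-z_0),v\rangle)\le M_{\mathrm{end}}\exp(\tfrac12\|v\|^2)$ pointwise, and we conclude.

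\textbf{Main obstacle: the deterministic bound $\sum_t\delta_t^2\langle Bu_t,v\rangle^2\le\|v\|^2$.} Group the steps into phases, each phase being the block of iterations sharing a pivot $p$. Within a phase the active sets shrink as coordinates freeze, and at each step $Bu_t$ is the residual of $b_p$ projected away from $\mathrm{span}\{b_i:i\in A_t\setminus\{p\}\}$. Since these spans are nested decreasing, the vectors $Bu_t$ are nested projections of $b_p$. The plan is to show that the increments
\begin{equation}
w_t := \frac{Bu_t}{\|Bu_t\|}\,\cdot\,(\text{suitable weight})
\end{equation}
across all steps form an orthogonal-type decomposition. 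Concretely, we would use the Gram–Schmidt orthogonalization of the columns $b_i$, taken in reverse order of freezing and pivoting. Each $Bu_t$ then lies in the span of orthonormal vectors attached to the current pivot's phase, with norm at most $\|b_p\|\le1$.

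We also need $|\delta_t|\le 2$, and more precisely that the total movement of the pivot coordinate within its phase is at most its range. Since $u_t(p)=1$, the sum of $|\delta_t|$ over the phase is bounded by the pivot's travel in $[-1,1]$. Combining $\sum_{t\in\text{phase}}\delta_t^2\le(\sum|\delta_t|)\max|\delta_t|$, the needed inequality is precisely the delicate point of Harshaw et al. We would try to establish it via the matrix inequality $\sum_t \delta_t^2 (Bu_t)(Bu_t)^\top\preceq I$. This would be proved by induction backwards in time, using that $Bu_t$ is orthogonal to $Bu_s$-directions of later pivots on the frozen/active split. We expect this PSD bound to be the main technical hurdle.
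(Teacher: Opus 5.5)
You correctly use the phase decomposition and the reverse-order Gram--Schmidt basis. The core of the proposal, however, fails at both of its steps: the supermartingale with compensator $\Phi_t=\sum_{s<t}\delta_s^2a_s^2$, and the pathwise bound on that compensator.

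\emph{The one-step bound is false.} The inequality $e^{x-x^2/2}\le 1+x$ holds only for $x\ge 0$. The function $\log(1+x)-x+\tfrac12x^2$ has derivative $x^2/(1+x)\ge 0$ and vanishes at $0$, so it is negative on $(-1,0)$. The mean-zero property does not rescue the bound for asymmetric steps; it would for symmetric ones, via $\cosh x\le e^{x^2/2}$. Harshaw et al.\ do not prove such a per-step inequality.

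Take $n=1$, $z_0=\tfrac12$, and $v=b_1$ with $\|b_1\|_2=1$. Then $a_1=1$, and $\delta_1=\tfrac12$ with probability $\tfrac34$, $\delta_1=-\tfrac32$ with probability $\tfrac14$. So $\mathbb{E}[\exp(\delta_1a_1-\tfrac12\delta_1^2a_1^2)]=\tfrac34e^{3/8}+\tfrac14e^{-21/8}\approx 1.11>1$.

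\emph{The pathwise bound is also false.} In the same example, on the branch $\delta_1=-\tfrac32$ you get $\sum_t\delta_t^2\langle Bu_t,v\rangle^2=\tfrac94>\|v\|_2^2$. Hence the Loewner inequality $\sum_t\delta_t^2(Bu_t)(Bu_t)^\top\preceq I$ you plan to prove cannot hold. This is not an artifact of $z_0\neq 0$. Take $z_0=0$, $b_1=e_1$, $b_2=(\cos\theta,\sin\theta)$ with $0<\theta<\pi/2$, and pivot $1$ first. With positive probability the walk reaches $z=(1,-\cos\theta)$ and then steps by $\delta=1+\cos\theta$ along $Bu=b_2$. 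For $v=b_2$ the first step contributes $\langle e_1-\cos\theta\, b_2,b_2\rangle=0$, so the sum is $(1+\cos\theta)^2>1$. Also, the pivot can reverse direction inside a phase, so $\sum_t|\delta_t|$ over a phase is not bounded by the pivot's range; only the signed sum telescopes.

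\emph{The missing idea is that the compensator must not depend on realized step sizes.} It is charged once per pivot phase.

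Because $u_t(p)=1$, the steps remaining in a phase sum to the pivot's net terminal displacement. This is a mean-zero two-point variable on $\{1-x,-1-x\}$ with probabilities $\tfrac{1+x}{2},\tfrac{1-x}{2}$. Its range is $2$, so the Hoeffding constant you dismissed is exactly the required $\tfrac12$.

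One cannot apply Hoeffding just once, because $\langle Bu_t,v\rangle$ changes within the phase. By \Cref{spielman:dirstepisproj} it equals the cumulative sum $\sum_{s<t}\sum_{r\in Q_{s,p}}\alpha_r\beta_r$. Instead, the per-phase bound \eqref{eq:harshaw2}, with compensator $\tfrac12\|P_pb_p\|^2\|P_pv\|^2$, is proved by backward induction over the decided positions. The induction uses the Loewner monotonicity of $f_x$ (\Cref{spielman:operatorfn}) on the PSD matrices $\sum_r(\alpha_r,\beta_r)^\top(\alpha_r,\beta_r)$. This is precisely the argument of Lemma~\ref{lem:single_pivot} with $\gamma=0$.

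Orthogonality of the phase subspaces then gives $\sum_p\|P_pb_p\|^2\|P_pv\|^2\le\sum_p\|P_pv\|^2\le\|v\|_2^2$. Conditioning successively on $\Delta_p$ chains the phases and yields the theorem.
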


\subsection{Technical Contributions}

We analyze the algorithm under a refined model that takes into account errors in calculating and sampling the step at each iteration, while still remaining general enough to apply across machine configurations and least squares solvers. Our main theorem parameterizes the key sources of error
at each step. Appendix~\ref{sec: missing proofs} then shows sufficient
working-precision conditions for a specified implementation,
under explicit assumptions on least-squares stability,
conditioning, and exponent range. In particular we consider the error of the least squares solve $\els$, the max coordinate error due to approximate arithmetic in computing the step and clamping $z_t$ back into the hypercube $\ec$,  and the perturbation in the distribution of $\dth$ which we denote $\epsilon_{pr}$. We discuss our model in more detail in Section \ref{error_sources}.\looseness=-1

\begin{figure}[t]
    \centering
\includegraphics[width=0.7\linewidth]{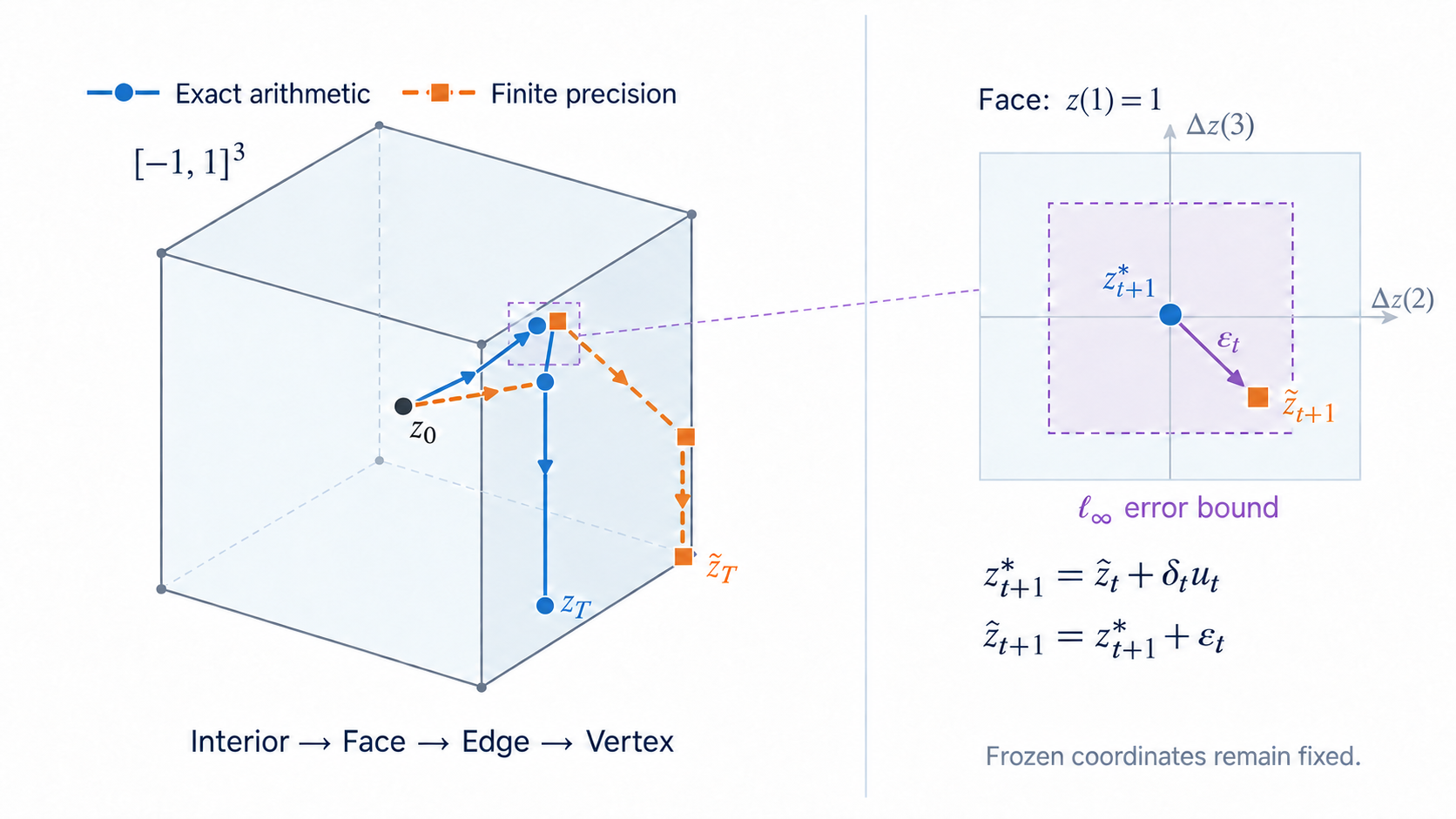}
    \caption{Finite-precision perturbations in the Gram-Schmidt walk. Left: Exact-arithmetic (solid blue) and finite-precision (dashed orange) walks start from the same point but can follow very different trajectories as coordinates successively freeze. Right: Each finite-precision update differs from the ideal update computed at its current state by a local error due to our tracked error sources.}
\end{figure}

Note that the introduction of errors due to rounding can affect the path that the Gram-Schmidt walk may take, so the resultant walk may deviate significantly from the original ideal walk. Therefore, simply bounding the deviation from the original walk can lead to very weak guarantees. So, we treat this walk as its own object when bounding its subgaussianity. We show the following bound:

\begin{theorem} \label{thm:spielmanadapted}
    Under the model and assumptions of Section \ref{sec finite precision gs walk}, the output vector $B(z - z_0)$ satisfies the following moment generating function (MGF) bound: for all $v\in\R^m$,
    \begin{equation*}
        \E [\exp(\inner{B(z - z_0), v})]\!\leq\!
        \exp\!\left(\!\frac{1}{2} \norm{v}^2\!+\! \left(\underbrace{n^{3/2}(2 \epsilon_{pr} \!+\! \ec)}_{\text{drift/sampling}}\!+\!\! \left(\underbrace{2 n \els}_{\text{least squares}}  \!\!+\underbrace{n(n+1) \ec}_{\text{arithmetic/clamping}}\right)\!\!\right)\!\!\norm{v}\!\right).
    \end{equation*}
\end{theorem}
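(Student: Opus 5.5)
The plan is to decompose the realized discrepancy $B(z-z_0)$ into an \emph{idealized martingale part} and an \emph{error part}, and then run the phase-based argument of \citet{harshaw2023balancingcovariatesrandomizedexperiments} on the first while bounding the second deterministically or in conditional expectation. Writing $z_{t+1}-z_t=\hat\delta_t\hat u_t+\epsilon_t$, I split $\hat\delta_t\hat u_t = \hat\delta_t u_t^\star + \hat\delta_t(\hat u_t-u_t^\star)$, where $u_t^\star$ is the exact least-squares direction for the \emph{current perturbed} active set $A_t$ and pivot $p_t$. There is no comparison to an ideal trajectory: every quantity is defined relative to the realized state $z_t$. Then
\[
\inner{B(z-z_0),v}=\sum_t \hat\delta_t\inner{Bu_t^\star,v}+\sum_t\hat\delta_t\inner{B(\hat u_t-u_t^\star),v}+\sum_t\inner{B\epsilon_t,v}.
\]
The last two sums are bounded pathwise by $\norm{v}$ times a deterministic budget. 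The first sum is handled as a supermartingale up to a drift term.

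\textbf{Error terms.} The number of iterations is at most $n$, since each step freezes at least one coordinate; I will take that from the freezing assumptions of Section~\ref{sec finite precision gs walk}. I also use $|\hat\delta_t|\le 2$. The least-squares model gives $\norm{B(\hat u_t-u_t^\star)}\le\els$, so the second sum is at most $2n\els\norm{v}$. For the clamping sum, $\norm{B\epsilon_t}_2\le\sum_i|\epsilon_t(i)|\norm{b_i}\le n\ec$. Summing over at most $n+1$ steps, allowing one final clamping correction, gives $n(n+1)\ec\norm{v}$. Both bounds hold on every path, so they factor out of the MGF as $\exp(\cdot\,\norm{v})$.

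\textbf{Martingale part with drift.} Let $M_t=\exp(\sum_{s<t}\hat\delta_s\inner{Bu_s^\star,v}-\frac12\sum_{s<t}\alpha_s\ldots)$ in the phase structure of \citet{harshaw2023balancingcovariatesrandomizedexperiments}. I will reuse their key per-step inequality: for a two-point mean-zero variable $\delta$ on $[-\delta^-,\delta^+]$, $\E e^{\delta c}\le\exp(\frac12 c^2\delta^+\delta^-)$-type bounds. I will also reuse their telescoping of $\sum\delta^+\delta^-\inner{Bu^\star,v}^2$ against the orthogonal decomposition of revealed subspaces over pivot phases, which yields the $\frac12\norm{v}^2$ term. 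This telescoping uses only the exactness of $u_t^\star$ for the current active set and a Doob-type identity that requires $z_{t+1}=z_t+\delta_tu_t^\star$ on the active coordinates. Because the actual step includes $\hat u_t-u_t^\star$ and $\epsilon_t$, I will redo the potential argument with the realized $z_t$. The error this introduces in the quadratic-variation telescoping is again of order $\ec$ per coordinate, and I expect it to feed the $n^{3/2}\ec$ term.

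The remaining issue is that $\hat\delta_t$ is not conditionally mean zero. Its branch probabilities are off by at most $\epsilon_{pr}$ per branch, so $|\E[\hat\delta_t\mid\mathcal F_t]|\le 2\epsilon_{pr}\cdot\max(\delta^\pm)$. Multiplied by $|\inner{Bu_t^\star,v}|$, this yields a conditional drift. Writing $e^{x}$ under the biased law as the unbiased MGF times $(1+O(\epsilon_{pr})|x|)$, I bound the per-step factor by $\exp(\frac12(\cdot)+2\epsilon_{pr}|\hat\delta_t\inner{Bu_t^\star,v}|)$.

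\textbf{Main obstacle.} The hardest step is summing these drift terms so that they are of order $n^{3/2}$ rather than $n^2$. I will use Cauchy--Schwarz: $\sum_t|\inner{Bu_t^\star,v}|\le\sqrt{n}\,(\sum_t\inner{Bu_t^\star,v}^2)^{1/2}$. Within each phase the vectors $Bu_t^\star$ are controlled by the projection of $v$ onto orthogonal revealed subspaces, with $\norm{Bu_t^\star}\le\norm{b_{p_t}}\le1$. This gives a per-step bound by $\norm{v}$ and a total of $\sqrt n\cdot\sqrt n\cdot\norm{v}$ per phase structure. To reach $n^{3/2}$ I expect to need to lose a further $\sqrt n$ crudely through the sum over phases. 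I would first try to bound it pathwise as $n^{3/2}\norm{v}$, absorbing the clamping-induced drift $\ec$ alongside $2\epsilon_{pr}$. If the pathwise Cauchy--Schwarz fails because of adaptivity, I will fall back on the crude bound $|\inner{Bu_t^\star,v}|\le\norm{v}$ per step. Iterating the conditional bounds via the tower property then combines the three factors into the stated inequality.
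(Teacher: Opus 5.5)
Your pathwise treatment of the least-squares and clamping terms is fine and agrees with the paper up to bookkeeping. The core step, however, does not go through as proposed: you need an MGF bound for $\sum_t\dth\inner{Bu_t^\star,v}$ that retains the sharp term $\frac12\norm{v}^2$. The per-step inequality $\E e^{c\delta}\le\exp(\frac12c^2\delta^+\delta^-)$ is false for asymmetric two-point laws; with $\delta^+=1.9$, $\delta^-=0.1$, $c=2$ the left side is about $3.0$ and the right side about $1.46$. Hoeffding's lemma does give a valid per-step factor $\exp(\frac12c_t^2+2\epsilon_{pr}\abs{c_t})$. But then you need $\sum_t\inner{Bu_t^\star,v}^2\le\norm{v}^2$ pathwise, and that fails. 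Within a phase, $\inner{Bu_t^\star,v}=\sum_{s=t_p-1}^{t-1}\sum_{r\in Q_{s,p}}\alpha_r\beta_r$ is a running partial sum, so the term $\alpha_p\beta_p$ recurs at every step of the phase. Identities such as $\E\sum_t\delta_t^2=1-x^2$ control quadratic variation only in expectation and cannot be moved inside the exponential, so per-step arguments of this kind lose the constant $\frac12$. \citet{harshaw2023balancingcovariatesrandomizedexperiments} obtain $\frac12$ differently: they run a backward induction over the positions decided in a pivot phase, using the Loewner-monotone function $f_x$ of Lemma~S3.5. That argument requires the \emph{total remaining pivot displacement} in the phase to be a two-point variable on $\{1-x,-1-x\}$. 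Your split leaves $\epsilon_t(p_t)$ in the error term, so $\sum_t\dth$ over a phase is not two-point and the argument is unavailable.

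The missing idea is to use the scalar coefficient $\dth+\epsilon_t(p_t)$ instead. Pathwise this costs $\abs{\epsilon_t(p_t)}\norm{Bu_t^\star}\le\ec$ per step, which is the extra $n\ec$ in $n(n+1)\ec$. This coefficient is exactly the pivot increment, so over the rest of the phase it telescopes to $z_{\mathrm{end}}(p)-x\in\{1-x,-1-x\}$. Bias is then handled not per step but by writing the conditional law of this displacement as $\Pr[1-x]=\frac{1+\tilde x}{2}$ with $\tilde x=x+\mu$. Here $\abs{\mu}\le n(2\epsilon_{pr}+\ec)=\gamma$ collects the conditional drift of up to $n$ remaining steps. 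This is where $\ec$ enters the drift term, through the folded $\epsilon_t(p_t)$, not through any quadratic-variation error. Since $\tilde x\in[-1,1]$, one factors out $\exp((\tilde x-x)\eta_R)$, applies monotonicity of $f_{\tilde x}$, factors back, and pays $\gamma\abs{\alpha_r\beta_r}$ per newly decided position. Cauchy--Schwarz across phases, $\sum_p\sum_r\abs{\alpha_r\beta_r}\le\sqrt n\norm{v}$, then yields $n^{3/2}(2\epsilon_{pr}+\ec)\norm{v}$. Two further points. Your multiplicative correction $(1+O(\epsilon_{pr})\abs{c})$ relative to the unbiased MGF is false when a branch probability is much smaller than $\epsilon_{pr}$, since the ratio then tends to $1+\epsilon_{pr}(e^{c\delta^+}-1)$. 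And ``losing a further $\sqrt n$ to reach $n^{3/2}$'' is backwards: a smaller drift term would only strengthen the theorem, and the $n^{3/2}$ arises because $\gamma$ bounds the drift of an entire remaining phase.
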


We provide values for these constants depending on machine configuration in Appendix~\ref{sec: missing proofs}. For example, if we assume a floating point machine of unit roundoff $u$, then as discussed in Lemma \ref{consants:prob_error} we can show that $\epsilon_{pr}$ scales according to the unit roundoff. Furthermore Lemma \ref{corollary: ls-backward-error} discusses how under special conditions, namely a backwards stable solver with full column rank $B$, we expect $\els$ to scale with condition number and unit roundoff. This is meant to give an idea for how $\els$ might scale in some special case rather than to hold in general. Finally, lemma~\ref{lem:update_precision} bounds arithmetic and freezing corrections
for a specified implementation. Combining these results,
Corollary~\ref{cor:sufficient_bits} gives sufficient significand and sampling
precision for a prescribed additive discrepancy penalty.

Finite-precision errors obstruct a direct application of the analysis in \cite{harshaw2023balancingcovariatesrandomizedexperiments}: they can change the freezing trajectory and destroy the conditional centering on which the phase argument relies. We overcome this by proving a phase MGF bound to explicitly account for accumulated drift along the realized perturbed trajectory, where the key step is to incorporate the pivot-coordinate update error into the scalar step so that its sum over a phase is the actual pivot displacement. This preserves the structure of the terminal displacement while allowing biased conditional probabilities, whose effect we control via a modified backward-induction argument. Combining these drift bounds across orthogonal phase subspaces produces the $n^{3/2}(2\epsilon_{\mathrm{pr}}+\epsilon_{\mathrm{iter}})$ contribution. The remaining terms follow from pathwise bounds on least-squares and coordinate-update error, including clamping and freezing corrections; hence, our bound keeps the original quadratic MGF term while adding an explicit linear penalty in $\norm{v}$ that vanishes with error to recover the exact-arithmetic guarantee.\looseness=-1

We note that this is not a subgaussianity guarantee in itself due to the linear term. Nevertheless, we show that an MGF bound of this form implies subgaussianity in the Orlicz norm formulation giving a subgaussianity result: Lemma \ref{lem:mgf_orlicz} in the appendix implies
\begin{equation*}
    \norm{B(z_{}-z_0)}_{\psi_2} \leq \sqrt{6} + \frac{1}{\sqrt{\log 2}}\left(n^{3/2}(2 \epsilon_{pr} + \ec) + (2 n \els  +n(n+1) \ec )\right)
\end{equation*}
and so by the equivalence of the forms of Subgaussianity up to constants we also get Gaussian-like exponential tails on $\abs{B (z - z_0), v}$ for every unit vector $v$.

\begin{theorem}[Lower bound for bounded update errors]
\label{thm:bounded-error-lower-bound}
There exists a universal constant $c>0$ such that, for every integer $n\ge 8$ and every $\varepsilon\ge 0$, there is a full-column-rank matrix $\mB=[b_1,\ldots,b_n]\in\mathbb{R}^{(n+1)\times n}$ with $\|b_i\|_2=1$ and an admissible error process for the Gram-Schmidt Walk, initialized at $z_0=0$, such that $z_{t+1}=z_t+\delta_tu_t+\epsilon_t$ with $\|\epsilon_t\|_\infty\le\varepsilon$, with exact step-selection probabilities, and whose terminal discrepancy $D=\mB z_T$ satisfies 
\[\|\mathbb{E} D\|_2 \ge c\min\{n^2\varepsilon,n\}, \qquad\|D\|_{\psi_2} \ge c\min\{n^2\varepsilon,n\}.\] Here $\|\cdot\|_{\psi_2}$ denotes the uncentered directional Orlicz norm. The errors are deterministic functions of the
iteration, active set, and current pivot, which preserve all
frozen coordinates, and the walk remains feasible and
terminates after at most $n$ iterations.
\end{theorem}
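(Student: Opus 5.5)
We give an explicit construction in which the error process pushes every coordinate in a common direction, so the terminal discrepancy acquires a deterministic bias proportional to the total accumulated error. The natural instance is a matrix whose columns are nearly parallel to a common unit vector $e_0\in\R^{n+1}$. Concretely, we take $b_i=\alpha e_0+\beta e_i$ with $\alpha^2+\beta^2=1$ and $\alpha$ a fixed constant (say $\alpha=1/\sqrt2$). This $\mB$ has full column rank and unit columns. For such a Gram matrix $\mB^\top\mB=\beta^2 I+\alpha^2\1\1^\top$, the least-squares direction with pivot $p$ on active set $A$ is explicit: $u(p)=1$ and $u(i)=-c_A$ for $i\in A\setminus\{p\}$, with $c_A=\alpha^2/(\beta^2+\alpha^2|A|)$. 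We plan to keep the walk's exact randomness (exact $\delta_t$ probabilities), so that the ideal part $\sum_t\delta_t u_t$ remains a martingale with mean zero. All bias then comes from $\sum_t\epsilon_t$, and
\[
\E D=\mB\,\E\Big[\sum_t\epsilon_t\Big]+\mB\,\E\Big[\sum_t\delta_tu_t\Big].
\]
The second term vanishes by optional stopping, since there are at most $n$ steps and everything is bounded. The only subtlety is that $\epsilon_t$ changes the state and hence later $\delta_t$. Mean zero is still preserved because the probabilities are exact given the realized state.

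The error design is the heart of the argument. At iteration $t$ we add $\epsilon_t=\varepsilon'\,\1_{A_{t}'}$, where $\varepsilon'\le\varepsilon$ and $A_t'$ is the set of coordinates still active after the step that stay strictly inside $(-1,1)$ after the shift. This shift is a deterministic function of the iteration, active set, and pivot. If a shift would cross the boundary, we clamp, i.e.\ take $\epsilon_t(i)$ smaller or freeze $i$ at $+1$. Clamping only helps the bias, since it pushes toward $+1$, and it preserves frozen coordinates and feasibility. The key accounting step is that each coordinate $i$ receives roughly $\varepsilon$ per iteration for as long as it is alive. Because each exact step freezes at least one coordinate, the walk ends in at most $n$ iterations. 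We would, however, arrange the pivot rule and the instance so that a constant fraction of the coordinates survive $\Omega(n)$ iterations. The pivot sticks until frozen, and the boundary freezes are spread out, since the small $c_A$ makes non-pivot coordinates move only by $O(1/|A|)$ per step. Summing gives
\[
\E\Big[\sum_t\1^\top\epsilon_t\Big]\gtrsim \varepsilon\sum_t|A_t|=\Omega(n^2\varepsilon),
\]
capped at $O(n)$ because each coordinate's total displacement is at most $2$.

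Next we convert this coordinate drift into discrepancy. Since $\langle \mB w,e_0\rangle=\alpha\,\1^\top w$, taking $v=e_0$ gives $\|\E D\|_2\ge\alpha\,\E[\1^\top\sum_t\epsilon_t]\ge c\min\{n^2\varepsilon,n\}$. This requires the drift to remain a positive sum and not be cancelled by clamping. Clamping corrections only decrease $z$ toward $+1$ from below when $z$ is near $+1$, so the contributions stay nonnegative. For the Orlicz norm, Jensen's inequality gives $\|\langle D,e_0\rangle\|_{\psi_2}\gtrsim|\E\langle D,e_0\rangle|$ for the uncentered norm. Indeed, $\E\exp(X^2/K^2)\le2$ forces $\E X^2\le K^2$, hence $|\E X|\le K$. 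This yields the second bound.

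The main obstacle is the lifetime claim $\sum_t|A_t|=\Omega(n^2)$ with high enough probability (or in expectation). The exact walk could freeze many coordinates early, and our own drift might speed up freezing. To handle this, we would first show that each step freezes exactly one coordinate with constant probability. Non-pivot coordinates move by $O(c_A)=O(1/|A|)$ while the pivot moves by $\Theta(1)$, so the boundary is typically hit first by the pivot. Then at least $n/2$ iterations occur with $|A_t|\ge n/2$. If this turns out to be delicate, a fallback is to let the error process itself freeze one chosen coordinate per iteration. This is permitted as a freezing correction, but only when that coordinate is within $\varepsilon$ of the boundary, so it works in the regime $n^2\varepsilon\lesssim n$. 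In either case the process is forced to run for $\Omega(n)$ rounds, which guarantees the quadratic accumulation. The threshold $n\ge8$ is only for absorbing constants.
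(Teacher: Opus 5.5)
Your architecture matches the paper's: the same instance $b_i=(e_0+e_i)/\sqrt2$, exact branch probabilities so that $\mathbb{E}\langle D,e_0\rangle=\frac{1}{\sqrt2}\,\mathbb{E}\sum_t\mathbf{1}^\top\epsilon_t$, and Jensen to pass from the mean to the uncentered $\psi_2$ norm. The gap is the step you yourself flag as the main obstacle, and the route you sketch for it does not close it. You need $\Omega(n)$ iterations with $|A_t|=\Omega(n)$. That requires the first $cn$ steps to freeze only $O(1)$ coordinates each. A constant per-step probability of freezing only the pivot gives no such control, since it says nothing about how many coordinates freeze on the other steps. In particular, ``at least $n/2$ iterations with $|A_t|\ge n/2$'' would need every one of the first $n/2$ steps to freeze exactly one coordinate. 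With an uncapped drift $\varepsilon'$ and clamping, your own errors can also freeze coordinates early, which you note but do not handle. The fallback is too vague to evaluate.

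The missing idea is that the lifetime claim becomes deterministic if you cap and truncate the error. Inject $a=\min\{\varepsilon,1/(8n)\}$ on $A_t\setminus\{p_t\}$, and only for $t<L=\lfloor n/4\rfloor$. Any current non-pivot $j$ has been an active non-pivot in every earlier iteration, with $|\delta_s|\le 2$ and $|u_s(j)|=1/(n-s)$. (Your $c_A$ is off by one: it should be $\alpha^2/(\beta^2+\alpha^2(|A|-1))$, which equals $1/|A|$ here.) Hence, for either candidate step, $|z(j)|\le\sum_{s\le t}(2/(n-s)+a)\le 2L/(n-L+1)+La\le 2/3+1/32<1$. By induction, the pivot alone determines both endpoints and is the only coordinate that freezes, so $|A_t|=n-t$ deterministically. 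This gives $\mathbb{E}\langle D,e_0\rangle=\frac{a}{\sqrt2}\sum_{t<L}(n-t-1)\ge\frac{3}{32\sqrt2}\,an^2$, with $an^2\ge\frac18\min\{n^2\varepsilon,n\}$. The cap $1/(8n)$ handles the saturated regime at the same time, so no separate large-$\varepsilon$ argument is needed.

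There is also a problem with the statement's extra clause. In the paper's construction no coordinate ever needs clamping, so the errors really are functions of the iteration, active set and pivot only. Your $A'_t$ and clamping rule depend on $z_t$ and on the sampled branch. As written, your error process therefore does not satisfy that clause, despite your claim that it does.
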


The construction proves that the $n^2 \ec$ dependence is unavoidable, up to universal constants, for worst-case bias under admissible bounded coordinate errors, even when the least-squares directions and branch probabilities are exact. The same construction lower-bounds the uncentered directional Orlicz norm. This saturation at order $n$ is consistent with $\|Bz_T\|_2\le\sum_{i=1}^n\|b_i\|_2\le n$ when $z_0=0$.

We also numerically\footnote{We open-source the code for our experiments in the following GitHub repository: \url{https://github.com/emiletimothy/FinitePrecisionGramSchmidtWalk}.} study how arithmetic precision and problem size
affect discrepancy, where we illustrate the predicted quadratic accumulation
of bounded update errors in Figure \ref{fig:collapse}.

\section{Finite-Precision Gram-Schmidt Walk}
\label{sec finite precision gs walk}
Because of the numeric issues outlined above, the existing proofs bounding the discrepancy of the Gram-Schmidt walk \cite{bansal2017gramschmidtwalkcurebanaszczyk,harshaw2023balancingcovariatesrandomizedexperiments} no longer hold.
The proof in \cite{bansal2017gramschmidtwalkcurebanaszczyk} uses Martingales, but these properties are lost due to the errors.
The more recent proof from \cite{harshaw2023balancingcovariatesrandomizedexperiments} builds on the fact that the exact solution to the L2-minimization problem can be expressed via results from an exact Gram Schmidt orthogonalization procedure. However, this too no longer holds in the presence of errors. Now, we explain the main technical ideas and tools needed to robustify the proof from \cite{harshaw2023balancingcovariatesrandomizedexperiments} against numerical issues.\looseness=-1

\paragraph{Error Sources in Finite-Precision Model} \label{error_sources}

At the realized state $z_t$, let $u_t$ denote an exact least-squares direction and let $\widehat u_t$ denote the computed direction. We state our bounds relative to the realized input. In our framework, we enforce $\widehat u_t(p_t)=1$ and $\widehat u_t(i)=0$ for $i\notin A_t$, and assume \begin{equation} \|B(\widehat u_t-u_t)\|_2 \leq \els. \end{equation}
All errors from the final arithmetic update, clamping, and explicit freezing are collected in $\clamp$, so that $z_{t+1} = z_t+\dth\,\widehat u_t+\epsilon_t$. The update preserves feasibility and previously frozen coordinates, freezes at least one additional coordinate, and retains the current pivot until it freezes. Let $\widehat\delta_t^+,\widehat\delta_t^->0$ denote the
computed endpoint magnitudes, so that
$\dth\in\{\widehat\delta_t^+,-\widehat\delta_t^-\}$. For convenience, we cap these magnitudes by downward-rounded
pivot distances, ensuring $\widehat\delta_t^+\leq1-z_t(p_t)$ and $\widehat\delta_t^-\leq1+z_t(p_t)$. Thus,
$\widehat\delta_t^++\widehat\delta_t^-\leq2$ and
$|\dth|\leq2$.
Conditioning on the history and all computations preceding the random branch choice, let $p^{\mathrm{comp}} \coloneqq \frac{\widehat\delta_t^-} {\widehat\delta_t^++\widehat\delta_t^-}$ be the probability that centers the step at the computed endpoints. We assume that the actual positive-branch probability $\widehat p$ satisfies $|\widehat p-p^{\mathrm{comp}}|\leq\varepsilon_{pr}$, so 
\begin{equation} 
\left|\mathbb E[\dth\mid\mathcal F_t]\right| 
= 
(\widehat\delta_t^++\widehat\delta_t^-) |\widehat p-p^{\mathrm{comp}}| \leq2\varepsilon_{pr}. 
\end{equation}

In this convention, the discrepancy increment decomposes as 
$B(z_{t+1}-z_t) = \dth Bu_t +\dth B(\widehat u_t-u_t) +B\epsilon_t$,
where the least-squares contribution $\dth B(\uth - u_t)$ has norm at most $2\els$.

\paragraph{Analysis framework}

We sketch the main ideas of \citet{harshaw2023balancingcovariatesrandomizedexperiments}
and explain which pieces break due to numerical issues.
We then later explain how to fix them.

Let $\sigma:[n]\to[n]$ be the reverse order in which the Gram-Schmidt Walk decides coordinates $i$ (that is, when $i$ is picked as a pivot or $z(i)$ becomes $\pm1$).
Now assume we run Gram Schmidt orthogonalization on the columns of $B$ in order $\sigma$ and let $w_{\sigma(1)},...,w_{\sigma(n)} \in \R^m$ be the orthonormal vectors produced.
We allow for $B$ to not be full column rank and so the Gram-Schmidt orthogonalization procedure may produce a zero vector in the sequence; we refer to sets as orthonormal throughout our exposition with the understanding that we refer to the nonzero vectors being discussed.
The vectors $w_{\sigma(i)}$ are not computed and used only for analysis purposes.

We consider the progression of the walk in pivot phases, where $t_p$ is the iteration at which $p\in[n]$ is first selected as pivot.
Within a phase, $b_p$ refers to the current pivot column of $B$.
When analyzing iteration $s \ge t_p$ of the phase for a pivot $p$, let $Q_{s,p}$ be the set of positions $r$ decided in iteration $s$ during the pivot phase of $p$. So $Q_{s,p}=\emptyset$ for $s \ge$ the end of pivot $p$'s phase.
While pivot $p$ is technically decided at the start of iteration $t_p$, we charge it to iteration $t_p-1$, i.e., $Q_{t_p - 1,p}$ consists of just the pivot $p$.
The proof of \cite{harshaw2023balancingcovariatesrandomizedexperiments} heavily relies on the following decomposition:

\begin{lemma}[Lemma $S3.2$ in \cite{harshaw2023balancingcovariatesrandomizedexperiments}]
\label{spielman:dirstepisproj}
    At each iteration $t$ we have for pivot $p=p_t$
    \begin{equation*}
        B u_t = \sum_{s=t_p-1}^{t-1} \sum_{r \in Q_{s}} \inner{w_{\sigma(r)}, b_p} w_{\sigma(r)}.
    \end{equation*}
\end{lemma}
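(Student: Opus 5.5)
We prove the identity for the exact least-squares direction by characterizing $Bu_t$ as an orthogonal projection. At iteration $t$ with pivot $p=p_t$, the direction $u_t$ minimizes $\|Bu\|_2^2$ over $u$ with $u(p)=1$ and support in $A_t$. Writing $Bu = b_p + \sum_{i\in A_t\setminus\{p\}} u(i) b_i$, the minimizer makes $Bu_t$ equal to the residual of $b_p$ after orthogonal projection onto $V_t := \mathrm{span}\{b_i : i\in A_t\setminus\{p\}\}$:
\begin{equation*}
Bu_t = (\mI - \Pi_{V_t}) b_p .
\end{equation*}
This is the standard normal-equations characterization. It holds even without full column rank, since only the image $Bu_t$ matters.

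The key combinatorial step is relating $V_t$ to the Gram-Schmidt ordering $\sigma$. The coordinates still undecided at the start of iteration $t$ are exactly $A_t\setminus\{p\}$. The pivot $p$ was decided at time $t_p-1$ by the charging convention. Every other coordinate decided before iteration $t$ is either frozen, or was an earlier pivot and hence also frozen. Since $\sigma$ is the reverse decision order, the coordinates in $A_t\setminus\{p\}$ appear first in the $\sigma$-order. They are followed by the coordinates decided during $p$'s phase in reverse order, namely the sets $Q_{t-1},\dots,Q_{t_p}$ and then $Q_{t_p-1}=\{p\}$. After these come the coordinates decided before phase $p$.

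Running Gram-Schmidt in this order, the prefix consisting of the still-undecided coordinates spans exactly $V_t$, with orthonormal basis $\{w_{\sigma(r)}\}$ for $r$ in that prefix. Appending the block $\bigcup_{s=t_p-1}^{t-1} Q_s$, which contains $p$ itself, gives a larger prefix whose span contains $b_p$. Hence
\begin{equation*}
b_p \in \mathrm{span}\Bigl(V_t \cup \{w_{\sigma(r)} : r\in Q_s,\ t_p-1\le s\le t-1\}\Bigr),
\end{equation*}
and by orthonormality, $(\mI-\Pi_{V_t})b_p = \sum_{s=t_p-1}^{t-1}\sum_{r\in Q_s}\inner{w_{\sigma(r)},b_p}\,w_{\sigma(r)}$. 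Zero vectors produced by rank deficiency contribute nothing, consistent with the convention above.

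The main obstacle is the bookkeeping that the set of coordinates preceding the block $\bigcup_{s}Q_s$ in $\sigma$ is exactly $A_t\setminus\{p\}$. Concretely, one must check three things: no coordinate decided before phase $p$ lies in $A_t$; every coordinate decided within the phase before iteration $t$ is in some $Q_s$ with $s\le t-1$; and $p$ is placed last in the block via the $t_p-1$ charging. Checking these amounts to unwinding the definitions of $\sigma$, $Q_{s,p}$ and the freezing rule, using the fact that the pivot is retained until it freezes.
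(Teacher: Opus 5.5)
Your proof is correct and is essentially the standard argument behind Lemma S3.2 of Harshaw et al., which the paper cites without reproducing. The argument has two steps: $Bu_t$ is the residual of $b_p$ after projection onto the span of the active non-pivot columns, and these columns form a prefix of the reverse decision order $\sigma$, so the residual is the component of $b_p$ along the Gram--Schmidt vectors of the coordinates decided in $p$'s phase through iteration $t-1$. The one bookkeeping point to state explicitly is a tie: $p$ is charged to iteration $t_p-1$, the same iteration in which the previous phase's coordinates froze. $p$ must still precede those coordinates in $\sigma$, which holds because the pivot is chosen after that freezing.
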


This is only true for the exact solution $u_t$ and does not hold for the approximation $\uth$ that would be produced by numeric implementation of the algorithm.

Finally, consider the final vector $B (z_T - z_0) = B(\sum_t \delta_t u_t)$ produced in exact arithmetic. Using \Cref {spielman:dirstepisproj}, \citet{harshaw2023balancingcovariatesrandomizedexperiments} proves that for any $v\in\R^m$
\begin{align}
    \E\left[  \exp \left( \sum_{t=1}^n \delta_t \inner{B u_t, v} \right) \right] \leq \exp \left(\frac{1}{2} \norm{v}^2\right) \label{eq:harshaw}
\end{align}
This is done by splitting the run of the algorithm into phases based on the pivots. For a pivot $p$, let $S_p \subset [n]$ be the iterations where $p$ was pivot, and let $\Delta_p$ be all the randomness up to the point that $p$ is chosen as pivot.
Then \cite{harshaw2023balancingcovariatesrandomizedexperiments}
proves 
\begin{equation}
\E\left[\exp \left( \sum_{t\in S_p} \delta_t \inner{B u_t, v} -\frac{1}{2} \|P_p b_p\|^2 \|P_p v\|^2\right)  \bigg| \Delta_p \right] \le 1 \label{eq:harshaw2}
\end{equation} 
which can then be assembled to prove \eqref{eq:harshaw}, with the $\|P_p b_p\|^2 \|P_p v\|^2$ are orthogonal projection terms accumulating to at most $\norm{v}^2$.
However, for $\uth$ and $\dth$ the conditional expectation is no longer bounded by 1 because $\E[\dth]\neq 0$ and \Cref{spielman:dirstepisproj} does not hold for $\uth$.

\textbf{Bounding the error.} \label{paragraph:error_decomp}
Consider the final vector $B (z - z_0) = B (\sum_t \dth \uth + \epsilon_t)$. 

We rearrange the updates as follows:
\begin{align}
Bz - Bz_0 &= B \left( \sum_{t=1}^T \dth u_t + \dth(\uth - u_t) + \clamp \right)  \notag\\
&= B \sum_{t=1}^T \left(\dth + \epsilon_t(p_t)\right) u_t + B \left(\sum_{t=1}^T 
\dth (\uth - u_t)+ \epsilon_t - \epsilon_t(p_t) u_t
\right) \label{eq:overview:split}
\end{align}
Here the latter term can be bounded by using the definition of our errors, giving the bound
$$
\left\| \sum_{t=1}^T B ( 
\dth (\uth - u_t)+ \epsilon_t - \epsilon_t(p_t) u_t) \right\|_2
\le
2 n \els  +n(n+1) \ec.$$
The calculation is given in \Cref{sec: pf of theorem 2.3}.

The first term in \eqref{eq:overview:split} is similar to \eqref{eq:harshaw} in that it uses the exact $u_t$, however, the probability $\dth$ is still inexact.
As a result, we prove the following weaker bound
with a variable $\gamma$ upper bounding the drift.

\newcommand{\EE}[2]{\mathbb{E} \left[\exp \left( #1 \right) \,\middle|\, #2 \right]}
\newcommand{\EEr}[1]{\mathbb{E} \left[\exp \left( #1 \right) \,\middle|\, \Delta_R \right]}
\newcommand{\EEe}[1]{\mathbb{E} \left[\exp \left( #1 \right) \right]}
\newcommand{\greet}[1]{Hello, #1!}

\begin{lemma}\label{spielman:withinphasebound}
For $\gamma=n(2 \epsilon_{pr} + \ec)$, the non-noise portion of the updates satisfy the MGF bound 
\begin{equation*}
    \EEe{\sum_{t=1}^n (\dth + \epsilon_t (p_t)) \inner{B u_t, v}} \leq \exp (\frac{1}{2} \norm{v}^2 + \sqrt{n} \gamma \norm{v} )
\end{equation*}
\end{lemma}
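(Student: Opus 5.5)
We follow the phase decomposition of \citet{harshaw2023balancingcovariatesrandomizedexperiments}, but with the modified scalar step $\eta_t := \dth + \epsilon_t(p_t)$ in place of $\delta_t$. Fix $v$. For each pivot $p$, let $S_p$ be its phase and $\Delta_p$ the history up to the choice of $p$. By \Cref{spielman:dirstepisproj}, $\inner{Bu_t, v} = \sum_{s<t,\, r\in Q_{s,p}} \inner{w_{\sigma(r)}, b_p}\inner{w_{\sigma(r)}, v}$ for $t \in S_p$. The exact $u_t$ is computed at the realized state, so this identity holds along the realized perturbed trajectory, with $\sigma$ the realized decision order.

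\textbf{Phase lemma.} The goal is
\begin{equation*}
\mathbb{E}\left[\exp\left(\sum_{t\in S_p}\eta_t\inner{Bu_t,v} - \tfrac12\|P_pb_p\|^2\|P_pv\|^2 - \gamma\|P_p v\|\right)\,\middle|\,\Delta_p\right]\le 1 .
\end{equation*}
Here $P_p$ is the projection onto $\mathrm{span}\{w_{\sigma(r)} : r \text{ decided in } p\text{'s phase}\}$. As in the original argument, we run backward induction over the iterations of the phase. Since the pivot coordinate of $u_t$ is $1$ and the pivot stays active until it freezes, the sum $\sum_{t\in S_p}\eta_t$ telescopes to the actual pivot displacement $z_{\text{end}}(p) - z_{t_p}(p) \in [-2,2]$. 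This is exactly why $\epsilon_t(p_t)$ is absorbed into the step.

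At each iteration, the exact argument uses a two-point mean-zero bound of Hoeffding type, $\mathbb{E}[e^{\lambda\delta}] \le \exp(\text{quadratic})$, applied to the remaining-displacement process. We replace it with the perturbed version. The conditional mean satisfies $|\mathbb{E}[\eta_t \mid \mathcal F_t]| \le 2\epsilon_{pr} + \ec$, and the support endpoints are shifted by at most $\ec$. This gives $\mathbb{E}[e^{\lambda\eta_t}\mid\mathcal F_t] \le \exp(\lambda(2\epsilon_{pr}+\ec) + \text{exact quadratic term})$. The coefficients $\lambda$ that appear are the partial sums $\sum_{r}\inner{w_{\sigma(r)},b_p}\inner{w_{\sigma(r)},v}$. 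By Cauchy--Schwarz and $\|b_p\|\le 1$, each satisfies $|\lambda| \le \|P_p v\|$. A phase has at most $n$ iterations, so the linear drift is at most $n(2\epsilon_{pr}+\ec)\|P_pv\| = \gamma\|P_pv\|$.

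\textbf{Assembling the phases.} We chain the phase bounds by the tower property over pivots, in reverse order of phases. This yields
\begin{equation*}
\mathbb{E}\exp\left(\sum_t \eta_t\inner{Bu_t,v}\right) \le \sup \exp\left(\sum_p \tfrac12\|P_pb_p\|^2\|P_pv\|^2 + \gamma\sum_p\|P_pv\|\right).
\end{equation*}
The $P_p$ project onto mutually orthogonal subspaces, so $\sum_p \|P_pv\|^2 \le \|v\|^2$, and with $\|P_pb_p\|\le1$ the quadratic term is at most $\tfrac12\|v\|^2$. There are at most $n$ phases, so Cauchy--Schwarz gives $\sum_p\|P_pv\| \le \sqrt n\,\|v\|$, which produces the $\sqrt n\,\gamma\|v\|$ term.

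\textbf{Main obstacle.} The hardest step is the within-phase backward induction. The exact proof's one-step inequality depends on $\delta_t$ being centered on a segment whose endpoints match the remaining pivot range. With biased probabilities and clamping, the step $\eta_t$ is neither centered nor exactly supported on those endpoints.

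The first thing to try is to write $\eta_t = \delta^{c}_t + \beta_t$, where $\delta^c_t$ is the centered two-point variable on the computed endpoints and $\beta_t$ is a bias with $|\mathbb{E}\beta_t\mid\mathcal F_t| \le 2\epsilon_{pr}+\ec$. We then apply the exact convexity/Hoeffding step to $\delta^c_t$ and bound $\mathbb{E}[e^{\lambda \beta_t}]$ to first order. If second-order terms in $\beta_t$ cause trouble, fall back to absorbing them into the quadratic term using the actual realized displacement range, which stays within the pivot's $[-1,1]$ interval.
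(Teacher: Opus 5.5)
Your phase decomposition, the weaker per-phase penalty $\gamma\norm{P_pv}$ (which pointwise dominates the paper's $\gamma\sum_r|\alpha_r\beta_r|$), and the assembly across phases are fine. The gap is the within-phase inequality, which is the entire content of the lemma.

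You propose to factor the phase MGF iteration by iteration, using a per-step bound $\mathbb{E}[e^{\lambda\eta_t}\mid\mathcal F_t]\le\exp(\lambda(2\epsilon_{pr}+\ec)+\text{quadratic})$, and to inherit an ``exact quadratic term.'' But the exact argument of Harshaw et al.\ is not a per-step Hoeffding argument, so there is no such term to inherit. Per-step bounds cannot produce $\frac12\norm{P_pb_p}^2\norm{P_pv}^2$: they sum to roughly $\sum_{t\in S_p}\lambda_t^2(\widehat\delta_t^++\widehat\delta_t^-)^2/8$. This sum is not pathwise bounded by the product term. For example, take a phase whose coefficient is essentially constant, $|\lambda_t|\approx\norm{P_pb_p}\norm{P_pv}$, over several steps with ranges near $2$. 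Your fallback of ``absorbing second-order terms using the realized displacement range'' is not an argument.

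The missing idea, which the paper uses, is a backward induction over $R$, i.e.\ over which decided positions enter the coefficient, rather than over per-step MGFs. Conditional on $\Delta_R$, the future contribution is $\eta_R\sum_{t>R}(\widehat\delta_t+\epsilon_t(p))$, where the coefficient $\eta_R=\sum_{r\text{ decided through }R}\alpha_r\beta_r$ is fixed.

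Because $\epsilon_t(p)$ is absorbed and the pivot freezes exactly at $\pm1$, the remaining displacement is exactly two-point on $\{1-x,-1-x\}$ with $x=z_{R+1}(p)$. You noticed the telescoping but concluded only that it lies in $[-2,2]$. Your picture of ``support endpoints shifted by $\ec$'' is therefore wrong: only the probability is perturbed. Write it as $(1+\tilde x)/2$. Then $\tilde x-x$ is the conditional mean of the \emph{whole} remaining displacement, and the tower property bounds it by $(n-R)(2\epsilon_{pr}+\ec)\le\gamma$.

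The conditional expectation then equals $\exp((\tilde x-x)\eta_R)$ times $f_{\tilde x}$ evaluated at the $2\times2$ PSD matrix $\bigl[\begin{smallmatrix} a_R & \eta_R\\ \eta_R & b_R\end{smallmatrix}\bigr]$. Loewner monotonicity (\Cref{spielman:operatorfn}) lets you drop the rank-one contribution of the positions decided at iteration $R$. Converting $\tilde x$ back to $x$ leaves a factor $\exp((\tilde x-x)(\eta_R-\eta_{R-1}))\le\exp(\gamma\sum_{r\in Q_{R,p}}|\alpha_r\beta_r|)$, which the penalty pays.

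So the factor $n$ in $\gamma$ comes from the accumulated drift of the remaining phase, charged once per decided position. It does not come from per-step drift multiplied by at most $n$ steps. Without the two-point/operator-monotone mechanism, your route does not recover the sharp quadratic term that the lemma claims.
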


The main difficulty in adapting the proof of \eqref{eq:harshaw} by \cite{harshaw2023balancingcovariatesrandomizedexperiments} to \Cref{spielman:withinphasebound},
is the reliance on the distribution of $\sum_t \delta_t$ within a pivot phase being a two point distribution, which no longer holds due to noise along the pivot unit. 
The high-level idea of our proof is that, while \eqref{eq:harshaw2} does not hold, it does hold after adding some penalty term $E_t$ such that
$$
\E\left[\exp\left(\sum_{t\in S_p} \delta_t \inner{B u_t, v} -\frac{1}{2} \|P_p b_p\|^2 \|P_p v\|^2 - E_t\right) \bigg | \Delta_p\right] \le 1
$$
The main difficulty is to find a large enough $E_t$ for which the inequality holds, but not too large to get a bound weaker than \Cref{spielman:withinphasebound}.
The full proof is long and can be found in \Cref{sec:singlepivot,appendix:constants}, but it turns out that
$E_t = \gamma \sum_{s\ge t_p-1} \sum_{r\in Q_{s,p}} |\alpha_r\beta_r|$
for $\gamma = n(2 \epsilon_{pr} + \ec)$,
$\alpha_r=\inner{w_{\sigma(r)}, b_p}$ and $\beta_r = \inner{w_{\sigma(r)}, v}$ is large enough.
We further have
\begin{equation*}
    \sum_{p} \sum_{s\ge t_p-1}^{n} \sum_{r \in Q_{s,p}} \abs{\alpha_r \beta_r}
    \leq \Bigl(\sum_{p=1}^n \norm{P_p b_p}^2\Bigr)^{1/2} \Bigl(\sum_{p=1}^n \norm{P_p v}^2\Bigr)^{1/2}
    \leq \sqrt{n}\, \norm{v},
\end{equation*}
where $P_p$ is the orthogonal projection matrix onto the $w_{\sigma(r)}$ decided in pivot phase $p$.
This yields the additional term in \Cref{spielman:withinphasebound}.
Assembling these two bounds on \eqref{eq:overview:split} then yields \Cref{thm:spielmanadapted}.

\section{Numerical Experiments}

\textbf{Compute.} Our walks run in parallel; we use a process pool with per-worker
\texttt{SeedSequence} streams and single-threaded BLAS. The per-operation
rounding makes the low-precision path $\sim\!20\times$ slower than float64's. Our main figures use $1000$ walks per cell on a 6-core laptop (1-3 CPU-hours).

\textbf{Matrices.} We use four matrix ensembles.
\begin{enumerate}[label=(\roman*),leftmargin=1.8em,itemsep=2pt,topsep=2pt]
  \item \emph{Clustered.} $B = u\mathbf{1}^{\top} + m^{-1/2}G$, with $u$ a
  random unit vector and $G$ Gaussian, column-normalized. The columns
  concentrate near a shared direction, which makes $B_{A \setminus p}$
  increasingly ill-conditioned as the active set shrinks through $k \approx m$,
  \item \emph{Higgs.} The two low-level angular-momentum features of the Higgs
  two-sample benchmark used in low-rank thinning \citep{carrell2026lowrankthinning}, giving a $2 \times n$
  matrix,
  \item \emph{Identity.} $B = I_{n}$, a tight control in which $d^{\top}Bz$ is
  a Rademacher sum with $\sigma = 1$.
  \item \emph{Adversarial lower bound.} Columns
  $v_{i} = \frac{e_{0} + e_{i}}{\sqrt{2}}\in\R^{n+1}$, where a
  bounded per-step error $a$ applied to non-pivot coordinates in the
  first $\frac{n}{4}$ iterations forces an $e_{0}$-discrepancy of order $\Omega(n^{2} \epsilon)$.\looseness=-1
\end{enumerate}

\textbf{Arithmetic model.} Our main sweeps emulate reduced precision in float64: every elementary operation rounds its significand to $b$ retained fraction bits (round-to-nearest), where $b$ is the number of stored fraction bits. Here, we use the format labels fp8-E5M2, bf16, and fp16 to denote the matching fraction width ($b=2,7,10$). Similarly, reductions (inner products, norms, matvecs) round each product but accumulate in float64, rounding the result once. \begin{wrapfigure}[18]{r}{0.40\textwidth}
 \centering
 \vspace{-15pt}
\includegraphics[width=\linewidth]{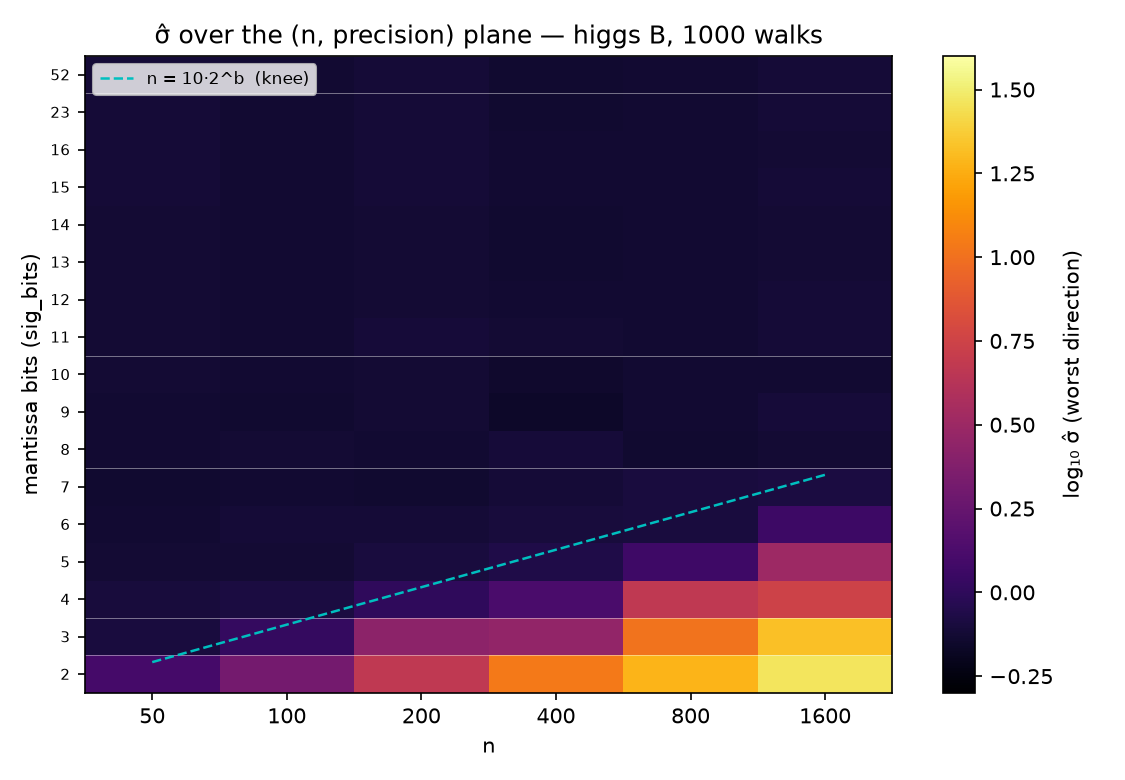}
  \caption{{\textbf{Estimated subgaussianity over the $(n, \mathrm{bits})$ plane.}} Each
  cell is $\log_{10}\hat{\sigma}$, where the direction is selected via independent walks on the Higgs matrix.
  The float64 reference plateau goes to 8 mantissa bits at small $n$ and
  ends along the curve $n=10 \cdot 2^{b}$. Above it,
  $\hat{\sigma}$ grows along lines of constant $n \cdot 2^{-b}$.}
  \label{fig:phase}
  \label{figure: heatmap}
\end{wrapfigure} The least-squares solve reads the $b$-bit rounding $\hat B = \mathrm{fl}(B_0)$ of the input, corresponding to $L=b+1$ significand bits and unit roundoff $u=2^{-(b+1)}$ bits, while the experiments of Appendix \ref{subsection: dtype comparisons} use the actual dtypes.
 
 \textbf{Metrics.} For each main-sweep configuration, we run $1000$ independent walks. \emph{Discrepancy.} We report the sample mean of $\norm{Bz}_1/m$ and the empirical CCDF of $\norm{Bz}_\infty$ on squared-threshold, log-survival axes. \emph{Directional concentration.} Across our sweeps, we use half of the
walks to select a direction maximizing a centered moment
diagnostic based on orders $2,4,6$. Using the remaining
walks, we evaluate this diagnostic, the empirical
$\psi_2$ norm, and centered exceedance frequencies in
the selected direction. Error bars describe five-batch variability as detailed in Appendix ~\ref{sec: experimental details}. \emph{Solver error.} We follow a float64 reference trajectory and recompute directions in low precision at identical states, recording $\Delta u_t=u_{\mathrm{lp}}-u_{\mathrm{fp64}}$ by the number $k$ of free nonpivot indices. We summarize these coefficient errors using their median, interquartile range, and mean.\looseness=-1

\begin{figure}[t]
  \centering
  \includegraphics[width=\linewidth]{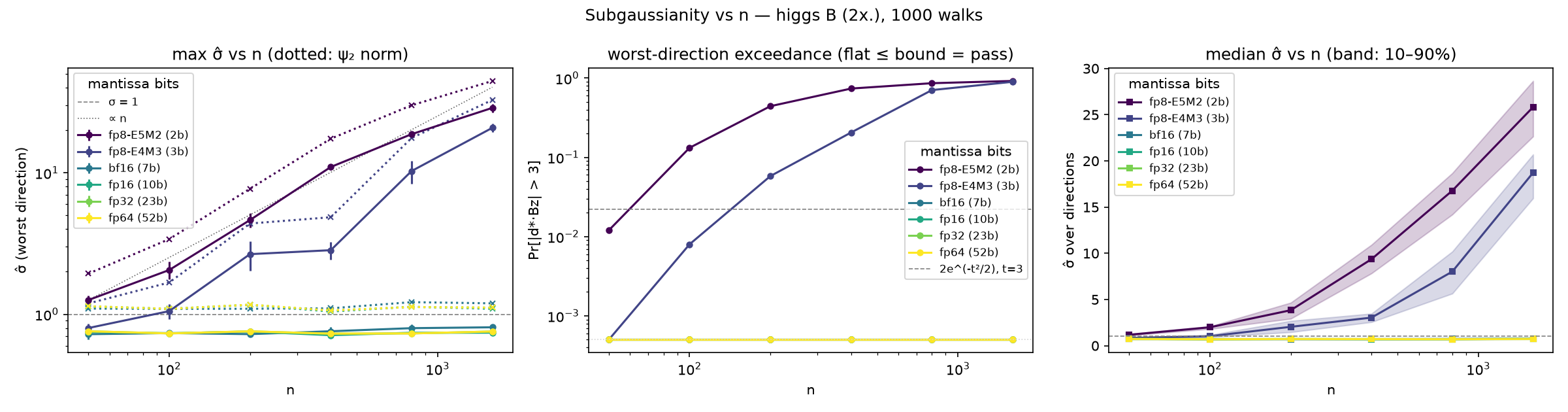}
  \caption{\textbf{Subgaussianity vs.\ $n$ across fp formats on the Higgs
  matrix.} Left: the centered moment diagnostic  and
empirical $\psi_2$ norm, evaluated in a direction
selected using an independent half of the walks.
The lowest precisions exhibit substantial growth with
$n$, while higher-precision curves show much weaker
dependence on $n$.
Middle: evaluation-set frequencies of
$|d_\star^\top B_0z-\overline X_{\mathrm{ev}}(d_\star)|>3$;
the dashed line $2e^{-9/2}$ is the ideal unit-subgaussian
reference. 
Right: the median centered moment diagnostic over
tested directions, with a $10$--$90\%$ directional band.
The rising median at the lowest precisions shows
deterioration across a substantial portion of the
tested directions.}
  \label{fig:nsubgauss}
\end{figure}

\begin{figure}[t]
  \centering  \includegraphics[width=0.8\linewidth]{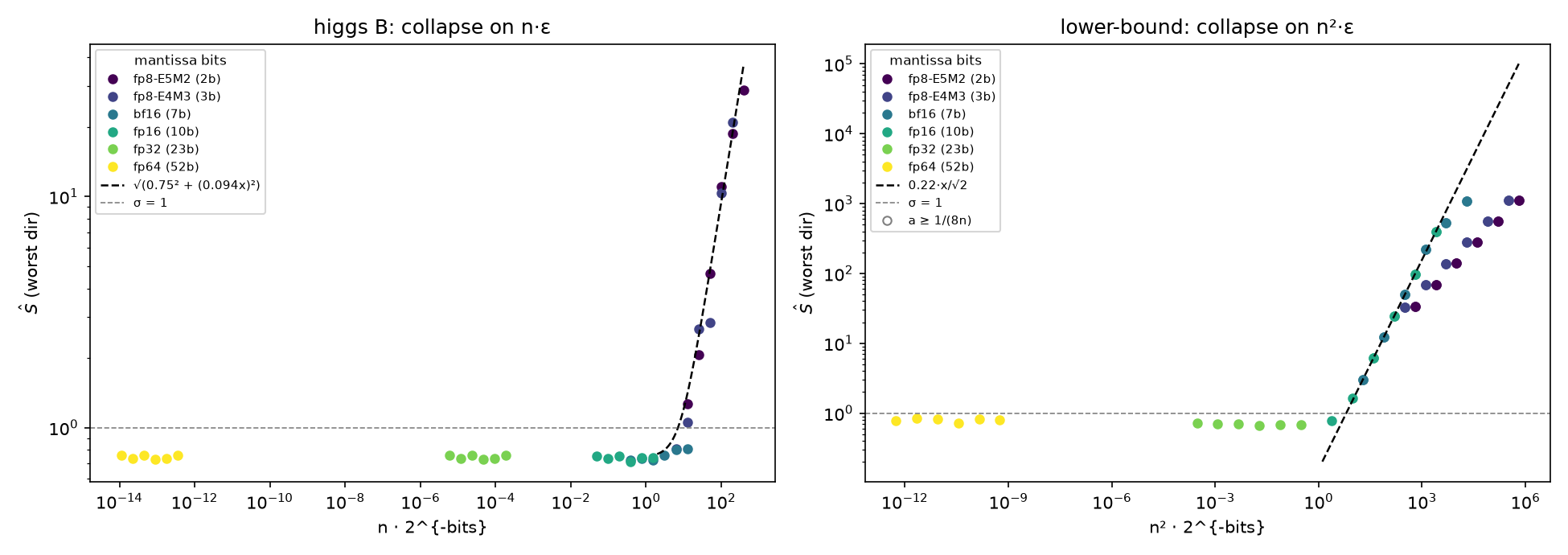}
  \caption{\textbf{Scaling collapse.} Left: on the Higgs matrix, $\hat{S}$ is
described by a single curve $\sqrt{\sigma_0^2 + (c\, n\, 2^{-b})^2}$ in
$n\cdot 2^{-b}$. Right: on the lower-bound matrix, $\hat{S}$ follows
$n^2 \cdot 2^{-b}$ (dashed) in the degraded regime and departs from it where
the bound $\min\{n^2\varepsilon, n\}$ saturates near $n/\sqrt{2}$ or falls to
the float64 plateau; open markers violate $a < 1/(8n)$.\looseness=-1}
  \label{fig:collapse}
\end{figure}

\textbf{Findings.} At fixed precision, our experiments show that the discrepancy inflation generally grows with the number of vectors $n$, with the strongest deterioration occurring at the lowest precisions. 
Increasing the number of retained bits mitigates this growth and brings the empirical discrepancy statistics closer to the float64 reference, suggesting that larger instances need higher precision to maintain comparable performance.
 
\textbf{Implementation details.} We compare fp16/fp32 implementations of Householder QR least squares, the \textsc{GSWDesign} Cholesky/Woodbury scheme, and the explicit-inverse Sherman-Morrison scheme from low-rank thinning's cubic walk, using common random draws. On the tested instances, QR is the most robust, with median image error less sensitive to conditioning than coefficient error. 
Refactoring every ten steps reduces fp16 Cholesky downdate failures to at most $1.75\%$, but $\psi_2$ inflation persists among successful runs. The explicit-inverse scheme is more sensitive to ill-conditioning, matching 
 the squared conditioning of Gram matrices where $\kappa_2(B_S^\top B_S)=\kappa_2(B_S)^2$ of full-rank $B_S$.\looseness=-1

\vspace{-0.2cm}
\section{Conclusion}
We analyze the Gram-Schmidt Walk under bounded, potentially biased and history-dependent perturbations. We show a concentration bound that separates least-squares image error, coordinatewise update error, and conditional sampling error while retaining the exact-arithmetic quadratic term in the log-MGF. We also give a full-column-rank construction that produces a bias of order $\min\{n^2\ec,n\}$, which shows that a quadratic accumulation of update error is unavoidable before saturation. Finally, our experiments corroborate our theoretical findings on how numerical discrepancy varies with fixed precision and problem size. 

\vspace{-0.1cm}

\section*{Acknowledgements}
We thank Abhishek Shetty, Albert Weng, Daniel Zhang, and Mehrdad Ghadiri for insightful discussions on this work. Emile Anand is supported by NSF Grants CCF 2338816, and Jan van den Brand is supported by NSF Grants CCF 2338816 and CCF 2504994.

\vspace{-0.1cm}

\subsection*{AI Disclosure}

In this work, we disclose the use of Cognition's Devin agent to help run our suite of experiments, and we used Anthropic's Claude Fable 5.1 model to suggest improvements on an earlier version of the theoretical result. We have reviewed all AI-assisted work. LLM-generated code was verified and tested for correctness, and we take responsibility for all the final content of this work.

\vspace{-0.1cm}

\subsection*{Ethics statement}

This work studies the worst-case numerical reliability of the Gram-Schmidt Walk, an algorithm for vector balancing. By quantifying the effects of finite precision, our results support more reliable use of discrepancy-based methods and clarify circumstances in which numerical errors can compromise output quality. Applications involving experimental design or data compression require separate consideration of data privacy, representativeness, and the consequences of the resulting decisions.

\vspace{-0.1cm}

\subsection*{Reproducibility statement}
The theoretical assumptions and perturbation models underlying our theoretical guarantees are specified in the main body, with proofs provided in the appendices. Our experimental setup describes our instance generation, dataset preprocessing, solver and factorization-maintenance procedures, numerical precision settings, random seeds, and evaluation metrics. To support reproduction of the finite-precision experiments, we document the arithmetic and accumulation precisions, explain our compute setup, and provide our code with thorough documentation. 

\bibliographystyle{plainnat}
\bibliography{iclr2027_conference}

\newpage

\appendix

\textbf{Outline of the Appendix}
\begin{itemize}
    \item Appendix~\ref{sec: auxiliary lemmas} collects auxiliary lemmas used in our concentration analysis.
    \item Appendix~\ref{sec:singlepivot} gives an analysis of the single pivot phases.
    \item Appendix~\ref{sec: pf of theorem 2.3} proves Theorem~2.3 by combining the phase concentration estimates with bounds on the accumulated numerical errors.
    \item Appendix~\ref{sec: lower bound} presents our lower-bound construction and establishes the worst-case bias caused by bounded update errors.
    \item Appendix~\ref{sec: experimental details} describes our experimental setup and provides additional numerical results.
\end{itemize}

\section{Auxiliary Lemmas}
\label{sec: auxiliary lemmas}
 We now state a lemma from \cite{harshaw2023balancingcovariatesrandomizedexperiments} to be used in our adapted proof, to allow us to bound the expectation of a certain MGF quantity.
\begin{lemma}[Lemma $S3.5$ in \cite{harshaw2023balancingcovariatesrandomizedexperiments}]\label{spielman:operatorfn}
    For all $x \in [-1,1]$ the function
    \begin{equation*}
        f_x \begin{bmatrix} \alpha & \eta \\ \eta & \beta \end{bmatrix}
        = \exp \left(-\frac{1}{2} \alpha \beta\right) \left[\frac{1+x}{2} \exp ((1-x) \eta) + \frac{1-x}{2} \exp ((-1-x) \eta)\right]
    \end{equation*}
is monotone decreasing with respect to the Loewner order on the cone
of $2\times2$ positive semidefinite matrices: if $M\preceq N$, then
$f_x(M)\ge f_x(N)$.\\
\end{lemma}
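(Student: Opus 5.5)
The plan is to reduce Loewner monotonicity to a sign condition on directional derivatives along rank-one positive semidefinite directions. That sign condition then follows from two facts: the PSD inequality $\eta^2\le\alpha\beta$, and a Hoeffding-type bound on the log-MGF of a centered two-point variable.

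\textbf{Reduction to a derivative inequality.} Suppose $M\preceq N$ are both PSD, and set $D=N-M\succeq 0$. The segment $M+sD$, $s\in[0,1]$, stays in the PSD cone, because $M+sD\succeq M\succeq 0$. So it suffices to show that $\frac{d}{ds}f_x(M+sD)\le 0$ at every PSD point, for every PSD direction $D$. Every PSD $D$ is a sum of rank-one matrices $dd^\top$, and the derivative is linear in $D$. Hence it suffices to take $D=dd^\top$ with $d=(d_1,d_2)$, that is, $a=d_1^2$, $b=d_2^2$ and off-diagonal $c=d_1d_2$.

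\textbf{Computing the derivative.} Write $g(\eta)=\frac{1+x}{2}e^{(1-x)\eta}+\frac{1-x}{2}e^{-(1+x)\eta}$, so that $f_x=e^{-\alpha\beta/2}g(\eta)$. The symmetric entry $\eta$ moves by $c$ along $D$, so the directional derivative is
\begin{equation*}
e^{-\alpha\beta/2}\Bigl(-\tfrac12(a\beta+b\alpha)\,g(\eta)+c\,g'(\eta)\Bigr).
\end{equation*}
We therefore need $c\,g'(\eta)\le \tfrac12(a\beta+b\alpha)g(\eta)$. By AM--GM, $a\beta+b\alpha\ge 2|d_1d_2|\sqrt{\alpha\beta}$. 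Positive semidefiniteness of the base point gives $\sqrt{\alpha\beta}\ge|\eta|$. The right-hand side is thus at least $|c|\,|\eta|\,g(\eta)$. Since $g>0$, it remains to prove
\begin{equation*}
|g'(\eta)|\le|\eta|\,g(\eta),\qquad\text{equivalently}\qquad \bigl|(\log g)'(\eta)\bigr|\le|\eta| .
\end{equation*}
The degenerate cases $\alpha=0$ or $\beta=0$ force $\eta=0$, where $g'(0)=0$ by the centering computed below, so the inequality holds trivially there.

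\textbf{Key step: the log-MGF bound.} Observe that $g(\eta)=\mathbb E[e^{\eta Y}]$, where $Y$ takes the value $1-x$ with probability $\frac{1+x}{2}$ and the value $-1-x$ with probability $\frac{1-x}{2}$; this is a valid distribution for $x\in[-1,1]$. A direct computation gives $\mathbb E Y=0$, so $(\log g)'(0)=0$. Moreover, $(\log g)''(\eta)$ is the variance of $Y$ under the exponentially tilted law. This tilted variable is supported on an interval of length $2$, so its variance is at most $2^2/4=1$; this is the standard Hoeffding-lemma bound. Integrating from $0$ gives $|(\log g)'(\eta)|\le|\eta|$. This completes the sign condition, and hence the monotonicity.

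The main point requiring care is this last step: one must recognize $g$ as the MGF of a mean-zero variable with range exactly $2$, so that the constant $1$ in the tilted-variance bound matches the factor $\tfrac12$ in $e^{-\alpha\beta/2}$. Everything else is routine calculus on the cone.
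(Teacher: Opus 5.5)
Your proof is correct and complete. The paper itself gives no proof of this statement. It is imported verbatim from \citet{harshaw2023balancingcovariatesrandomizedexperiments} and only applied, with the perturbed parameter $\tilde x$ in place of $x$, inside the single-pivot argument. So your proposal supplies a self-contained proof rather than an alternative to one in the text.

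Structurally, your reduction (nonpositive directional derivative at every PSD point along every PSD direction) is, by self-duality of the PSD cone, the same as showing that a certain gradient matrix $G$ is negative semidefinite. Here $G$ has diagonal entries $\partial_\alpha f_x,\partial_\beta f_x$ and off-diagonal entry $\tfrac12\partial_\eta f_x$, and its determinant condition is exactly $|g'(\eta)|\le\sqrt{\alpha\beta}\,g(\eta)$. The real content is the scalar inequality $|g'(\eta)|\le|\eta|\,g(\eta)$. After multiplying by $e^{x\eta}$ it reads $(1-x^2)|\sinh\eta|\le|\eta|(\cosh\eta+x\sinh\eta)$.

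Deriving it from the tilted variance of a centered two-point variable with range $2$ has three advantages:
\begin{itemize}
\item it avoids any two-variable calculus on this hyperbolic inequality;
\item it covers the endpoints $x=\pm1$ for free, since there $Y\equiv0$ and $g\equiv1$;
\item it makes transparent that the bound is uniform in $x\in[-1,1]$, which is exactly what the paper needs when it invokes the lemma for $f_{\tilde x}$ with an arbitrary $\tilde x\in[-1,1]$.
\end{itemize}

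Two minor simplifications are available. The rank-one decomposition is unnecessary: any PSD direction satisfies $|c|\le\sqrt{ab}$, and AM--GM then gives $\tfrac12(a\beta+b\alpha)\ge\sqrt{ab\alpha\beta}\ge|c|\sqrt{\alpha\beta}$ directly. The separate treatment of $\alpha\beta=0$ is also redundant, because $|g'(\eta)|\le|\eta|\,g(\eta)$ holds for every real $\eta$, so the chain of inequalities goes through unchanged at degenerate base points.
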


\begin{lemma}[MGF Bound to Orlicz Norm] \label{lem:mgf_orlicz} Let $a\geq 0$.
If a random vector $X \in \R^m$ satisfies an MGF bound of the form
\begin{equation*}
    \E [\exp (\inner{X, v})] \leq \exp \left(\frac{1}{2} \norm{v}^2 \sigma^2 + a \norm{v} \right)
\end{equation*}
for all $v \in \R^m$ and some $\sigma > 0$ then we have that $\norm{X}_{\psi_2} \leq a/\sqrt{\ln 2} + \sqrt{6} \sigma$.
\end{lemma}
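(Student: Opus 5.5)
The plan is to reduce the vector statement to a scalar one, turn the MGF bound into a shifted Gaussian tail, and then use the fact that the $\psi_2$ (Luxemburg) norm is monotone and satisfies the triangle inequality. Since the directional Orlicz norm is a supremum over unit directions, fix a unit vector $u\in\R^m$ and set $Y=\inner{X,u}$. It then suffices to show that $\inf\{K>0:\E\exp(Y^2/K^2)\le 2\}\le a/\sqrt{\ln 2}+\sqrt6\,\sigma$.

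\emph{Step 1 (tail bound).} Apply the hypothesis with $v=\lambda u$ for $\lambda\ge0$. Then $\E e^{\lambda Y}\le\exp(\tfrac12\lambda^2\sigma^2+a\lambda)$, and Markov's inequality gives $\Pr[Y\ge t]\le\exp(\tfrac12\lambda^2\sigma^2+a\lambda-\lambda t)$. For $t>a$, choose $\lambda=(t-a)/\sigma^2$; this gives $\Pr[Y\ge t]\le\exp(-(t-a)^2/(2\sigma^2))$. The same argument with $v=-\lambda u$ handles $-Y$. Hence $\Pr[|Y|\ge t]\le 2\exp(-(t-a)^2/(2\sigma^2))$ for all $t>a$.

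\emph{Step 2 (shift decomposition).} Define $W=(|Y|-a)_+$. Then $|Y|\le a+W$ pointwise, and Step 1 gives $\Pr[W>s]\le 2\exp(-s^2/(2\sigma^2))$ for every $s>0$. Because the $\psi_2$ norm depends only on $|Y|$, is monotone, and obeys the triangle inequality, we get $\norm{Y}_{\psi_2}\le\norm{a}_{\psi_2}+\norm{W}_{\psi_2}$. For the constant $a$, the condition $\exp(a^2/K^2)\le2$ holds exactly when $K\ge a/\sqrt{\ln2}$, so $\norm{a}_{\psi_2}=a/\sqrt{\ln 2}$.

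\emph{Step 3 (Gaussian-tail part).} Use the layer-cake formula together with the tail bound from Step 2:
\begin{equation*}
\E e^{W^2/K^2}=1+\int_0^\infty \frac{2s}{K^2}e^{s^2/K^2}\Pr[W>s]\,ds\le 1+\frac{4}{K^2}\int_0^\infty s\,e^{-s^2\left(\frac1{2\sigma^2}-\frac1{K^2}\right)}ds .
\end{equation*}
For $K^2>2\sigma^2$ the right-hand side equals $1+\dfrac{2}{K^2/(2\sigma^2)-1}$. Taking $K^2=6\sigma^2$ makes this exactly $1+2/(3-1)=2$, so $\norm{W}_{\psi_2}\le\sqrt6\,\sigma$. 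Combining with Step 2 and taking the supremum over unit $u$ yields the claim.

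The only delicate point is Step 2: the linear term $a\norm{v}$ rules out a direct centered-subgaussian argument, since $Y$ need not have mean zero. Absorbing that term as a deterministic shift of $|Y|$ is what makes the argument work, and it relies on the $\psi_2$ functional being a genuine norm, so that the triangle inequality applies to $a+W$. The remaining work is the routine integral in Step 3.
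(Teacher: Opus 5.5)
Your proposal is correct and follows essentially the same route as the paper. Both fix a unit direction, obtain the shifted tail $\Pr[|Y|\ge t]\le 2\exp(-(t-a)^2/(2\sigma^2))$ by a Chernoff bound with $\lambda=(t-a)/\sigma^2$, bound $(|Y|-a)_+$ in $\psi_2$ by $\sqrt{6}\,\sigma$ through the layer-cake integral, and finish with monotonicity and the triangle inequality using $\|a\|_{\psi_2}=a/\sqrt{\ln 2}$; your separate treatment of $-Y$ in Step 1 only spells out what the paper's factor of $2$ uses implicitly.
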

\begin{proof}
Recalling that we define the Orlicz norm of a random vector as the supremum of the Orlicz norm over its one dimensional scalar projections, fix some $u \in S^{m-1}$ and let $Y := \inner{X, u}$. 
We will ultimately show that $\E \left[\exp\left(Y^2 / (a/\sqrt{\ln 2} + \sqrt{6} \sigma)^2\right)\right] \leq 2$. 
Observe that because $X$ satisfies the above MGF bound we have, for any $\lambda \geq 0$,
\begin{align*}
    \E [e^{\lambda Y}] = \E[\exp \inner{X, \lambda u}] \leq \exp \left( \frac{1}{2} \lambda^2 \sigma^2 + a \lambda \right)
\end{align*}
and therefore by applying Markov's inequality in the MGF we get
\begin{align*}
    \Pr [\abs{Y} \geq t] = \Pr [e^{\lambda \abs{Y}} \geq e^{\lambda t}] \leq 2 \exp \left(\frac{1}{2} \lambda^2 \sigma^2 + a \lambda - t \lambda \right)
\end{align*}
and choose $\lambda = \frac{t-a}{\sigma^2}$ to get the following bound for $t \geq a$:
\begin{align*}
    \leq 2\exp(-(t-a)^2 / (2 \sigma ^2))
\end{align*}
From here we define $Z := (\abs{Y} -a )_+$ such that $Z \geq 0$. Observe that for $r > 0$ we have 
\begin{equation*}
    \Pr [\abs{Z} \geq r] = \Pr [(\abs{Y}-a)_+ \geq r] \leq \Pr [\abs{Y} \geq a + r] \leq 2 \exp (\frac{-r^2}{2 \sigma^2})
\end{equation*}
which allows us to show the claimed bound on the MGF of $Y^2$ via an application of the layer cake representation to first bound the MGF of $Z^2$. Let $K > 0$ be chosen later, considering the function $f(r) = e^{r^2/K^2}$ we have
\begin{align*}
    \E [\exp (Z^2 / K^2)] &= 1 + \int_0^\infty f' (r) \Pr [r < Z] d r \\
    &= 1 + \int_0^\infty \frac{2r}{K^2} e^{r^2/K^2} \Pr [r < Z] d r \\
    &\leq 1 + \int_0^\infty \frac{2r}{K^2} e^{r^2/K^2} \cdot 2 \exp (-r^2 / (2 \sigma^2)) d r \\
    &= 1 + \int_0^\infty \frac{4r}{K^2} \exp \left(-r^2 \left(\frac{1}{2 \sigma^2} - \frac{1}{K^2} \right) \right) d r \\
    &= 1 + \frac{4 \sigma^2}{K^2 - 2 \sigma ^2},
\end{align*}
where the final step holds for $K^2 > 2 \sigma ^2$. To make this at most $2$ we set $K = \sqrt{6} \sigma$ which makes the above expression equal $2$, ensuring convergence of the above integral. This gives  $\norm{Z}_{\psi_2} \leq \sqrt{6}\sigma $.

To conclude the proof we use that $\abs{Y} \leq Z + a$ pointwise implying $\E[e^{Y^2}] \leq \E[e^{(Z + a)^2}]$ and so \[\norm{Y}_{\psi_2} \leq \norm{Z + a}_{\psi_2} \leq \norm{Z}_{\psi_2} + \norm{a}_{\psi_2} \leq \sqrt{6} \sigma + a/\sqrt{\ln 2}\]
where we used the triangle inequality followed by the boundedness of $a$ in the final two steps.\qedhere
\end{proof}

\section{Missing Proofs}
\label{sec: missing proofs}
\subsection{Single Pivot Phase Analysis (Proof of Lemma 3.2)}
\label{sec:singlepivot}

Our proof follows the same structure as in \cite{harshaw2023balancingcovariatesrandomizedexperiments}.
To prove \Cref{spielman:withinphasebound}, which states
\begin{align}
\EEe{\sum_{t=1}^n (\delta_t + \epsilon_t (p_t)) \inner{B u_t, v}} \leq \exp (\frac{1}{2} \norm{v}^2 + \sqrt{n} \gamma \norm{v} ) \label{eq:lemma_bound}
\end{align}
we want to split the sum into separate phases based on the used pivots $p$.
We regard the pivot phase using pivot $p\in[n]$ as starting in iteration $t_p$.
For $s\ge t_p-1$ we let $Q_{s,p} \subset[n]$ be the set of all positions $r$ decided in iteration $s$ during the pivot phase of $p$.
While $p$ is technically decided in iteration $t_p$, for analysis we count it in $Q_{t_p - 1, p}$.
Further, $Q_{s,p} = \emptyset$ for every $s$ after the iteration in which $p$ is frozen.

Let $\Delta_p$ be all the randomness up to the start of iteration $t_p$ point when $p$ is picked as the pivot (the choice of $p$ is included in $\Delta_p$).
We will analyze the contribution of pivot $p$'s phase to \eqref{eq:lemma_bound} by analyzing its contribution conditional on $\Delta_p$. In particular, given that conditioning, $p$, $t_p$, $Q_{t_p-1,p}$ are fixed. 
Technically, the end of the pivot $p$'s phase is random, but for analysis is would be easier if it were fixed.
To do this, we define $\delta_{t,p} = \delta_t, \epsilon_{t,p} = \epsilon_t$ and $=0$ for $t$ outside the phase. Thus we can take the sum $t=t_p...n$ for fixed end point $n$ such that
$$
\sum_{t=t_p}^n (\delta_{s,p} + \epsilon_{s,p} (p_t)) \inner{B u_t, v}
$$
is the contribution of pivot $p$'s phase to the sum in \eqref{eq:lemma_bound}.
In particular, a sum over the pivot phase is written $\sum_{t=t_p}^n$ and a sum over the positions decided in the phase is written $\sum_{s=t_p-1}^{n} \sum_{r \in Q_{s,p}}$. For the later we use that $Q_{s,p} = 0$ for $s$ outside the pivot's phase.

We reuse the notation $\alpha_r, \beta_r$ to denote $\inner{w_{\sigma(r)}, b_{p}}$ and $\inner{w_{\sigma(r)}, v}$, and we use $x$ to denote $z_t (p)$, the current position in $[-1,1]$ along the pivot unit when it improves notational clarity. Let $P_p$ be the projection matrix onto the $\mathrm{span}(\{w_{\sigma(r)}|r$ decided during pivot $p$'s phase$\})$

\begin{lemma}\label{lem:single_pivot}
For a pivot $p$ and $\gamma= (2 \epsilon_{pr} + \ec)n$ we have
$$\mathbb{E} \left[\exp \left( \sum_{t=t_p}^n (\delta_{t,p} + \epsilon_{t,p} (p)) \inner{B u_t, v} - \frac{1}{2} \norm{P_p b_p}^2 \norm{P_p v}^2 - \gamma \sum_{s=t_p-1}^{n} \sum_{r \in Q_{s,p}} \abs{\alpha_r \beta_r} \right) \middle| \Delta_p \right] \le 1$$
\end{lemma}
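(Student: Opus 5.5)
We follow the backward-induction structure of the exact-arithmetic single-phase argument in \cite{harshaw2023balancingcovariatesrandomizedexperiments}, but carry an explicit drift penalty. Condition on $\Delta_p$ and write $x_t=z_t(p)$ for the pivot coordinate. The effective scalar step is $\tilde\delta_t:=\dth+\epsilon_t(p)$, so that $x_{t+1}=x_t+\tilde\delta_t$ by the convention $\uth(p)=1$. The sum of the $\tilde\delta_t$ over the phase is therefore the actual pivot displacement $x_{\mathrm{end}}-x_{t_p}\in\{1-x_{t_p},-1-x_{t_p}\}$. By \Cref{spielman:dirstepisproj}, $\inner{Bu_t,v}=\sum_{s=t_p-1}^{t-1}\sum_{r\in Q_{s,p}}\alpha_r\beta_r$. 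Summation by parts then rewrites the phase sum as
\[
\sum_{s\ge t_p-1}\sum_{r\in Q_{s,p}}\alpha_r\beta_r\,(x_{\tau(s)}-x_{s+1}),
\]
where $\tau(s)$ is the terminal pivot time. The coefficient of each $\alpha_r\beta_r$ is thus the remaining pivot displacement after $r$ is decided.

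\textbf{Induction.} We define a value function $V_t$ on states at iteration $t$ of the phase. It should dominate the conditional expectation of
\[
\exp\Bigl(\sum_{s\ge t}\tilde\delta_s\inner{Bu_s,v}-\tfrac12\sum_{s\ge t}(\ldots)-\gamma\sum(\ldots)\Bigr)
\]
given $\mathcal F_t$. We prove by backward induction on $t$ that
\[
V_t\le f_{x_t}\begin{bmatrix}\alpha_t & \eta_t\\ \eta_t & \beta_t\end{bmatrix}\exp\Bigl(\gamma_{\mathrm{rem}}(t)\sum_{r\text{ decided}}|\alpha_r\beta_r|\Bigr),
\]
with $f_x$ from \Cref{spielman:operatorfn}. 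Here $\alpha_t,\beta_t$ are the squared norms of the not-yet-revealed parts of $P_pb_p$ and $P_pv$, and $\eta_t$ is their inner product. The monotonicity in \Cref{spielman:operatorfn} handles the step where a new batch $Q_{t,p}$ is revealed, since the revealed rank-one piece is PSD, exactly as in the exact proof. At the terminal time $x=\pm1$, and $f_x$ degenerates to the desired bound.

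\textbf{Perturbed randomization step.} At each step, the exact argument uses that $\tilde\delta_t$ takes the two values $\delta^\pm$ with centering probabilities, so that $\E[f_{x_{t+1}}]$ equals a convex combination that collapses to $f_{x_t}$. In our setting two perturbations appear. First, the branch probability is off by at most $\epsilon_{pr}$. Second, the realized landing point is shifted by $\epsilon_t(p)$, with $|\epsilon_t(p)|\le\ec$. We treat $f_x$ as a function of $x$ and bound its Lipschitz constant (in log-scale) on $[-1,1]$. Both components of $f_x$ have exponents linear in $x$ with slope $|\eta|\le\sum|\alpha_r\beta_r|$. Hence a shift of $x$ by $\ec$ costs a factor $\exp(\ec\sum|\alpha_r\beta_r|)$. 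A probability error of $\epsilon_{pr}$ between two values each bounded by $\exp(2|\eta|)$ relative to their mean costs a factor of roughly $\exp(2\epsilon_{pr}\sum|\alpha_r\beta_r|)$.

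\textbf{Main obstacle.} The main obstacle is making this probability-error estimate clean multiplicatively. A difference $\epsilon_{pr}(g(\delta^+)-g(-\delta^-))$ needs to be compared to the mean, and this requires $\log g$ to be Lipschitz with constant of order $\sum|\alpha_r\beta_r|$ over a step of length at most $2$. I would try to prove
\[
\frac{|g(a)-g(b)|}{\bar g}\le 2\sum|\alpha_r\beta_r|\quad\text{up to }e^{(\cdot)},
\]
and absorb the result via $1+y\le e^y$.

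\textbf{Collecting the penalty.} Each step costs at most $(2\epsilon_{pr}+\ec)\sum_r|\alpha_r\beta_r|$ in the exponent, summed over the currently revealed $r$. There are at most $n$ steps in the phase, so the total penalty is at most $n(2\epsilon_{pr}+\ec)\sum_{s,r}|\alpha_r\beta_r|=\gamma\sum|\alpha_r\beta_r|$. This matches the statement once we induct with a penalty budget that decreases by one unit per remaining iteration.
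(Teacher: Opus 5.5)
There is a genuine gap in your perturbed randomization step, and it is not just a technical obstacle: the inductive hypothesis you propagate, namely that the perturbed value is at most the exact value $f_{x_t}$ at the realized pivot position times $\exp(\gamma_{\mathrm{rem}}(t)\sum_r|\alpha_r\beta_r|)$, is false.

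Your log-Lipschitz claim fails because the weights $\frac{1\pm x}{2}$ depend on $x$, not only the exponents. Write $h_\eta(x)=\frac{1+x}{2}e^{(1-x)\eta}+\frac{1-x}{2}e^{(-1-x)\eta}=e^{-x\eta}(\cosh\eta+x\sinh\eta)$. Then $\frac{d}{dx}\log h_\eta(x)=-\eta+\frac{\sinh\eta}{\cosh\eta+x\sinh\eta}$, which at $x=-1$, $\eta>0$ equals $-\eta+\frac12(e^{2\eta}-1)$. This is exponential in $\eta$, not of order $\eta$.

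Here is a concrete counterexample. Take a phase consisting of one iteration in which the pivot is the only active coordinate, sitting at $x=-1+\epsilon'$. The computed endpoints are the exact pivot distances, and the branch probability is shifted by $\epsilon_{pr}$, which the model allows. As $\epsilon'\to0$, the exact conditional expectation of $\exp(\eta\cdot\text{(remaining displacement)})$ tends to $1$, while the perturbed one tends to $1+\epsilon_{pr}(e^{2\eta}-1)$. Here $\sum_r|\alpha_r\beta_r|=|\eta|$, and $\eta$ grows with $\norm{v}$. So no penalty $\exp(c\,|\eta|)$ with $c<2$, and in particular none linear in $\sum|\alpha_r\beta_r|$ with a small coefficient, can cover this ratio. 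The estimate $|g(a)-g(b)|/\bar g\le 2\sum|\alpha_r\beta_r|$ that you flag as the main obstacle is simply false: the large value can carry small probability, so $\bar g$ is dominated by the small value.

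The missing idea is to never compare against the exact value at $x$. The paper peels off revealed batches backward, showing $g(R)\le g(R-1)$. Conditioning on $\Delta_R$, it uses that the whole remaining pivot displacement is a two-point variable on $\{1-x,-1-x\}$, so its law is determined by its mean $\mu$. By the tower property (\Cref{appendix:gamma}), $|\mu|\le(n-R)(2\epsilon_{pr}+\ec)\le\gamma$. This also makes your separate treatment of branch-probability error and landing shift unnecessary, since only the conditional mean enters.

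With $\tilde x=x+\mu$, the conditional expectation equals exactly $\exp((\tilde x-x)\eta_R-\gamma c_R)\,f_{\tilde x}(M_R)$. Here $M_R$ is the cumulative revealed matrix with entries $a_R,\eta_R,b_R$. These are the revealed parts, not the not-yet-revealed ones you describe, which are not measurable at time $R$.

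The argument then proceeds in three steps:
\begin{enumerate}
\item Loewner monotonicity is applied to $f_{\tilde x}$ itself, which only needs $\tilde x\in[-1,1]$.
\item The factoring is undone with $\eta_{R-1}$. The only cost is $(\tilde x-x)(\eta_R-\eta_{R-1})\le\gamma\sum_{r\in Q_{R,p}}|\alpha_r\beta_r|$, which is absorbed by $\gamma(c_R-c_{R-1})$.
\item At the bottom, one uses $f_{\tilde x}(M)\le f_{\tilde x}(0)=1$.
\end{enumerate}
Each revealed $r$ is therefore charged once, with the aggregate remaining drift, and the comparison is always against the target $1$ rather than the exact value function. That is why the exponential blow-up above never appears.
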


Before proving \Cref{lem:single_pivot}, we show that it implies \Cref{spielman:withinphasebound}.
In particular, \Cref{lem:single_pivot} can be seen as the contribution of pivot phase $p$ to \eqref{eq:lemma_bound}/\Cref{spielman:withinphasebound}.

\begin{proof}[Proof of \Cref{spielman:withinphasebound}]

\begin{align}
&\EEe{
\sum_{t=1}^T (\delta_t + \epsilon_t (p_t)) \inner{B u_t, v}} 
\exp\left(-\gamma \sqrt{n} \norm{v} - \frac{1}{2}\norm{v}^2\right) \notag\\
&\leq 
\mathbb{E} \left[ 
\prod_{p} \exp \left( 
    \sum_{t=t_p}^{n} (\delta_{t,p} + \epsilon_{t,p} (p)) \inner{B u_t, v} 
    - \frac{1}{2} \norm{P_p v}^2 
    - \gamma \sum_{s=t_p-1}^{n} \sum_{r \in Q_{s,p}} \abs{\alpha_r \beta_r} 
\right)  
\right] \notag\\
&\leq 
\mathbb{E} \left[ \prod_{p} \exp \left( \sum_{t=t_p}^{n} (\delta_{t,p} + \epsilon_{t,p} (p)) \inner{B u_t, v} - \frac{1}{2} \norm{P_p b_p}^2 \norm{P_p v}^2 - \gamma \sum_{s=t_p-1}^{n} \sum_{r \in Q_{s,p}} \abs{\alpha_r \beta_r} \right)  \right] \label{eq:3.2:boundby1}
\end{align}
where the first inequality holds pointwise because $\sum_{p} \norm{P_p v}^2 \leq \norm{v}^2$, as the $P_p$ project onto mutually orthogonal subspaces, and because by Cauchy--Schwarz
\begin{equation*}
    \sum_{p} \sum_{s=t_p-1}^{n} \sum_{r \in Q_{s,p}} \abs{\alpha_r \beta_r}
    \leq \Bigl(\sum_{p=1}^n \norm{P_p b_p}^2\Bigr)^{1/2} \Bigl(\sum_{p=1}^n \norm{P_p v}^2\Bigr)^{1/2}
    \leq \sqrt{n}\, \norm{v},
\end{equation*}
using orthonormality of the $\{w_{\sigma(r)}\}$ and $\norm{b_p} \leq 1$ which holds deterministically. 

Let 
$$X_p := \exp \left( \sum_{t=t_p}^{n} (\delta_{t,p} + \epsilon_{t,p} (p)) \inner{B u_t, v} - \frac{1}{2} \norm{P_p b_p}^2 \norm{P_p v}^2 - \gamma \sum_{s=t_p-1}^{n} \sum_{r \in Q_{s,p}} \abs{\alpha_r \beta_r} \right).$$
then by conditioning on $\Delta_p$ and \Cref{lem:single_pivot} we get for any $k \le n$ that $\E[X_k | \Delta_k] \le 1$ and thus
\begin{align*}
    \mathbb{E} \left[ \prod_{p=1}^k X_p  \right] &=
    \mathbb{E} \left[ X_{k} \cdot \prod_{p=1}^{k-1} X_p \right] \\
    &=
    \mathbb{E} \left[ \E[X_k | \Delta_k] \cdot \prod_{p=1}^{k-1} X_p \right] \\
    &\le
    \mathbb{E} \left[ \prod_{p=1}^{k-1} X_p \right].
\end{align*}
Inductively this implies $\mathbb{E} \left[ \prod_{p=1}^k X_p  \right] \le 1$. As in \cite{harshaw2023balancingcovariatesrandomizedexperiments} we note that the result holds for any fixed pivot ordering so we are free to take the above ordering for notation, with the full result following from a marginalization over all orderings.

At last, since \eqref{eq:3.2:boundby1} is bounded by 1, rearranging shows
\begin{equation*}
    \EEe{\sum_{t=1}^T (\delta_t + \epsilon_t (p_t)) \inner{B u_t, v}} \leq \exp \left(\frac{1}{2} \norm{v}^2 + \sqrt{n} \gamma \norm{v}\right)
\end{equation*}
as desired. \qedhere\\
\end{proof} 

\begin{proof}[Proof of \Cref{lem:single_pivot}]
By applying Lemma \ref{spielman:dirstepisproj} and linearity we have $\inner{B u_t, v} = \sum_{s=t_p-1}^{t-1} \sum_{r \in Q_{s,p}} \alpha_r \beta_r$ for pivot $p=p_t$, and that the set of $\{w_{\sigma(r)}\}$ where position $r$ is decided in pivot phase $p$ is an orthonormal basis for the subspace $P_p$ projects into, implying $\norm{P_p b_p}^2 = \sum_{s=t_p-1}^{n} \sum_{r \in Q_{s,p}} \alpha_r ^2$ and $\norm{P_p v}^2 = \sum_{s=t_p-1}^{n} \sum_{r \in Q_{s,p}} \beta_r ^2$. This yields
\begin{align*}
    \E\Bigg[\exp\Bigg(\sum_{t=t_p}^{n} (\delta_{s,p} + \epsilon_{s,p} (p)) \sum_{s=t_p-1}^{t-1} \sum_{r \in Q_{s,p}} \alpha_r \beta_r 
    -&~
    \frac{1}{2} \left(\sum_{s=t_p-1}^{n} \sum_{r \in Q_{s,p}} \alpha_r ^2\right) \left(\sum_{s=t_p-1}^{n} \sum_{r \in Q_{s,p}} \beta_r ^2\right)\\ 
    -&~ \gamma \sum_{s=t_p-1}^{n} \sum_{r \in Q_{s,p}} \abs{\alpha_r \beta_r}
    \Bigg) \Bigg| \Delta_{p}\Bigg].
\end{align*}
To prove \Cref{lem:single_pivot}, this expression must be bounded by 1.

For $R \in \{t_p - 1, \dots, n\}$ let us define
    \begin{align*}
    g(R) :=&~ \E\Bigg[\exp\Bigg(
    \sum_{t=t_p}^{n} (\delta_{t,p} + \epsilon_{t,p} (p)) \sum_{s=t_p-1}^{\min\{R,\,t-1\}} \sum_{r \in Q_{s,p}} \alpha_r \beta_r \\
    &~
    - \frac{1}{2} \left(\sum_{s=t_p-1}^{R} \sum_{r \in Q_{s,p}} \alpha_r ^2\right) \left(\sum_{s=t_p-1}^{R} \sum_{r \in Q_{s,p}} \beta_r ^2\right) \\
    &~
    - \gamma \sum_{s=t_p-1}^{R} \sum_{r \in Q_{s,p}} \abs{\alpha_r \beta_r}
    \Bigg)\Bigg|\Delta_p\Bigg].
\end{align*}

Observe that $g(n)$ recovers the same expression, since $\min\{n, t-1\} = t-1$ for every $t \leq n$.
We will prove that $g(n) \leq g(n-1) \leq \dots \leq g(t_p - 1) \leq 1$ which then implies \Cref{lem:single_pivot}.

Since $t_p$ is determined by $\Delta_{p}$, every summation range above is deterministic under this conditioning. We note that we're progressing not over the contributions to the MGF at each timestep in the pivot phase but rather over which decided positions $r$ we consider for the terms involving $\alpha_r, \beta_r$. We first point out that $g(n) \leq g(n-1)$, as the first summation
\begin{equation*}
\sum_{t=t_p}^{n} (\delta_{t,p} + \epsilon_{t,p} (p)) \sum_{s=t_p-1}^{\min\{R,\,t-1\}} \sum_{r \in Q_{s,p}} \alpha_r \beta_r
\end{equation*}
is identical for $R = n$ and $R = n-1$; every $t \leq n$ satisfies $\min\{n, t-1\} = \min\{n-1, t-1\} = t-1$. The other two summations in the above expectation only increase in absolute value as we sum over more terms, and so we have $g(n) \leq g(n-1)$. We now proceed to show for $R \in \{t_p, \dots, n-1\}$ we have $g(R) \leq g(R-1)$, with the final inequality $g(t_p - 1) \leq 1$ to be shown later via a similar technique.

\paragraph{Proving $g(R) \leq g(R-1)$:}

We claim it suffices to show that for any such $R$ we have

\begin{align*}
&\EE{\sum_{t=R+1}^{n} (\delta_{t,p} + \epsilon_{t,p} (p)) \eta_R - \frac{1}{2} a_R b_R - \gamma c_R}{\Delta_{R}} \\
&\qquad\qquad \leq \EE{\sum_{t=R+1}^{n} (\delta_{t,p} + \epsilon_{t,p} (p)) \eta_{R-1} - \frac{1}{2} a_{R-1} b_{R-1} - \gamma c_{R-1}}{\Delta_{R}}, 
\end{align*}
where we use $\Delta_R$ to denote randomness up to and including iteration $R$ (so it includes $\Delta_p$), and the shorthand $\eta_R := \sum_{s=t_p-1}^{R} \sum_{r \in Q_{s,p}} \alpha_r \beta_r$ with $a_R, b_R, c_R$ having the same summation but with the inner term being $\alpha_r^2, \beta_r^2$ or $\abs{\alpha_r \beta_r}$ respectively. The expectation on the left matches $g(R)$ except that the first summation considers only iterations $t \geq R+1$, for which $\min\{R, t-1\} = R$, and so all of the summations in the expectation involve the same terms.

To show the above inequality, we observe that the only randomness under this conditioning is in the realization of the $\delta_{t,p} + \epsilon_{t,p}(p)$. 

Note that this sum must take values in $\{1-x, -1 - x\}$, where $x := z_{R+1}(p)$ is the position of the pivot coordinate after iteration $R$: by our convention each coefficient $\delta_{t,p} + \epsilon_{t,p}(p)$ is the pivot increment $z_{t+1}(p) - z_t(p)$, so the sum telescopes to $\sum_{t=R+1}^{n} (\delta_{t,p} + \epsilon_{t,p} (p)) = z_{n+1} (p) - x$ with $z_{n+1} (p) \in \{1,-1\}$. In the case where $\sum_{\substack{t \in S_p \\ t > R}} \delta_{t,p} + \epsilon_{t,p} (p) $ is mean zero a calculation gives that the distribution takes value $1-x$ with probability $\frac{1+x}{2}$, however we don't assume that the noise terms are mean zero. We do however know that they are bounded and therefore have bounded mean conditioned on $\Delta_R$ and so we introduce $\tilde{x}$ to stand in for $x$, assuming that we take value $1-x$ with probability $\frac{1+\tilde{x}}{2}$ where $\abs{\tilde{x} - x} \leq \gamma$. 
Here $\abs{\tilde{x} - x} \leq \gamma$ holds for $\gamma = (2 \epsilon_{pr} + \ec) n$ by \Cref{appendix:gamma}.

We can therefore take expectations and rearrange to get
\begin{align*}
&~
\EEr{\sum_{t=R+1}^{n} (\delta_{t,p} + \epsilon_{t,p} (p)) \eta_R - \frac{1}{2} a_R b_R - \gamma c_R} \\
=&~
 \frac{1 + \tilde{x}}{2} \exp((1 - x) \eta_R - \frac{1}{2} a_R b_R - \gamma c_R) + \frac{1 - \tilde{x}}{2} \exp ((-1 - x)\eta_R - \frac{1}{2} a_R b_R - \gamma c_R) \\ 
=&~
 \exp(- \gamma c_R ) \left[ \frac{1 + \tilde{x}}{2} \exp((1 - x) \eta_R - \frac{1}{2} a_R b_R ) + \frac{1 - \tilde{x}}{2} \exp ((-1 - x)\eta_R - \frac{1}{2} a_R b_R )\right] \\ 
=&~
 \exp((\tilde{x} - x)\eta_R - \gamma c_R) \\
&\quad\quad\quad
\cdot\left[ \frac{1 + \tilde{x}}{2} \exp((1 - \tilde{x}) \eta_R - \frac{1}{2} a_R b_R ) + \frac{1 - \tilde{x}}{2} \exp ((-1 - \tilde{x})\eta_R - \frac{1}{2} a_R b_R )\right],
\end{align*}
where in the last step we used $1-x = (1-\tilde{x}) + (\tilde{x} - x)$ and $-1 - x = (-1 - \tilde{x}) + (\tilde{x} - x)$, factoring out $\exp ((\tilde{x} - x)\eta_R)$. This gives that \[\left[ \frac{1 + \tilde{x}}{2} \exp((1 - \tilde{x}) \eta_R - \frac{1}{2} a_R b_R ) + \frac{1 - \tilde{x}}{2} \exp ((-1 - \tilde{x})\eta_R - \frac{1}{2} a_R b_R )\right]=f_{\tilde{x}} \begin{bmatrix} a_R & \eta_R \\ \eta_R & b_R \end{bmatrix}.\]

We note that $\frac{1 + \tilde{x}}{2} \in [0,1]$ as it corresponds to the probability of a certain event and so $\tilde{x} \in [-1,1]$ so $f_{\tilde{x}}$ is defined. We can therefore apply lemma \ref{spielman:operatorfn} as for any $r$ the matrix $\begin{bmatrix} \alpha_r^2 & \alpha_r \beta_r \\ \alpha_r \beta_r & \beta_r^2 \end{bmatrix}$ is positive semi-definite and so $\begin{bmatrix} a_R & \eta_R \\ \eta_R & b_R \end{bmatrix}$ is a sum of positive semi-definite matrices.

We can therefore split off the final term of the summation common to $a_R, b_R, \eta_R$ and apply the fact $f$ is operator monotone decreasing to get
\begin{align*}
&~
\EEr{\sum_{t=R+1}^{n} (\delta_{t,p} + \epsilon_{t,p} (p)) \eta_R - \frac{1}{2} a_R b_R - \gamma c_R} \\
=&~
\exp((\tilde{x} - x) \eta_R - \gamma c_R) f_{\tilde{x}} \begin{bmatrix} a_R & \eta_R \\ \eta_R & b_R \end{bmatrix} \\
\leq&~
\exp((\tilde{x} - x) \eta_R - \gamma c_R) f_{\tilde{x}} \begin{bmatrix} a_{R-1} & \eta_{R-1} \\ \eta_{R-1} & b_{R-1} \end{bmatrix} \\
=&~
\exp((\tilde{x} - x)\eta_{R} - \gamma c_R) \\
&~
\cdot \left[ \frac{1 + \tilde{x}}{2} \exp((1 - \tilde{x}) \eta_{R-1} - \frac{1}{2} a_{R-1} b_{R-1} ) + \frac{1 - \tilde{x}}{2} \exp ((-1 - \tilde{x})\eta_{R-1} - \frac{1}{2} a_{R-1} b_{R-1} )\right],
\end{align*}
where we can now use a similar manipulation to factor out $\exp (x - \tilde{x})\eta_{R-1} = \exp (\tilde{x} - x)(-\eta_{R-1})$ to exchange the inner uses of $\tilde{x}$ with $x$, matching the expression for the original expectation.
\begin{align*}
=&~
\exp((\tilde{x} - x)(\eta_{R} - \eta_{R-1}) - \gamma c_R) \\
&~
\cdot\left[ \frac{1 + \tilde{x}}{2} \exp((1 - x) \eta_{R-1} - \frac{1}{2} a_{R-1} b_{R-1} ) + \frac{1 - \tilde{x}}{2} \exp ((-1 - x)\eta_{R-1} - \frac{1}{2} a_{R-1} b_{R-1} )\right] \\ 
\end{align*}
Note that we have $\abs{\tilde{x} - x} \leq \gamma$ by \Cref{appendix:gamma}, and so by the definitions of $\eta_R, c_R$ we have $(\tilde{x} - x) (\eta_R - \eta_{R-1}) - \gamma c_R \leq \gamma \sum_{r \in Q_R} \abs{\alpha_r \beta_r} - \gamma c_R = -\gamma c_{R-1} $. This gives
\begin{align*}
\leq&~ 
 \exp(- \gamma c_{R-1}) \Bigg[ \frac{1 + \tilde{x}}{2} \exp((1 - x) \eta_{R-1} - \frac{1}{2} a_{R-1} b_{R-1} ) \\
&~
 + \frac{1 - \tilde{x}}{2} \exp ((-1 - x)\eta_{R-1} - \frac{1}{2} a_{R-1} b_{R-1} )\Bigg] \\ 
=&~
 \Bigg[ \frac{1 + \tilde{x}}{2} \exp((1 - x) \eta_{R-1} - \frac{1}{2} a_{R-1} b_{R-1} - \gamma c_{R-1}) \\
&~ 
 + \frac{1 - \tilde{x}}{2} \exp ((-1 - x)\eta_{R-1} - \frac{1}{2} a_{R-1} b_{R-1} - \gamma c_{R-1})\Bigg] \\ 
=&~
\EEr{\sum_{t=R+1}^{n} (\delta_{t,p} + \epsilon_{t,p} (p)) \eta_{R-1} - \frac{1}{2} a_{R-1} b_{R-1} - \gamma c_{R-1}} 
\end{align*}
as desired. Having established this inequality we proceed to show $g(R) \leq g(R-1)$ as discussed. We will apply iterated expectations and condition on the first $R$ iterations of the algorithm, applying the lemma to the inner expectation, noticing that the resulting expression recovers $g(R-1)$ exactly. We have
\begin{align*}
    g(R) :=&~
    \mathbb{E} \Bigg[\exp \Bigg( \sum_{t=t_p}^{n} (\delta_{t,p} + \epsilon_{t,p} (p)) \sum_{s=t_p-1}^{\min\{R,\,t-1\}} \sum_{r \in Q_{s,p}} \alpha_r \beta_r \\
    &~
    - \frac{1}{2} \left(\sum_{s=t_p-1}^{R} \sum_{r \in Q_{s,p}} \alpha_r ^2\right) \left(\sum_{s=t_p-1}^{R} \sum_{r \in Q_{s,p}} \beta_r ^2\right) 
    - \gamma \sum_{s=t_p-1}^{R} \sum_{r \in Q_{s,p}} \abs{\alpha_r \beta_r} \Bigg) \,\Bigg|\, {\Delta_{p}} \Bigg] \\
    =&~
    \mathbb{E} \Bigg[ \mathbb{E} \Bigg[  \exp \Bigg( \sum_{t=t_p}^{n} (\delta_{t,p} + \epsilon_{t,p} (p)) \sum_{s=t_p-1}^{\min\{R,\,t-1\}} \sum_{r \in Q_{s,p}} \alpha_r \beta_r \\
    &~
    - \frac{1}{2} \left(\sum_{s=t_p-1}^{R} \sum_{r \in Q_{s,p}} \alpha_r ^2\right) \left(\sum_{s=t_p-1}^{R} \sum_{r \in Q_{s,p}} \beta_r ^2\right) - \gamma \sum_{s=t_p-1}^{R} \sum_{r \in Q_{s,p}} \abs{\alpha_r \beta_r} \Bigg) \,\Bigg|\, {\Delta_{R}}\Bigg] \,\Bigg|\, {\Delta_{p}} \Bigg] \\
    =&~
    \mathbb{E} \Bigg[ \exp \left( \sum_{t=t_p}^{R} (\delta_{t,p} + \epsilon_{t,p} (p)) \sum_{s=t_p-1}^{t-1} \sum_{r \in Q_{s,p}} \alpha_r \beta_r \right) \\
    &~
    \cdot \mathbb{E} \left[  \exp \left( \sum_{t=R+1}^{n} (\delta_{t,p} + \epsilon_{t,p} (p)) \eta_R - \frac{1}{2} a_R b_R - \gamma c_R \right) \,\middle|\, {\Delta_{R}}\right] \,\Bigg|\, {\Delta_{p}} \Bigg] \\
    \leq&~
    \mathbb{E} \Bigg[ \exp \left( \sum_{t=t_p}^{R} (\delta_{t,p} + \epsilon_{t,p} (p)) \sum_{s=t_p-1}^{t-1} \sum_{r \in Q_{s,p}} \alpha_r \beta_r \right) \\
    &~
    \cdot \mathbb{E} \left[  \exp \left( \sum_{t=R+1}^{n} (\delta_{t,p} + \epsilon_{t,p} (p)) \eta_{R-1} - \frac{1}{2} a_{R-1} b_{R-1} - \gamma c_{R-1} \right) \,\middle|\, {\Delta_{R}}\right] \,\Bigg|\, {\Delta_{p}} \Bigg] \\
\end{align*}
an check that in the outer summation ($t \leq R$) the index $s$ ranges up to $t-1 = \min\{R-1, t-1\}$, whereas in the inner summation ($t \geq R+1$) it ranges up to $R-1 = \min\{R-1, t-1\}$, allowing us to recombine as follows:
\begin{align*}
    =&~
    \mathbb{E} \Bigg[ \mathbb{E} \Bigg[  \exp \Bigg( \sum_{t=t_p}^{n} (\delta_{t,p} + \epsilon_{t,p} (p)) \sum_{s=t_p-1}^{\min\{R-1,\,t-1\}} \sum_{r \in Q_{s,p}} \alpha_r \beta_r \\
    &~
    - \frac{1}{2} \left(\sum_{s=t_p-1}^{R-1} \sum_{r \in Q_{s,p}} \alpha_r ^2\right) \left(\sum_{s=t_p-1}^{R-1} \sum_{r \in Q_{s,p}} \beta_r ^2\right) - \gamma \sum_{s=t_p-1}^{R-1} \sum_{r \in Q_{s,p}} \abs{\alpha_r \beta_r} \Bigg) \,\Bigg|\, {\Delta_{R}}\Bigg] \,\Bigg|\, {\Delta_{p}} \Bigg] \\
    =&~
    \mathbb{E} \Bigg[\exp \Bigg( \sum_{t=t_p}^{n} (\delta_{t,p} + \epsilon_{t,p} (p)) \sum_{s=t_p-1}^{\min\{R-1,\,t-1\}} \sum_{r \in Q_{s,p}} \alpha_r \beta_r \\
    &~
    - \frac{1}{2} \left(\sum_{s=t_p-1}^{R-1} \sum_{r \in Q_{s,p}} \alpha_r ^2\right) \left(\sum_{s=t_p-1}^{R-1} \sum_{r \in Q_{s,p}} \beta_r ^2\right) - \gamma \sum_{s=t_p-1}^{R-1} \sum_{r \in Q_{s,p}} \abs{\alpha_r \beta_r} \Bigg) \,\Bigg|\, {\Delta_{p}} \Bigg] \\
    =&~
    g(R-1)
\end{align*}
as claimed. We remark that the case $g(t_p - 1) \leq 1$ can be shown with the same argument as above, without needing to apply iterated expectations as this iteration only counts the pivot $b_p$ as decided. In particular, with $x = z_{t_p}(p)$ we have
\begin{align*}
g(t_p - 1) &= \EE{\sum_{t=t_p}^{n} (\delta_{t,p} + \epsilon_{t,p} (p)) \eta_{t_p-1} - \frac{1}{2} a_{t_p-1} b_{t_p-1} - \gamma c_{t_p-1}}{\Delta_{p}} \\
    &= \exp( (\tilde{x} - x) \eta_{t_p-1} - \gamma c_{t_p-1})  f_{\tilde{x}} \begin{bmatrix} a_{t_p-1} & \eta_{t_p-1} \\ \eta_{t_p-1} & b_{t_p-1} \end{bmatrix} \\
    &\leq 1 \cdot f_{\tilde{x}} \begin{bmatrix} a_{t_p-1} & \eta_{t_p-1} \\ \eta_{t_p-1} & b_{t_p-1} \end{bmatrix} \\
    &\leq f_{\tilde{x}} \begin{bmatrix} 0 & 0 \\ 0 & 0 \end{bmatrix} = 1
\end{align*}
which completes the proof.
\end{proof}

\subsection{Proof of Theorem 2.3}
\label{sec: pf of theorem 2.3}

\begin{proof}[Proof of Theorem~\ref{thm:spielmanadapted}]

As discussed in \eqref{eq:overview:split} we can rewrite our output $Bz - Bz_0$ as
\begin{align*}
Bz - Bz_0 &= B \left( \sum_{t=1}^T \dth u_t + \dth(\uth - u_t) + \clamp \right)  \\
&= B \sum_{t=1}^T \left(\dth + \clamp(p_t)\right) u_t + B \left(\sum_{t=1}^T \dth (\uth - u_t)+ \clamp - \clamp(p_t) u_t\right) 
\end{align*}
where we are able to use a modified version of the proof in \cite{harshaw2023balancingcovariatesrandomizedexperiments} to bound the MGF of the first term. We use $\ec$ as an upper bound on the infinity norm of $\clamp$, deferring the choice of a concrete constant. Considering the remaining terms observe that we have
\begin{gather*}
\norm{B(\dth (\uth - u_t) + \clamp - \clamp(p_t) u_t)}_2 \leq \abs{\dth} \norm{B (\uth - u_t)}_2 + \norm{B \clamp}_2 + |\clamp (p_t)| \norm{B u_t}_2 \\
\end{gather*}
and from here we note that $\norm{B u_t} \leq 1$ by the fact $B u_t$ is some projection of the pivot unit, and $\norm{B \clamp} = \norm{\sum_{i} \clamp (i) b_i} \leq \sum_i \abs{\clamp (i)} \norm{b_i} \leq \sum_i \abs{\clamp (i)} \leq n \ec$
which together with the fact $\abs{\dth} \leq 2$ gives
\begin{equation*}
    \sum_{t=1}^n \norm{B(\dth (\uth - u_t) + \clamp - \clamp(p_t) u_t)}_2 \leq \sum_{t=1}^n (2\els + n \ec + \ec) \leq 2n  \els + n(n+1)\ec
\end{equation*}

which implies that $\exp(\sum_t \inner{B e_t, v}) \leq \exp ( (2 n \els  +n(n+1) \ec) \norm{v})$ by the Cauchy-Schwarz inequality where $e_t$ is the split of error term discussed in \ref{paragraph:error_decomp}. This allows us to show
\begin{align*}
    \E[\exp (\inner{Bz - Bz_0),v}] &= \E \left[\exp \left(\sum_t (\delta_t + \epsilon_t (p_t))\inner{B u_t, v} + \inner{B e_t, v} \right)\right]  \\
    &\leq \exp \left(\frac{1}{2} \norm{v}^2 + (\sqrt{n} \gamma + (2 n \els  +n(n+1) \ec)) \norm{v}\right)
\end{align*}
by applying the above bound on the $e_t$ terms, followed by lemma \ref{spielman:withinphasebound}. As will be shown in \cref{appendix:constants}, we can choose $\gamma = (2 \epsilon_{pr} + \ec) n$ to get a bound of
\begin{equation*}
    \exp \left(\frac{1}{2} \norm{v}^2 + \left(\underbrace{n^{3/2}(2 \epsilon_{pr} + \ec)}_{\text{drift/sampling}} + \left(\underbrace{2 n \cdot \els}_{\text{least squares}}  +\underbrace{n(n+1) \cdot \ec}_{\text{arithmetic/clamping}}\right)\right)\norm{v}\right).
\end{equation*}
\end{proof}

\subsection{Choosing Constants} \label{appendix:constants}
We now discuss how to choose the constant $\gamma$ used in our result and how to further decompose $\epsilon_{pr}, \els$ to give a more precise characterization of error in terms of machine dependent quantities. 

\begin{lemma} \label{appendix:gamma}
    For each iteration $R$ within a pivot phase for pivot $p$, defining $\tilde{x}$ such that $\Pr[\sum_{R+1}^n (\delta_{t,p} + \epsilon_{t,p} (p)) =1-x \,\big|\, \Delta_R] = \frac{1 + \tilde{x}}{2}$ we have $\abs{\tilde{x} - x} \leq (2 \epsilon_{pr} + \ec)n$ where $x := z_{R+1} (p)$, and in the ideal algorithm the corresponding probability is $\frac{1+x}{2}$.
\end{lemma}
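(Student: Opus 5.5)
The approach is to turn the two-point structure into a statement about the conditional mean of the remaining pivot displacement, and then bound that mean by a drift sum. Write $Y := \sum_{t=R+1}^n (\delta_{t,p} + \epsilon_{t,p}(p))$. As noted in the proof of \Cref{lem:single_pivot}, each summand is the pivot increment $z_{t+1}(p) - z_t(p)$, because the pivot is retained until it freezes. Hence $Y = z_{n+1}(p) - x$ takes only the values $1-x$ and $-1-x$. If $\Pr[Y = 1-x \mid \Delta_R] = \frac{1+\tilde x}{2}$, then
\[
\E[Y \mid \Delta_R] = \frac{1+\tilde x}{2}(1-x) + \frac{1-\tilde x}{2}(-1-x) = \tilde x - x .
\]
So it suffices to show $\abs{\E[Y\mid\Delta_R]} \le (2\epsilon_{pr} + \ec)n$. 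In the ideal algorithm this mean is $0$ by the martingale property, which gives probability $\frac{1+x}{2}$.

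To bound the mean, I would use the tower property over the remaining iterations:
\[
\E[Y\mid \Delta_R] = \sum_{t=R+1}^n \E\bigl[\E[\delta_{t,p} + \epsilon_{t,p}(p) \mid \mathcal F_t] \bigm| \Delta_R\bigr],
\]
where $\mathcal F_t$ is the history together with all computations preceding the branch choice at iteration $t$. On the event that $t$ lies in $p$'s phase, the error-model assumption gives $\abs{\E[\dth \mid \mathcal F_t]} \le 2\epsilon_{pr}$. Off that event the summand is $0$ by the definition of $\delta_{t,p}$.

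The term $\epsilon_t(p)$ may be chosen after the branch is realized, so I would not condition it. I would bound it pathwise instead: $\abs{\epsilon_t(p)} \le \norm{\epsilon_t}_\infty \le \ec$. Thus each iteration contributes at most $2\epsilon_{pr} + \ec$ in absolute value.

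Summing over at most $n$ iterations gives $\abs{\tilde x - x} \le (2\epsilon_{pr} + \ec)n$. The bound on the number of iterations holds because each iteration freezes at least one new coordinate.

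The main obstacle is measurability. The event that $t$ is still in $p$'s phase must be $\mathcal F_t$-measurable, so that the indicator can be pulled outside the inner conditional expectation. It is, because the phase ends exactly when $z_t(p)$ freezes, and that is determined before iteration $t$'s branch. With this in place the tower step is legitimate and the remaining argument is the routine summation above.

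A separate check is that $\tilde x \in [-1,1]$, which holds because $\frac{1+\tilde x}{2}$ is a probability. This is what \Cref{lem:single_pivot} needs in order to apply \Cref{spielman:operatorfn}.
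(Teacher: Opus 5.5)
Your proposal is correct and follows essentially the same route as the paper: you identify $\tilde x - x$ with the conditional mean of the remaining pivot displacement, using its two-point support $\{1-x,-1-x\}$, and then bound that mean by the tower property over $\mathcal F_t$ with the $\mathcal F_t$-measurable phase indicator, charging $2\epsilon_{pr}$ per iteration for the branch-probability bias and a pathwise $\ec$ for $\epsilon_t(p)$, summed over at most $n$ iterations. The paper's proof does exactly this; it writes $q=\frac{1+x+\mu}{2}$ to obtain $\tilde x = x+\mu$, which is your direct computation $\E[Y\mid\Delta_R]=\tilde x - x$.
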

\begin{proof}

Focusing first on $\gamma$, we previously stated that $\gamma$ is our upper bound on $\abs{\tilde{x} - x}$ where $\tilde{x}$ is defined such that the probability that $\sum_{t=R+1}^{n} (\delta_{t,p} + \epsilon_{t,p} (p))$ takes value $1 - x$, conditional on $\Delta_R$, is $\frac{1 + \tilde{x}}{2}$. We now motivate this definition further and elaborate on how we arrive at our concrete choice of $\gamma$.

Recall that, conditioned on $\Delta_R$, $\sum_{t=R+1}^{n} (\delta_{t,p} + \epsilon_{t,p} (p))$ obeys a two point distribution, taking values in $\{1-x, -1 - x\}$. Therefore if we use $\mu$ to denote its conditional mean, we can write for some $q \in [0,1]$ that
\begin{align*}
    \mu &= q(1-x) + (1-q)(-1-x) \\
    &= 2 q - 1 - x
\end{align*}
and therefore $q = \frac{1+x+\mu}{2}$. Setting $q = \frac{1 + \tilde{x}}{2}$ gives that $\tilde{x} = x + \mu$ and so our assumption of $\abs{\tilde{x} - x} \leq \gamma$ is therefore equivalent to assuming that $\abs{\mu} \leq \gamma$. Put another way, $\tilde{x}$ is simply a notational alternative to writing $x + \E [\sum_{t=R+1}^{n} (\delta_{t,p} + \epsilon_{t,p} (p)) \mid \Delta_R]$ throughout the proof, emphasizing that it is the result of perturbing $x$ (which would be the value in the infinite precision case) slightly due to noise.

We now discuss the value of $\gamma$. To provide a value for $\gamma$ it suffices to bound
\begin{equation*}
\left | \E \left[\sum_{t=R+1}^{n} (\delta_{t,p} + \epsilon_{t,p} (p)) \,\Big|\, \Delta_R\right] \right | = 
\left| \sum_{t=R+1}^{n} \E \bigl[\delta_{t,p} + \epsilon_{t,p} (p) \,\big|\, \Delta_R\bigr] \right |
\end{equation*}

    Let $\mathcal F_t$ contain the history and all computations preceding
 the random branch choice at iteration $t$, and let $\mathcal H_t$
 contain the history through the completed update at iteration $t$.
Thus $\Delta_R=\mathcal H_R$ and
$\mathcal H_R\subseteq\mathcal F_t$ for $t>R$.
Write $I_{t,p}$ for the indicator that iteration $t$ belongs to the
phase of pivot $p$, noting that this indicator is $\mathcal F_t$-measurable. Before termination,
$\widehat p_t$ differs from the probability centering the computed
endpoints by at most $\varepsilon_{\rm pr}$, so
\[
 \left|\mathbb E[\widehat\delta_t\mid\mathcal F_t]\right|
 = (\widehat\delta_t^++\widehat\delta_t^-)
   |\widehat p_t-p_t^{\rm comp}|
 \le 2\varepsilon_{\rm pr}.
\]
After termination the padded step has zero conditional mean.
Moreover, $|\varepsilon_t(p)|\le\varepsilon_{\rm iter}$, including
when the update error depends on the sampled branch. Therefore,
\[
 \left|\mathbb E\!\left[
 I_{t,p}\bigl(\widehat\delta_t+\varepsilon_t(p)\bigr)
 \mid\mathcal F_t\right]\right|
 \le I_{t,p}(2\varepsilon_{\rm pr}+\varepsilon_{\rm iter}).
\]
The tower property and the triangle inequality now give
\begin{align*}
 |\mu|
 &=\left|\mathbb E\!\left[
 \sum_{t=R+1}^{n}I_{t,p}
 (\widehat\delta_t+\varepsilon_t(p))\mid\mathcal H_R
 \right]\right|\\
 &\le\sum_{t=R+1}^{n}
 \mathbb E\!\left[
 \left|\mathbb E\!\left[
 I_{t,p}(\widehat\delta_t+\varepsilon_t(p))\mid\mathcal F_t
 \right]\right|\mid\mathcal H_R\right]\\
 &\le(n-R)(2\varepsilon_{\rm pr}+\varepsilon_{\rm iter})
 \le\gamma
\end{align*}
Since $\widetilde x=x+\mu$, this proves
$|\widetilde x-x|\le\gamma$ for $\gamma:=n(2\varepsilon_{\rm pr}+\varepsilon_{\rm iter})$.

\end{proof}

\begin{lemma}[Bounding sampling error, arithmetic error, in terms of machine precision] \label{consants:prob_error}
On a floating point machine with $b$ random bits, we have $\epsilon_{pr} \leq \frac{2u}{1 - 2u} + 2^{-b}$ for unit roundoff $u$.
\end{lemma}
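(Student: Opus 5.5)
The plan is to split $\epsilon_{pr}$ into two sources: error in \emph{computing} the target probability $p^{\mathrm{comp}}=\widehat\delta_t^-/(\widehat\delta_t^++\widehat\delta_t^-)$ in floating point, and error in \emph{realizing} a Bernoulli trial with the computed probability using only $b$ random bits. By the triangle inequality,
\[
|\widehat p-p^{\mathrm{comp}}|\le |\mathrm{fl}(p^{\mathrm{comp}})-p^{\mathrm{comp}}| + |\widehat p-\mathrm{fl}(p^{\mathrm{comp}})|,
\]
and I will bound the first term by $\frac{2u}{1-2u}$ and the second by $2^{-b}$.

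For the first term, I use the standard floating point model $\mathrm{fl}(a\circ b)=(a\circ b)(1+\theta)$ with $|\theta|\le u$. The computed endpoint magnitudes $\widehat\delta_t^\pm$ are taken as given, since $p^{\mathrm{comp}}$ is defined in terms of the computed endpoints. The quotient then involves exactly two rounded operations: the sum $s=\mathrm{fl}(\widehat\delta^++\widehat\delta^-)=(\widehat\delta^++\widehat\delta^-)(1+\theta_1)$, because both summands are positive, and the division $\mathrm{fl}(\widehat\delta^-/s)=\frac{\widehat\delta^-}{s}(1+\theta_2)$. Hence $\mathrm{fl}(p^{\mathrm{comp}})=p^{\mathrm{comp}}\cdot\frac{1+\theta_2}{1+\theta_1}$. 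I then bound the relative factor:
\[
\left|\frac{1+\theta_2}{1+\theta_1}-1\right|\le\frac{|\theta_2|+|\theta_1|}{1-|\theta_1|}\le\frac{2u}{1-u}\le\frac{2u}{1-2u}.
\]
Since $p^{\mathrm{comp}}\le 1$, the absolute error is at most $\frac{2u}{1-2u}$. If the implementation instead clips the result to $[0,1]$, this only reduces the error, because $p^{\mathrm{comp}}\in[0,1]$.

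For the second term, the branch is sampled by drawing $U$ uniformly from the dyadic grid $\{k2^{-b}: 0\le k<2^b\}$ using $b$ fair bits, and taking the positive branch iff $U<\mathrm{fl}(p^{\mathrm{comp}})$. The realized probability is then $\widehat p=\lceil 2^b\,\mathrm{fl}(p^{\mathrm{comp}})\rceil 2^{-b}$, which differs from $\mathrm{fl}(p^{\mathrm{comp}})$ by less than $2^{-b}$. Adding the two bounds gives $\epsilon_{pr}\le\frac{2u}{1-2u}+2^{-b}$.

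The step I expect to need the most care is pinning down the exact sampling procedure and the set of floating-point operations counted, because the lemma's hypotheses are informal. I would state explicitly that $p^{\mathrm{comp}}$ is computed from the already-computed $\widehat\delta^\pm$ with one addition and one division, and that comparing against a $b$-bit uniform is exact: the grid points are representable whenever $b$ does not exceed the significand length. If the comparison were itself rounded, I would absorb that extra rounding into the $\frac{2u}{1-2u}$ slack. This is why the looser denominator $1-2u$, rather than $1-u$, is convenient.
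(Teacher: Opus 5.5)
Your argument is correct and follows the paper's proof essentially step for step. It uses the same triangle-inequality split, the same representation $r_t=p^{\mathrm{comp}}\frac{1+\theta_2}{1+\theta_1}$ with bound $\frac{2u}{1-u}\le\frac{2u}{1-2u}$, the same clipping remark, and a $b$-bit sampler with error below $2^{-b}$. The one difference is that the paper compares an integer $J_t$ with $k_t=\lfloor 2^b q_t\rfloor$, which is exact for every $b$ and so avoids your $b\le L$ caveat. Your closing remark is wrong, though: the slack $\frac{2u}{1-2u}-\frac{2u}{1-u}=\frac{2u^2}{(1-u)(1-2u)}$ is only second order, so it cannot absorb an extra $O(u)$ rounding in the comparison; the fix is the exact integer comparison, not the looser denominator.
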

\begin{proof}
    
We note that in the following discussions $u$ refers to unit roundoff while $u_t$ is the step direction at time $t$. In the following discussion we condition on the history and all computations preceding the random branch choice at iteration $t$. Let $p_t^{\mathrm{comp}}\coloneqq \frac{\hat\delta_t^-}{\hat\delta_t^-+\hat\delta_t^+}$, where $\hat\delta_t^+\geq 0$, $\hat\delta_t^-\geq 0$, and $\hat\delta_t^++\hat\delta_t^->0$. Here, $p_t^{\mathrm{comp}}$ is the exact ratio of the stored endpoint magnitudes, and it hence centers the step at the computed endpoints.

We assume binary floating-point arithmetic with $L$ significand bits (including the leading bit), rounding to nearest, and no overflow or underflow in the operations below. Here the unit roundoff is $u=2^{-L}$. Compute $s_t=\operatorname{fl}(\hat\delta_t^-+\hat\delta_t^+)$, $r_t=\operatorname{fl}(\hat\delta_t^-/s_t)$, and $q_t=\min\{1,\max\{0,r_t\}\}$.

Then, by the standard floating-point model (in Higham's lemma, \citep{doi:10.1137/1.9780898718027}), there exists $\theta_{1,t},\theta_{2,t}$ with $|\theta_{i,t}|\le u$ such that $r_t=p_t^{\mathrm{comp}}\frac{1+\theta_{2,t}}{1+\theta_{1,t}}$. Since clipping to $[0,1]$ cannot increase the error and $p_t^{\mathrm{comp}}\in[0,1]$, for $2u<1$ we have
\begin{align*}
|q_t-p_t^{\mathrm{comp}}|
&\le |r_t-p_t^{\mathrm{comp}}|\\
&=p_t^{\mathrm{comp}}
  \frac{|\theta_{2,t}-\theta_{1,t}|}{|1+\theta_{1,t}|}\\
&\le \frac{2u}{1-u}\\
&\le \frac{2u}{1-2u}.
\end{align*}
To sample using $b$ fresh independent unbiased random bits, form $k_t=\lfloor 2^b q_t\rfloor$ exactly, draw $J_t$ uniformly from $\{0,\ldots,2^b-1\}$ independently of the conditioned history, and choose the positive branch if $J_t<k_t$. The resulting conditional positive-branch probability is $\hat p_t=k_t/2^b$, so we have that $|\hat p_t-q_t|<2^{-b}$. Then, the triangle inequality therefore gives, uniformly over all
iterations and histories, that \[|\hat p_t-p_t^{\mathrm{comp}}|
\le \frac{2u}{1-2u}+2^{-b}.\]
Therefore, we can set $\epsilon_{\mathrm{pr}}
=\frac{2^{1-L}}{1-2^{1-L}}+2^{-b}$.

\end{proof}

\begin{lemma}[Least-squares image error from backward stability]
\label{corollary: ls-backward-error}
Suppose that $B\in\mathbb R^{m\times n}$ has full column rank and $\|b_i\|_2\le1$ for every $i$. Write $\kappa_B:=\|B\|_2/\sigma_{\min}(B)$. At iteration $t$, let $S_t=A_t\setminus\{p_t\}$, $M_t=B_{S_t}$ and $b=b_{p_t}$. The exact nonpivot coefficients solve $x_t\in\arg\min_x\|M_tx+b\|_2$. Suppose that the computed coefficients $\hat x_t$ satisfy the following normwise backward-error guarantee, uniformly over all iterations and execution histories: for some perturbations $E_t,f_t$, \[ \hat x_t\in\arg\min_x \|(M_t+E_t)x+b+f_t\|_2,\] where $\|E_t\|_2\le\eta\|M_t\|_2$ and $\|f_t\|_2\le\eta\|b\|_2$, where $\eta\ge0$ is deterministic. Form $u_t,\hat u_t$ by placing $x_t,\hat x_t$ on $S_t$, setting the pivot entry to one, and setting all inactive entries to zero. If $\eta\kappa_B<1$, then we have that $\|B(\hat u_t-u_t)\|_2\le\epsilon_{\mathrm{LS}}$ for every $t$, where \[ \epsilon_{\mathrm{LS}} \leq \frac{3\eta\kappa_B}{1-\eta\kappa_B}.\]  In particular, if $\eta\kappa_B\le\tfrac12$, one may instead take $\epsilon_{\mathrm{LS}}=6\eta\kappa_B$.
\end{lemma}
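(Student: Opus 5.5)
The plan is to reduce the claim to a standard perturbation bound for least-squares projections. The key quantities are the projections onto the column spaces of $M_t$ and $M_t+E_t$.

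\textbf{Reduction.} Since $u_t$ and $\hat u_t$ agree on the pivot entry and on all inactive entries, we have $B(\hat u_t-u_t)=M_t(\hat x_t-x_t)$. If $S_t=\emptyset$ there is nothing to prove, so assume it is nonempty.

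\textbf{Step 1: conditioning of $M_t$.} Because $M_t$ is a column submatrix of $B$, we have $\|M_t\|_2\le\|B\|_2$ and $\sigma_{\min}(M_t)\ge\sigma_{\min}(B)>0$, using the full column rank of $B$. Hence $\|M_t\|_2\,\|M_t^+\|_2\le\kappa_B$, and also $\kappa_B\ge1$. Set $\tilde M=M_t+E_t$. Weyl's inequality gives
\[\sigma_{\min}(\tilde M)\ge\sigma_{\min}(M_t)-\eta\|M_t\|_2\ge\sigma_{\min}(M_t)(1-\eta\kappa_B)>0,\]
so $\tilde M$ also has full column rank. Writing $\rho:=\eta\kappa_B$, this yields $\|\tilde M^+\|_2\le \|M_t^+\|_2/(1-\rho)$. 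Full column rank makes the minimizers unique, so $x_t=-M_t^+b$ and $\hat x_t=-\tilde M^+(b+f_t)$. Let $P$ and $\tilde P$ be the orthogonal projections onto $\mathrm{range}(M_t)$ and $\mathrm{range}(\tilde M)$. Then $M_tx_t=-Pb$ and $\tilde M\hat x_t=-\tilde P(b+f_t)$.

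\textbf{Step 2: decomposition.} Writing $M_t\hat x_t=\tilde M\hat x_t-E_t\hat x_t$, we get
\[M_t(\hat x_t-x_t)=(P-\tilde P)b-\tilde Pf_t-E_t\hat x_t .\]
I bound the three terms separately, using $\|b\|_2\le1$ throughout.
\begin{itemize}
\item The middle term satisfies $\|\tilde Pf_t\|_2\le\eta$.
\item From Step 1, $\|\hat x_t\|_2\le(1+\eta)\|M_t^+\|_2/(1-\rho)$. Therefore $\|E_t\hat x_t\|_2\le\eta\|M_t\|_2\|\hat x_t\|_2\le\rho(1+\eta)/(1-\rho)$.
\item For the projection term, both subspaces have the same dimension $|S_t|$, so $\|P-\tilde P\|_2=\|(I-P)\tilde P\|_2$. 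Since $\tilde P=\tilde M\tilde M^+$ and $(I-P)M_t=0$, we have $(I-P)\tilde P=(I-P)E_t\tilde M^+$. Hence $\|P-\tilde P\|_2\le\eta\|M_t\|_2\|\tilde M^+\|_2\le\rho/(1-\rho)$.
\end{itemize}

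\textbf{Step 3: assembly.} Summing the three bounds gives
\[\|M_t(\hat x_t-x_t)\|_2\le\frac{\rho+\eta(1-\rho)+\rho+\eta\rho}{1-\rho}=\frac{2\rho+\eta}{1-\rho}\le\frac{3\rho}{1-\rho},\]
where the last step uses $\eta\le\eta\kappa_B=\rho$ because $\kappa_B\ge1$. This bound holds uniformly in $t$ and over histories, because every estimate used only $\kappa_B$ and $\eta$. For the final claim, $\rho\le\tfrac12$ gives $3\rho/(1-\rho)\le6\rho$.

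\textbf{Main obstacle.} The main point needing care is the projection-difference identity $\|P-\tilde P\|_2=\|(I-P)\tilde P\|_2$. It relies on the two ranges having equal dimension, which is exactly where the hypothesis $\eta\kappa_B<1$ enters through Step 1. If that identity turns out to be awkward to justify, I would instead bound $\|P-\tilde P\|_2\le\|(I-P)\tilde P\|_2+\|(I-\tilde P)P\|_2$. The second term is handled symmetrically via $(I-\tilde P)P=-(I-\tilde P)E_tM_t^+$. This costs an additional $\rho$ in the numerator, so the constant $3$ would become $4$ (that is, $4\rho/(1-\rho)$) unless the sharper equal-rank identity is used.
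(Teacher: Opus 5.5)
Your proposal is correct and follows essentially the same route as the paper. Both arguments use the same reduction to $M_t(\hat x_t-x_t)=(P-\tilde P)b-\tilde Pf_t-E_t\hat x_t$, the same Weyl and equal-rank projector steps, and reach the same bound $3\eta\kappa_B/(1-\eta\kappa_B)$.

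The only difference is in bounding $\|P-\tilde P\|_2$. You use $(I-P)\tilde P=(I-P)E_t\tilde M^+$ and get $\rho/(1-\rho)$. The paper uses the other side, $(I-\tilde P)P=-(I-\tilde P)E_tM_t^+$, which gives $\rho$ directly and a numerator of $2\rho-\rho^2+\eta$ instead of your $2\rho+\eta$; this does not change the final constant.
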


\begin{proof}
    We clarify that the assumptions of full column rank on $B$ and backwards stability in solvers are not true in general and that this result is meant to provide an example for what $\els$ scales like in certain scenarios. In particular the Higgs matrix in the experiments is an example of a wide matrix that lacks full column rank.
    When $S_t=\emptyset$, both directions equal the pivot unit vector and the image error is zero. Otherwise, suppress the iteration index and write $M=M_t$, $\widetilde M=M+E$, and $q=\eta\kappa_2(M)$. Since $M$ consists of columns of $B$, we have that $\sigma_{\min}(M)\ge\sigma_{\min}(B)$, with $\|M\|_2\le\|B\|_2$ and $\kappa_2(M)\le\kappa_B$. Thus $q<1$, and \begin{align*}\sigma_{\min}(\widetilde M) &\ge\sigma_{\min}(M)-\|E\|_2 \\
    &\ge(1-q)\sigma_{\min}(M) >0.\end{align*}

Let $P$ and $\widetilde P$ be the orthogonal projectors onto the column spaces of $M$ and $\widetilde M$. These spaces have equal dimension, so the equal-rank projector identity gives 
\begin{align*} \|P-\widetilde P\|_2 &=\|(I-\widetilde P)P\|_2\\ &=\|(I-\widetilde P)E M^\dagger\|_2 \\
&\le\frac{\|E\|_2}{\sigma_{\min}(M)} \le q, \end{align*}
where $M^\dagger$ denotes the Moore-Penrose pseudoinverse and $(I-\widetilde P)M=-(I-\widetilde P)E$. Then, least-squares optimality gives $Mx_t=-Pb$, $\widetilde M\hat x_t=-\widetilde P(b+f)$, and $\|\hat x_t\|_2 \le\frac{(1+\eta)\|b\|_2} {(1-q)\sigma_{\min}(M)}$.

Hence, we have that
\begin{align*} \|B(\hat u_t-u_t)\|_2 &=\|M(\hat x_t-x_t)\|_2\\ 
&= \|(P-\widetilde P)b-\widetilde P f-E\hat x_t\|_2 \\
&\le\left(q+\eta+\frac{q(1+\eta)}{1-q}\right)\|b\|_2\\ 
&=\frac{\eta+2q-q^2}{1-q}\|b\|_2 \\
&\le\frac{3q}{1-q} \\
&\le\frac{3\eta\kappa_B}{1-\eta\kappa_B}, \end{align*}

where we used $\|b\|_2\le1$, $\kappa_2(M)\ge1$, and $\eta\le q$, which completes the proof. The simplified bound comes from substituting $\eta\kappa_B\le\tfrac12$.\qedhere\\
\end{proof}

\begin{remark}For a solver whose backward-error guarantee is $\eta\le C_{\mathrm{LS}}(m,n) u$, with the dimension-dependent factor chosen uniformly over all active systems, the sufficient precision condition $C_{\mathrm{LS}}(m,n)u \kappa_B \le\tfrac12$ therefore permits $\epsilon_{\mathrm{LS}} =6C_{\mathrm{LS}}(m,n)\kappa_Bu$. Its contribution to the linear term in Theorem \ref{thm:spielmanadapted} is bounded by \[ 2n\epsilon_{\mathrm{LS}} \le12nC_{\mathrm{LS}}(m,n)\kappa_B u. \]
\end{remark}
In particular the discussion in \ref{consants:prob_error}, \ref{corollary: ls-backward-error} and the relevant discussions imply a bound of
    \begin{align*}
        \E [\exp(\inner{B(z - z_0), v})]&\leq
        \exp\bigg(\frac{1}{2} \norm{v}^2+ \bigg(n^{3/2}\bigg(2\bigg(\frac{2u}{1 - 2u} + 2^{-b}\bigg) + \ec \bigg)\\
        &\qquad\qquad\qquad+\bigg(12nC_{\mathrm{LS}}(m,n)\kappa_B u +n(n+1) \ec \bigg)\bigg)\norm{v}\bigg).
    \end{align*}  
Finally, we prescribe bit prescriptions in order to guarantee an error of at most $\tau$ in the Gram-Schmidt walk in Corollary \ref{cor:sufficient_bits}. For this, we introduce the following helpful lemma below.

\begin{lemma}[Arithmetic and freezing error]
\label{lem:update_precision}
Let $\mathfrak u=2^{-L}\le1/4$ be the unit roundoff
of binary arithmetic with $L$ significand bits,
including the leading bit. Assume no overflow or
underflow in the operations below. Write
$\operatorname{RN}$ for rounding to nearest and
$\operatorname{RD}$ for rounding toward $-\infty$. Then, at a stored state $z_t\in[-1,1]^n$, let
$\widehat u_t$ be the stored direction, with pivot
entry one and zero entries outside the active set.
For each branch $s\in\{-1,+1\}$ and active coordinate
with $\widehat u_t(i)\ne0$, compute
\[a_{t,i}^s
 =\operatorname{sgn}(s\widehat u_t(i)),
\qquad
q_{t,i}^s
 =\operatorname{RD}\left(
 \frac{\operatorname{RD}(1-a_{t,i}^s z_t(i))}
 {|\widehat u_t(i)|}
 \right).\]
Set $d_t^s=\min_i q_{t,i}^s$ and $i_t^s\in\arg\min_i q_{t,i}^s$, where both endpoints are computed before choosing the branch.
After choosing branch $s$, set
$\widehat\delta_t=s d_t^s$ and compute, coordinatewise,
\[
w_t(i)
 =\operatorname{RN}\left(
 z_t(i)+
 \operatorname{RN}(\widehat\delta_t\widehat u_t(i))
 \right).
\]
Clip these values to $[-1,1]$, and set
$z_{t+1}(i_t^s)=a_{t,i_t^s}^s$.
Any additional snapping to an endpoint is permitted
only when it changes the clipped value by at most
$c_{\mathrm{snap}}\mathfrak u$. Here, previously frozen coordinates are left unchanged, and additional snapping is applied only to active coordinates. Then the update is feasible, freezes at least one additional coordinate, and satisfies $d_t^+\le1-z_t(p_t)$ and $d_t^-\le1+z_t(p_t)$. Moreover, $z_{t+1} =z_t+\widehat\delta_t\widehat u_t+\varepsilon_t$ with $\|\varepsilon_t\|_\infty
 \le C_{\mathrm{it}}\mathfrak u$ and 
$C_{\mathrm{it}}:=8+c_{\mathrm{snap}}$.
\end{lemma}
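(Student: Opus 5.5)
The plan is to verify the three structural claims (endpoint caps, feasibility, and at least one new freeze) directly from the rounding directions, and then bound $\varepsilon_t(i)=z_{t+1}(i)-z_t(i)-\widehat\delta_t\widehat u_t(i)$ coordinatewise by splitting into cases: frozen coordinates, coordinates with $\widehat u_t(i)=0$, the designated freezing index $i_t^s$, and generic active coordinates.

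\textbf{Endpoint caps and feasibility.} Since $\operatorname{RD}$ rounds down and $|\widehat u_t(i)|>0$, we have
\[
q_{t,i}^s\le \frac{1-a_{t,i}^s z_t(i)}{|\widehat u_t(i)|},
\]
the exact distance to the face in direction $s\widehat u_t(i)$.
\begin{itemize}
\item For the pivot, $|\widehat u_t(p_t)|=1$ and $a^{+}=1$, $a^-=-1$. Division by one is exact, so $d_t^+\le 1-z_t(p_t)$ and $d_t^-\le 1+z_t(p_t)$.
\item Because $d_t^s$ is at most every exact face distance, the exact point $z_t+\widehat\delta_t\widehat u_t$ lies in $[-1,1]^n$.
\item Clipping enforces feasibility of the rounded values regardless. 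Snapping only moves values to endpoints, so it also preserves feasibility.
\item Frozen coordinates have $\widehat u_t(i)=0$. Then $\operatorname{RN}(0)=0$ and $\operatorname{RN}(z_t(i))=z_t(i)$, so they are unchanged, and we set $\varepsilon_t(i)=0$ there.
\item The explicit assignment $z_{t+1}(i_t^s)=a^s_{t,i_t^s}\in\{\pm1\}$ freezes a coordinate that was active. This gives at least one new freeze.
\end{itemize}

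\textbf{Generic active coordinate.} Write $y=\widehat\delta_t\widehat u_t(i)$ and $x=z_t(i)$. By the standard model,
\[
\operatorname{RN}(x+\operatorname{RN}(y))=(x+y(1+\theta_1))(1+\theta_2),\qquad |\theta_j|\le\mathfrak u,
\]
so the error is at most $\mathfrak u|y|+\mathfrak u(|x|+|y|(1+\mathfrak u))$. The exact point is feasible, so $|x+y|\le 1$, and with $|x|\le1$ this gives $|y|\le 2$. A crude bound is therefore $\mathfrak u(2+1+2\cdot\tfrac54)\le 6\mathfrak u$.

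A sharper route bounds the second rounding error by $\mathfrak u|x+\operatorname{RN}(y)|\le\mathfrak u(1+2\mathfrak u)$. Clipping to $[-1,1]$ is a projection onto an interval containing the exact value $x+y$, so it cannot increase the distance to it. Snapping adds at most $c_{\mathrm{snap}}\mathfrak u$. I would keep the crude constants, since any total up to $8\mathfrak u$ suffices.

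\textbf{The freezing index $i=i_t^s$.} This is the main obstacle: its error is not a rounding of $y$ but the gap between the rounded-down step and the exact face distance. Let $D=(1-a x)/|\widehat u_t(i)|$. Then $d=q_{t,i}^s$ satisfies $d\ge D(1-\mathfrak u)^2$, since two $\operatorname{RD}$ operations each lose relative error at most $2\mathfrak u$, and I would carefully use $\mathfrak u\le1/4$ for this. The explicit freeze changes the coordinate from $x+y$ to $a$, which is a change of
\[
|\widehat u_t(i)|\,(D-d)\le |\widehat u_t(i)|\,D\cdot 4\mathfrak u=(1-ax)\cdot4\mathfrak u\le 8\mathfrak u,
\]
using $1-ax\le 2$. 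The factor $|\widehat u_t(i)|$ cancels, which avoids any dependence on the size of $\widehat u_t$. This gives the constant $8$.

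\textbf{Conclusion.} Combining the cases, with snapping adding at most $c_{\mathrm{snap}}\mathfrak u$ only on active non-designated coordinates, yields $\|\varepsilon_t\|_\infty\le(8+c_{\mathrm{snap}})\mathfrak u$. The care needed is in the RD relative-error bound $D-d\le 4\mathfrak u D$, which must also cover the case where $\operatorname{RD}(1-ax)$ itself is tiny or exact. That is where I would check the constants first.
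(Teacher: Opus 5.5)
Your proposal is correct and follows essentially the same route as the paper. Both arguments sandwich $(1-2\mathfrak u)^2\rho_i\le q_{t,i}^s\le\rho_i$, deduce feasibility and the pivot caps, and use the cancellation of $|\widehat u_t(i)|$ to get $[1-(1-2\mathfrak u)^2](1-a z_t(i_t^s))\le 8\mathfrak u$ at the frozen coordinate; the other coordinates get $2\mathfrak u+\mathfrak u(1+2\mathfrak u)\le 4\mathfrak u$, with non-expansive clipping and $c_{\mathrm{snap}}\mathfrak u$ for snapping. One small slip: two $\operatorname{RD}$ steps, each losing relative error below $2\mathfrak u$, give $d\ge(1-2\mathfrak u)^2 D$ rather than $(1-\mathfrak u)^2 D$, but your subsequent bound $D-d\le 4\mathfrak u D$ is exactly what the correct factor yields.
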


\begin{proof}
Fix a branch and define the exact boundary distances
\begin{equation}\rho_i
 =\frac{1-a_{t,i}^s z_t(i)}
 {|\widehat u_t(i)|}.
\end{equation}
Each downward rounding of a positive normal result
loses at most a relative $2\mathfrak u$, so
$(1-2\mathfrak u)^2\rho_i
 \le q_{t,i}^s\le\rho_i$.
Therefore $d_t^s\le\min_i\rho_i$, and
$y=z_t+s d_t^s\widehat u_t$ is feasible.
The pivot entry gives the stated endpoint caps.
At the selected binding coordinate, the distance
between $y$ and the endpoint used for explicit
freezing is at most
$\left[1-(1-2\mathfrak u)^2\right]
 (1-a_{t,i_t^s}^s z_t(i_t^s))
 \le8\mathfrak u$. 
 
 For every other updated coordinate, feasibility gives
$|\widehat\delta_t\widehat u_t(i)|\le2$ and
$|y(i)|\le1$. The rounded multiplication and addition
therefore incur total absolute error at most
$2\mathfrak u+\mathfrak u(1+2\mathfrak u)
 \le4\mathfrak u$.
Clipping cannot increase the distance to $y(i)$.
Additional snapping contributes at most
$c_{\mathrm{snap}}\mathfrak u$.
Combining these bounds proves the claim.\qedhere\\
\end{proof}

\begin{corollary}[Sufficient significand precision]
\label{cor:sufficient_bits}
Assume $\|b_i\|_2\le1$, that $z_0$ is exactly
representable at the working precision, and use
the implementation of 
Lemma~\ref{lem:update_precision}, retaining each pivot
until it freezes. Suppose the least-squares solver
satisfies, uniformly over all iterations and execution
histories, $\varepsilon_{\mathrm{LS}}
\le6C_{\mathrm{LS}}\kappa_\star\mathfrak u$ and $C_{\mathrm{LS}}\kappa_\star\mathfrak u\le\tfrac12$, where $C_{\mathrm{LS}}$ is a deterministic constant
from the solver's backward-error bound, and
$\kappa_\star$ is a uniform conditioning bound for
the active systems. Compute and sample the branch
probability as in Lemma~\ref{consants:prob_error}, using $r$ fresh unbiased
random bits at each iteration. Let\looseness=-1
\begin{equation}
\Gamma
 =(8+C_{\mathrm{it}})n^{3/2}
 +12nC_{\mathrm{LS}}\kappa_\star
 +C_{\mathrm{it}}n(n+1).
\end{equation}
For any target additive penalty $\tau>0$, it suffices
to choose
\[L\ge
\left\lceil
\log_2\max\left\{
4,\,
2C_{\mathrm{LS}}\kappa_\star,\,
\frac{2\Gamma}{\tau}
\right\}
\right\rceil
\]
and
\[r\ge
\left\lceil
\log_2\max\left\{
2,\,
\frac{4n^{3/2}}{\tau}
\right\}
\right\rceil.\]
Then the linear MGF penalty in Theorem~\ref{thm:spielmanadapted} satisfies
$A\le\tau$. In particular, for every $\alpha\in(0,1)$, we have
\[\Pr\left[
\|B(z-z_0)\|_\infty >\sqrt{2\log(2m/\alpha)}+\tau
\right]\le\alpha.\]
\end{corollary}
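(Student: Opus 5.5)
The plan is to plug the three machine-dependent error budgets into Theorem~\ref{thm:spielmanadapted}, bound the resulting linear coefficient $A$ by $\Gamma\mathfrak u + 2n^{3/2}2^{-r}$, and then choose $L$ and $r$ so that each piece is at most $\tau/2$. The tail bound then follows from the MGF bound by Chernoff plus a union bound over the $2m$ signed coordinate directions.

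\textbf{Step 1: verify the model hypotheses.} Lemma~\ref{lem:update_precision} (with $L\ge 2$, so $\mathfrak u\le 1/4$) gives three things:
\begin{itemize}
\item feasibility, freezing of at least one new coordinate, and the caps $\widehat\delta_t^\pm\le 1\mp z_t(p_t)$;
\item preservation of frozen coordinates;
\item $\ec\le C_{\mathrm{it}}\mathfrak u$.
\end{itemize}
Retaining pivots until they freeze is assumed in the statement, and exact representability of $z_0$ makes the initial state a stored state. Hence the model of Section~\ref{sec finite precision gs walk} holds, and Theorem~\ref{thm:spielmanadapted} applies with
\[
A=n^{3/2}(2\epsilon_{pr}+\ec)+2n\els+n(n+1)\ec .
\]

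\textbf{Step 2: bound the individual error budgets.} By Lemma~\ref{consants:prob_error}, $\epsilon_{pr}\le \frac{2\mathfrak u}{1-2\mathfrak u}+2^{-r}\le 4\mathfrak u+2^{-r}$, since $\mathfrak u\le 1/4$. The hypothesis on the solver gives $\els\le 6C_{\mathrm{LS}}\kappa_\star\mathfrak u$, and the condition $C_{\mathrm{LS}}\kappa_\star\mathfrak u\le\frac12$ is ensured by $L\ge\log_2(2C_{\mathrm{LS}}\kappa_\star)$. Substituting,
\[
A\le \bigl[(8+C_{\mathrm{it}})n^{3/2}+12nC_{\mathrm{LS}}\kappa_\star+C_{\mathrm{it}}n(n+1)\bigr]\mathfrak u+2n^{3/2}2^{-r}=\Gamma\mathfrak u+2n^{3/2}2^{-r}.
\]
Then $L\ge\log_2(2\Gamma/\tau)$ gives $\Gamma\mathfrak u\le\tau/2$, and $r\ge\log_2(4n^{3/2}/\tau)$ gives $2n^{3/2}2^{-r}\le\tau/2$. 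So $A\le\tau$.

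\textbf{Step 3: tail bound.} For a coordinate $j$ and a sign $\sigma\in\{\pm1\}$, apply the MGF bound with $v=\lambda\sigma e_j$ and $\lambda\ge0$:
\[
\E\exp\bigl(\lambda\sigma (B(z-z_0))_j\bigr)\le \exp\bigl(\lambda^2/2+\lambda\tau\bigr).
\]
By Markov's inequality, $\Pr[\sigma (B(z-z_0))_j>s+\tau]\le\exp(\lambda^2/2-\lambda s)$. Choosing $\lambda=s$ gives the bound $e^{-s^2/2}$. Taking $s=\sqrt{2\log(2m/\alpha)}$ makes this $\alpha/(2m)$, and a union bound over the $2m$ pairs $(j,\sigma)$ yields the claim.

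The step that needs the most care is Step 1: checking that the specific implementation really satisfies every modeling assumption, in particular that $\widehat p$ is computed from the same stored endpoint magnitudes $d_t^\pm$ that define $p^{\mathrm{comp}}$, and that the no-overflow/underflow assumptions are inherited. The remaining steps are arithmetic.
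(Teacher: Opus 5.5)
Your proposal is correct and matches the paper's proof: you substitute $\varepsilon_{\mathrm{pr}}\le 4\mathfrak u+2^{-r}$, the assumed least-squares bound, and $\varepsilon_{\mathrm{iter}}\le C_{\mathrm{it}}\mathfrak u$ into Theorem~\ref{thm:spielmanadapted} to get $A\le\Gamma\,2^{-L}+2n^{3/2}2^{-r}$, make each term at most $\tau/2$ through the choices of $L$ and $r$, and finish with a Chernoff bound and a union bound over the $2m$ signed coordinate directions. Your Step~1 check of the model hypotheses and your worked Chernoff computation spell out details that the paper's proof leaves implicit.
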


\begin{proof}
Since $\mathfrak u\le1/4$, Lemma~\ref{consants:prob_error} gives
\[\varepsilon_{\mathrm{pr}}
 \le\frac{2\mathfrak u}{1-2\mathfrak u}+2^{-r}
 \le4\mathfrak u+2^{-r}.\]
Substituting this bound, the assumed least-squares
bound, and
$\varepsilon_{\mathrm{iter}}\le C_{\mathrm{it}}\mathfrak u$
into Theorem~\ref{thm:spielmanadapted} yields
\[
A
 \le\Gamma\,2^{-L}+2n^{3/2}2^{-r}.
\]
The prescribed choices bound each term by $\tau/2$.
The discrepancy guarantee follows by applying the
MGF bound to each signed coordinate direction,
optimizing the Chernoff parameter, and taking a
union bound, which completes the proof.\qedhere\\
\end{proof}

\begin{remark}For fixed $c_{\mathrm{snap}}$ and $0<\tau\le1$, these
sufficient requirements have the orders
\[L=O\left(\log\frac{n^2+nC_{\mathrm{LS}}\kappa_\star}{\tau}
\right),
\qquad
r=O\left(
 \log\frac{n^{3/2}}{\tau}
\right),\]
where $L$ counts significand bits including the leading
bit.\end{remark}

\section{Lower Bound}
\label{sec: lower bound}

\begin{theorem}[Lower bound for bounded update errors]
There exists a universal constant $c>0$ such that, for every integer $n\ge 8$ and every $\varepsilon\ge 0$, there is a full-column-rank matrix $\mB=[b_1,\ldots,b_n]\in\mathbb{R}^{(n+1)\times n}$ with $\|b_i\|_2=1$ and an admissible error process for the Gram-Schmidt Walk, initialized at $z_0=0$, such that $z_{t+1}=z_t+\delta_tu_t+\epsilon_t$ with $\|\epsilon_t\|_\infty\le\varepsilon$, with exact step-selection probabilities, and whose terminal discrepancy $D=\mB z_T$ satisfies \[  \| \mathbb{E}D\|_2 \ge c\min\{n^2\varepsilon,n\}, \qquad\|D\|_{\psi_2} \ge c\min\{n^2\varepsilon,n\}.\] Here $\|\cdot\|_{\psi_2}$ denotes the uncentered directional Orlicz norm. The errors are deterministic functions of the
iteration, active set, and current pivot, which preserve all
frozen coordinates, and the walk remains feasible and
terminates after at most $n$ iterations.
\end{theorem}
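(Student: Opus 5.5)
The plan is to use the adversarial instance already described in the experiments, $b_i=(e_0+e_i)/\sqrt2\in\R^{n+1}$ for $i\in[n]$. These columns are unit norm and clearly full column rank. We track the scalar $\langle D,e_0\rangle=\frac1{\sqrt2}\sum_i z_T(i)$. We set the error magnitude to $a:=\min\{\varepsilon,\,1/(8n)\}$, so that $n^2a\asymp\min\{n^2\varepsilon,n\}$ and the saturation regime is built in.

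\textbf{Structure of the exact directions.} The first step is to compute the exact least-squares directions for this $\mB$. For an active set $A$ of size $k$ and pivot $p$, we minimize $\|\mB u\|_2^2=\tfrac12\bigl(\sum_i u_i^2+(\sum_i u_i)^2\bigr)$ subject to $u_p=1$. This gives $u(i)=-1/k$ on $A\setminus\{p\}$. The walk is therefore explicit, which is what makes a deterministic adversary easy to specify.

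\textbf{Unbiasedness of the exact part.} The key identity is that with exact step-selection probabilities, $\E[\delta_t u_t\mid\mathcal F_t]=0$ for every $t$, regardless of how the past errors moved the state. This holds because the probabilities are recomputed exactly at the realized state. Summing gives
\[
\E z_T=z_0+\sum_t\E\epsilon_t,\qquad\text{so}\qquad \langle\E D,e_0\rangle=\tfrac1{\sqrt2}\sum_t\E\Bigl[\sum_i\epsilon_t(i)\Bigr].
\]

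\textbf{The error process and the drift count.} We take $\epsilon_t(i)=+a$ on every active non-pivot coordinate during the first $\lfloor n/4\rfloor$ iterations, and $\epsilon_t=0$ otherwise. This is a deterministic function of the iteration, the active set and the pivot, and it leaves frozen coordinates untouched. Each iteration freezes at least one coordinate, so while $t\le n/4$ there are at least $3n/4-1$ active non-pivot coordinates. Provided the walk actually survives $n/4$ iterations, the sum above is at least $a\cdot\frac n4\cdot(\frac{3n}{4}-1)\gtrsim n^2a$. Since $\|\E D\|_2\ge|\langle\E D,e_0\rangle|$, this yields $\|\E D\|_2\ge c\min\{n^2\varepsilon,n\}$. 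For the Orlicz claim we use the fact that for $X=\langle D,e_0\rangle$, $\|X\|_{\psi_2}\ge c'\,\E|X|\ge c'|\E X|$ (by Jensen and $e^{y^2}\ge 1+y^2$). This transfers the bias bound to the uncentered directional $\psi_2$ norm.

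\textbf{Main obstacle: feasibility and survival.} This is where I expect the real work to be, and there are two issues.

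The first is feasibility. Adding $+a$ must not push an active coordinate past $1$, and the update must still freeze a coordinate. I would handle this by applying the error \emph{before} the exact step, i.e.\ by redefining the state at which $u_t,\delta_t$ are computed. Alternatively, one can note that the exact step takes the binding coordinate exactly to $\pm1$, and exclude that coordinate from the error by letting it depend on the active set after the step. The cap $a\le 1/(8n)$ keeps the accumulated shift below $1/32$, so non-binding coordinates stay strictly inside.

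The second issue is symmetry. Because the non-pivot coordinates move identically under $u(i)=-1/k$, they could all freeze simultaneously and end the error window early. To avoid this, I would first try to make the errors slightly non-uniform, for instance by adding $a$ only to a deterministic half of the non-pivots in a staggered pattern. This keeps the non-pivots at distinct values, so only one of them binds per step, while still injecting $\Omega(n)$ units of $a$ per iteration. If that proves awkward, the fallback is to lower-bound the expected number of error-carrying iterations by $\Omega(n)$ directly.
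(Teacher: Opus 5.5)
\textbf{There is a genuine gap.} Your construction matches the paper's: the same columns $b_i=(e_0+e_i)/\sqrt2$, the same $a=\min\{\varepsilon,1/(8n)\}$, the same error $+a$ on active non-pivots for the first $\lfloor n/4\rfloor$ iterations, the same cancellation of the exact steps, and an equivalent bias-to-Orlicz transfer. The gap is the step you flag yourself: you never establish how many coordinates freeze during the error window. Your counting sentence also runs the wrong way. ``Each iteration freezes at least one coordinate'' gives only the \emph{upper} bound $|A_t|\le n-t$. To get at least $3n/4-1$ active non-pivots you need that \emph{at most} one coordinate freezes per iteration. The error window is indexed by iteration count. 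Without this, the window could contain few active coordinates, or the walk could terminate early, and the drift sum need not be $\Omega(n^2a)$.

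\textbf{The missing idea.} A short induction shows that, throughout the window, the pivot is the unique binding coordinate on both branches. An active non-pivot at iteration $t$ has been an active non-pivot in every earlier iteration, because a former pivot is frozen. In iteration $s$ it moves by $-\delta_s/(n-s)+a$ with $|\delta_s|\le 2$. So after either candidate step its absolute value is at most
$\sum_{s=0}^{t}\bigl(\tfrac{2}{n-s}+a\bigr)\le\tfrac{2L}{n-L+1}+La\le\tfrac23+\tfrac1{32}<1$ for $L=\lfloor n/4\rfloor$. This has several consequences:
\begin{itemize}
\item The feasible endpoints are $\delta=\pm1-z_t(p_t)$, determined by the pivot.
\item The pivot receives zero error and lands exactly on $\pm1$.
\item Every other active coordinate stays strictly interior even after adding $a$, so exactly one coordinate freezes.
\end{itemize}
This gives $|A_t\setminus\{p_t\}|=n-t-1$ and feasibility at once. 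You do not need to apply the error before the step. Nor do you need the error to depend on the post-step active set, which would in any case violate the requirement that errors depend only on the iteration, current active set and current pivot.

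\textbf{The symmetry worry is misdirected.} The active non-pivots do share a common value, but that value provably stays below $67/96$ in absolute value, so they never bind. What controls binding is the exact drift $-\delta_s/(n-s)$. Your cap $La\le 1/32$ does not bound this drift, and staggering the errors does not address it.
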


\begin{proof}
Let $e_0,e_1,\ldots,e_n$ be the standard basis of
$\mathbb{R}^{n+1}$, and for all $i\in[n]$, set $b_i=\frac{e_0+e_i}{\sqrt{2}}$. We use a zero index based notation here for convenience; this convention is local to the lower bound.
Each column has Euclidean norm one, and
\[\mB^\top \mB
    =\frac12\bigl(\mI_n+\mathbf{1}\mathbf{1}^\top\bigr)\]
is positive definite, so $\mB$ has full column rank.

We first identify the exact update direction. For an active set $A$ of size $k$ and pivot $p\in A$, every feasible direction satisfies $u(p)=1$ and $u(j)=0$ for $j\notin A$. Moreover,
\[2\|\mB u\|_2^2 = \left(\sum_{j\in A}u(j)\right)^2+\sum_{j\in A}u(j)^2.\]
Minimizing this strictly convex quadratic gives $u(j)=-\frac1k$ for $j\in A\setminus \{p\}$ and $\sum_{j\in A}u(j) = \frac1k$.
In particular, we have $\langle Bu,e_0\rangle=\frac{1}{\sqrt{2}\,k}$. Set
 $L=\left\lfloor\frac n4\right\rfloor$ with $a=\min\left\{\varepsilon,\frac{1}{8n}\right\}$. Writing $A_t$ and $p_t$ for the active set and pivot before
iteration $t$, define
\[\epsilon_t(j)=
    \begin{cases}
        a,& 0\le t<L,\quad j\in A_t\setminus\{p_t\},\\
        0,& \text{otherwise}.
\end{cases}\]
Thus $\|\epsilon_t\|_\infty\le\varepsilon$, and the errors vanish on the pivot and all previously frozen coordinates.

We now verify via mathematical induction that during the first $L$ iterations, both feasible endpoints are determined by the pivot and exactly that coordinate freezes. We argue by induction. Suppose this holds before iteration $t<L$, so that $|A_t|=n-t$. The two candidate pivot endpoints correspond to signed steps $\delta=1-z_t(p_t)$ and $\delta=-1-z_t(p_t)$, each of magnitude at most two.

Any $j\in A_t\setminus\{p_t\}$ has been an active non-pivot
coordinate in every preceding iteration: a previous pivot
would already have frozen. Hence, for either candidate
step, its ideal updated value and its perturbed updated
value both have absolute value at most
\[
\begin{aligned}
    \sum_{s=0}^{t}
        \left(\frac{2}{n-s}+a\right)
    &\le
    \frac{2L}{n-L+1}+La\\
    &\le
    \frac23+\frac1{32}
    =\frac{67}{96}<1.
\end{aligned}
\]

Previously frozen coordinates remain unchanged, while the pivot reaches $1$ or $-1$ and receives zero error. Therefore both candidate endpoints are feasible before and after adding the error. Since extending either candidate step would move the pivot outside the cube, these are precisely the exact feasible endpoints. Every other active coordinate remains strictly interior, so exactly the pivot freezes. This completes the induction. After iteration $L-1$, all errors vanish and the algorithm continues as an exact Gram-Schmidt Walk. Thus the entire process is admissible and terminates at some $T\le n$. Now, it remains to bound the accumulated discrepancy. Let
\[Y=\langle Bz_T,e_0\rangle.\]
We condition on the history before selecting the step, including the current pivot. The exact step-selection probabilities ensure that
$\mathbb{E}[\delta_t\mid\mathcal{F}_t]=0$.
Since $u_t$ is determined by this history, each ideal
discrepancy increment has conditional mean zero.
Padding terminated runs with zero updates therefore gives
\begin{align*}\mathbb{E}Y&=\mathbb{E}\sum_{t=0}^{T-1}
        \langle B\epsilon_t,e_0\rangle\\
        &=\frac{a}{\sqrt{2}}
        \sum_{t=0}^{L-1}(n-t-1).
\end{align*}
For $n\ge8$, we have $L\ge n/8$ and $n-L\ge3n/4$.
Consequently, \begin{align*}\mathbb{E}Y&\ge\frac{a}{\sqrt{2}}L(n-L)\\
&\ge\frac{3}{32\sqrt{2}}an^2.\end{align*}
Furthermore, $an^2=\min\left\{n^2\varepsilon,\frac n8\right\}\ge\frac18\min\{n^2\varepsilon,n\}$, which proves the lower bound on
$\|\mathbb{E}D\|_2$ as $e_0$ is a unit vector. Finally, for every $K>0$, Jensen's inequality gives
\[\mathbb{E}\exp(Y^2/K^2)\ge\exp\bigl((\mathbb{E}Y)^2/K^2\bigr).\]
Using
$\|Y\|_{\psi_2}=\inf\{K>0:\mathbb{E}\exp(Y^2/K^2)\le2\}$,
it follows that
\begin{align*}\|D\|_{\psi_2}&\ge \|Y\|_{\psi_2}\\
&\ge \frac{|\mathbb{E}Y|}{\sqrt{\log 2}}\\
&=\Omega\bigl(\min\{n^2\varepsilon,n\}\bigr),\end{align*}
which completes the proof.\qedhere\\
\end{proof}

\section{Experimental Details}
\label{sec: experimental details}

For the distributional experiments in Figures~\ref{fig:bits}--\ref{fig:noise}, we generate $R=1000$ independent walks per configuration, parallelized across CPU processes on a six-core laptop with single-threaded BLAS.  Let $\mathcal D$
denote the finite set of tested unit directions, and write
$X_r(d)=d^\top B_0z^{(r)}$, where $B_0$ is the original
input matrix. We split the walks into disjoint selection
and evaluation sets, $\mathcal I_{\mathrm{sel}}$ and
$\mathcal I_{\mathrm{ev}}$, of equal size.
For either set $\mathcal I$, define
$\overline X_{\mathcal I}(d)
 =\frac{1}{|\mathcal I|}
   \sum_{r\in\mathcal I}X_r(d)$, and the centered moment diagnostic
\begin{equation}
\widehat\sigma_{\mathcal I}(d)
 =\max_{j\in\{1,2,3\}}
 \left(
 \frac{
 |\mathcal I|^{-1}\sum_{r\in\mathcal I}
 |X_r(d)-\overline X_{\mathcal I}(d)|^{2j}
 }{(2j-1)!!}
 \right)^{\frac{1}{2j}}.
 \end{equation}
We select $d_\star\in\arg\max_{d\in\mathcal D}
\widehat\sigma_{\mathcal I_{\mathrm{sel}}}(d)$ and report
$\widehat\sigma_{\mathcal I_{\mathrm{ev}}}(d_\star)$.
The evaluation walks are independent of those used to
select the direction. In particular, $d_\star$ need not
maximize the evaluation-set diagnostic.

On the evaluation set, we also compute the empirical
Orlicz statistic
\begin{equation}
\widehat K(d)
 =\inf\left\{c>0:
 \frac{1}{|\mathcal I_{\mathrm{ev}}|}
 \sum_{r\in\mathcal I_{\mathrm{ev}}}
 \exp\left(
 \frac{(X_r(d)-\overline X_{\mathcal I_{\mathrm{ev}}}(d))^2}
 {c^2}
 \right)\le2
 \right\}
\end{equation}
by bisection, and the centered exceedance frequency
\begin{equation}
\widehat p_t(d_\star)
 =\frac{1}{|\mathcal I_{\mathrm{ev}}|}
 \sum_{r\in\mathcal I_{\mathrm{ev}}}
 \mathbf 1\left\{
 |X_r(d_\star)
 -\overline X_{\mathcal I_{\mathrm{ev}}}(d_\star)|>t
 \right\}.
\end{equation}
Directional medians and $10$--$90\%$ bands summarize
$\{\widehat\sigma_{\mathcal I_{\mathrm{ev}}}(d):
d\in\mathcal D\}$. Evaluation-set means are used separately
to measure bias.

\textbf{Interpretation and uncertainty.} 
The statistics we compute in our experiments are finite-sample diagnostics over a
finite direction set, and hence they cannot exactly certify population
subgaussian parameters or maximize over the entire unit
sphere. Similarly, the moment diagnostic uses  orders $2,4,6$
and is Gaussian-calibrated, so its population value for
$N(0,\sigma^2)$ is $\sigma$. The Orlicz norm uses a
different normalization. For $G\sim N(0,\sigma^2)$,
$\|G\|_{\psi_2}=\sqrt{8/3}\,\sigma$, whereas a symmetric
Rademacher variable has Orlicz norm $1/\sqrt{\log 2}$. The error bars in the main sweeps show the standard deviation
of five batchwise maxima over the tested directions.
The batches use evaluation samples centered by their
full evaluation-set means.  The standardized-tail experiment in Figure~\ref{fig:shape} uses a
separate descriptive protocol. It computes uncentered
moment diagnostics on all $R$ walks, selects the
maximizing tested direction, and rescales that
direction's empirical survival curve using its
diagnostic from the same samples. Figure~\ref{fig:shape} therefore
describes the observed normalized tail shape.

\textbf{Discussion.} Figure~\ref{fig:bits} shows that the centered directional diagnostics
decrease as precision increases and then approach a plateau;
its directional bands describe variation across tested
directions. Figure~\ref{fig:shape} compares empirically rescaled,
uncentered survival curves across problem sizes and
suggests similar behavior over part of the observed range,
without establishing a common population tail law.   Figure~\ref{fig:noise} compares rounding with independently injected
noise and explores candidate rescalings.   Finally, Figure~\ref{fig:steplanding} uses $150$ walks with $m=2$, $n=200$, and five retained fraction bits. Along each float64 reference trajectory, we recompute directions in low precision at identical states and summarize their coefficient differences by the number of free nonpivot coordinates. The median and interquartile ranges reveal a pronounced error spike near $k=m$.

\textbf{Input representation.}
Let $B_0$ be the original input and $\widehat B$
its working-precision representation. As the
experiments start at $z_0=0$, we have $B_0z=\widehat Bz+(B_0-\widehat B)z$. So, discrepancy measured against $B_0$ includes
input-representation error as well as discrepancy
relative to the represented matrix.

\textbf{Combined bias and fluctuation diagnostic.}
For the scaling-collapse plots, define
\[
\widehat b_{\max}
=\max_{d\in\mathcal D}
 |\overline X_{\mathcal I_{\mathrm{ev}}}(d)|,
\qquad
\widehat S
=\left(
 \widehat\sigma_{\mathcal I_{\mathrm{ev}}}(d_\star)^2
 +\widehat b_{\max}^{\,2}
\right)^{1/2}.
\]
These terms may correspond to different directions.
So, $\widehat S$ combines centered fluctuations and
bias, rather than estimating the Orlicz norm or
uncentered moment diagnostic of a single projection, such that
$\max\{
 \widehat\sigma_{\mathcal I_{\mathrm{ev}}}(d_\star),
 \widehat b_{\max}
\}
\le\widehat S
\le\sqrt2\,
\max\{
 \widehat\sigma_{\mathcal I_{\mathrm{ev}}}(d_\star),
 \widehat b_{\max}
\}$.

\begin{figure}[hbt!]
  \centering
  \includegraphics[width=0.9\linewidth]{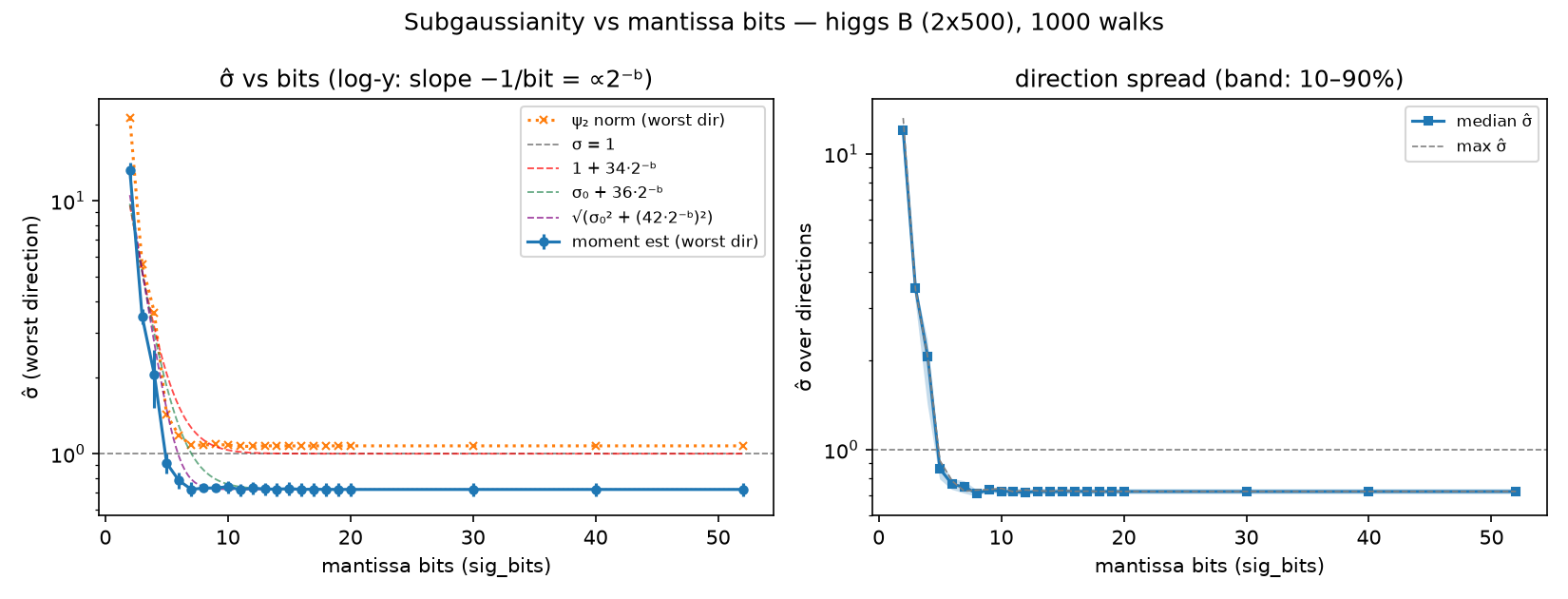}
  \caption{\textbf{Worst-direction $\hat{\sigma}$ vs.\ mantissa bits at fixed
  $n = 500$.} The plateau ends near $b \approx 8$. Three one-parameter fits
  describe the low-precision side nearly equally well; of these, the
  quadrature form $\sqrt{\sigma_{0}^{2} + (42 \cdot 2^{-b})^{2}}$ is the one
  consistent with the $n$-sweep. The dotted $\psi_{2}$ norm sits about
  $1.4\times$ above the moment estimate throughout. Right: the median and 10--90\%
  direction band.}
  \label{fig:bits}
\end{figure}

\begin{figure}[hbt!]
  \centering
  \includegraphics[width=0.8\linewidth]{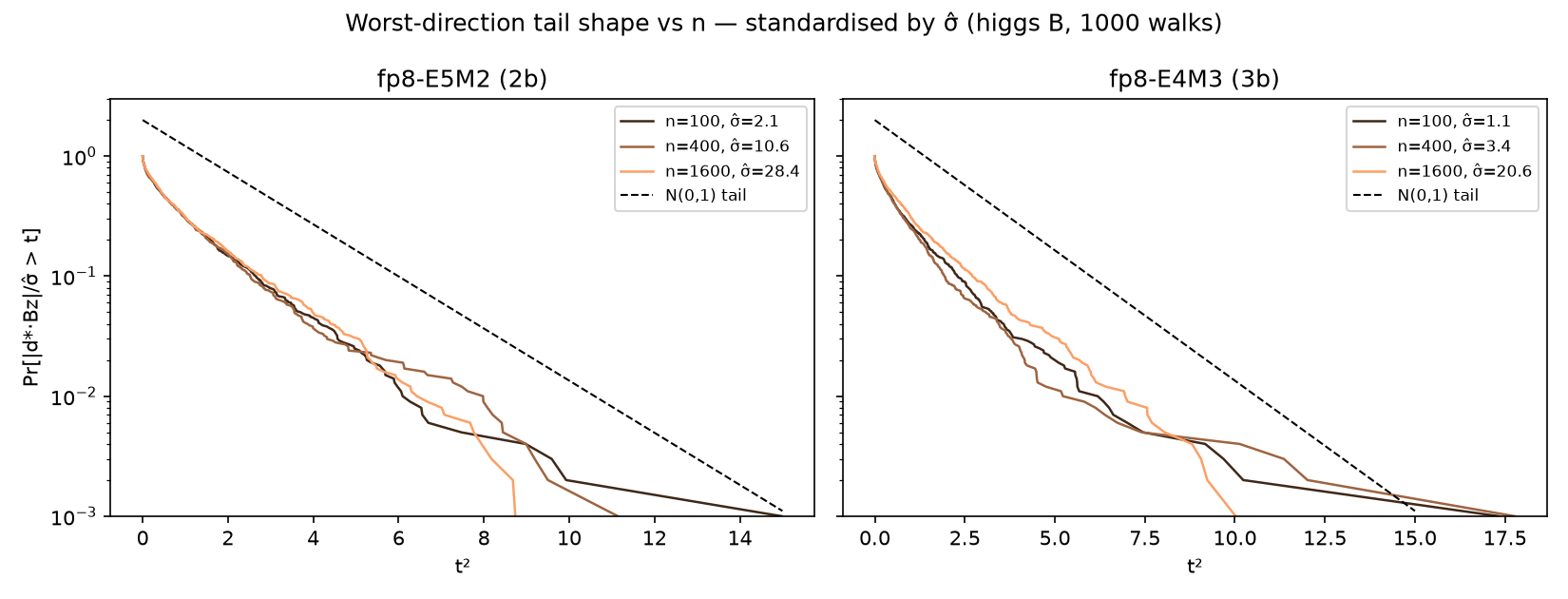}
  \caption{\textbf{Standardized empirical tails at two and three retained
fraction bits for $n=100,400,1600$.} We plot
$\widehat{\Pr}\{|d_*^\top Bz|/\widehat\sigma>t\}$ using the
direction and scale selected from the same samples. The dashed
curve is $2e^{-t^2/2}$, which we use as a reference envelope. The curves show broadly similar
behavior over part of the observed range, with deviations in the
empirical far tail.}
  \label{fig:shape}
\end{figure}

\begin{figure}[hbt!]
  \centering
  \includegraphics[width=0.9\linewidth]{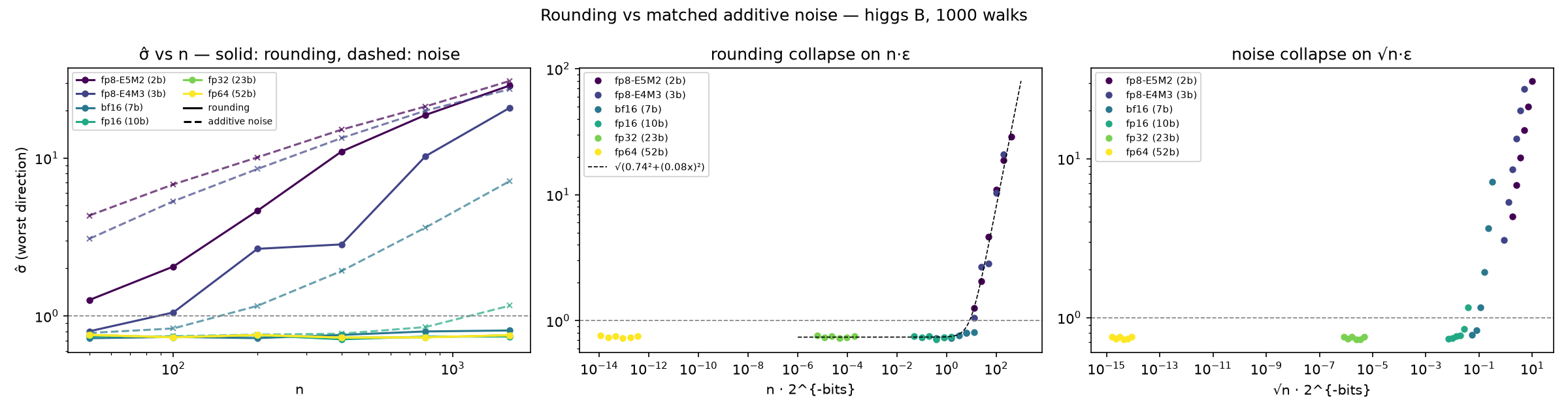}
  \caption{\textbf{Rounding is not additive noise.} Solid curves run the
  walk's arithmetic at $b$ mantissa bits, dashed curves inject i.i.d.\ noise
  of the same magnitude $2^{-b}$ per step. Rounding is reused coherently and
  grows like $n\varepsilon$, so the cells collapse on the $n\varepsilon$ axis
  (middle). Independent noise accumulates like $\sqrt{n}\,\varepsilon$ and
  does not collapse (right).}
  \label{fig:noise}
\end{figure}

\begin{figure}[hbt!]
  \centering
  \includegraphics[width=0.8\linewidth]{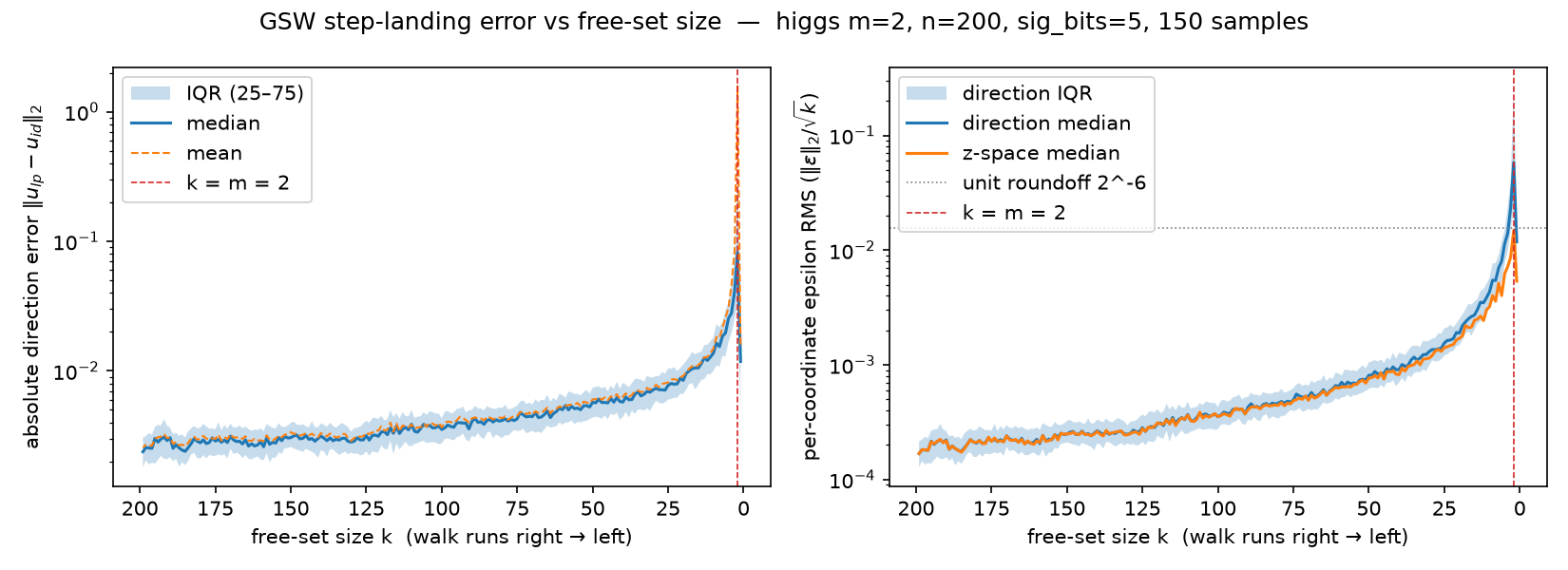}
  \caption{Coefficient differences on the float64 reference
  trajectory (150 walks, $m = 2$, $n = 200$, 5 mantissa bits). The error
  stays a small fraction of the unit roundoff for most of the walk, then
  spikes by orders of magnitude when the free set reaches $k \approx m$ and
  $B_{A \setminus p}$ becomes nearly singular. We show the median and IQR
  because the rare blowups dominate the mean. This diagnostic compares coefficient vectors, not the image-error quantity $\|B(\widehat u_t-u_t)\|_2$, and it is
evaluated on reference rather than perturbed trajectories.}
  \label{fig:steplanding}
\end{figure}

\begin{figure}
    \centering
    \includegraphics[width=0.7\linewidth]{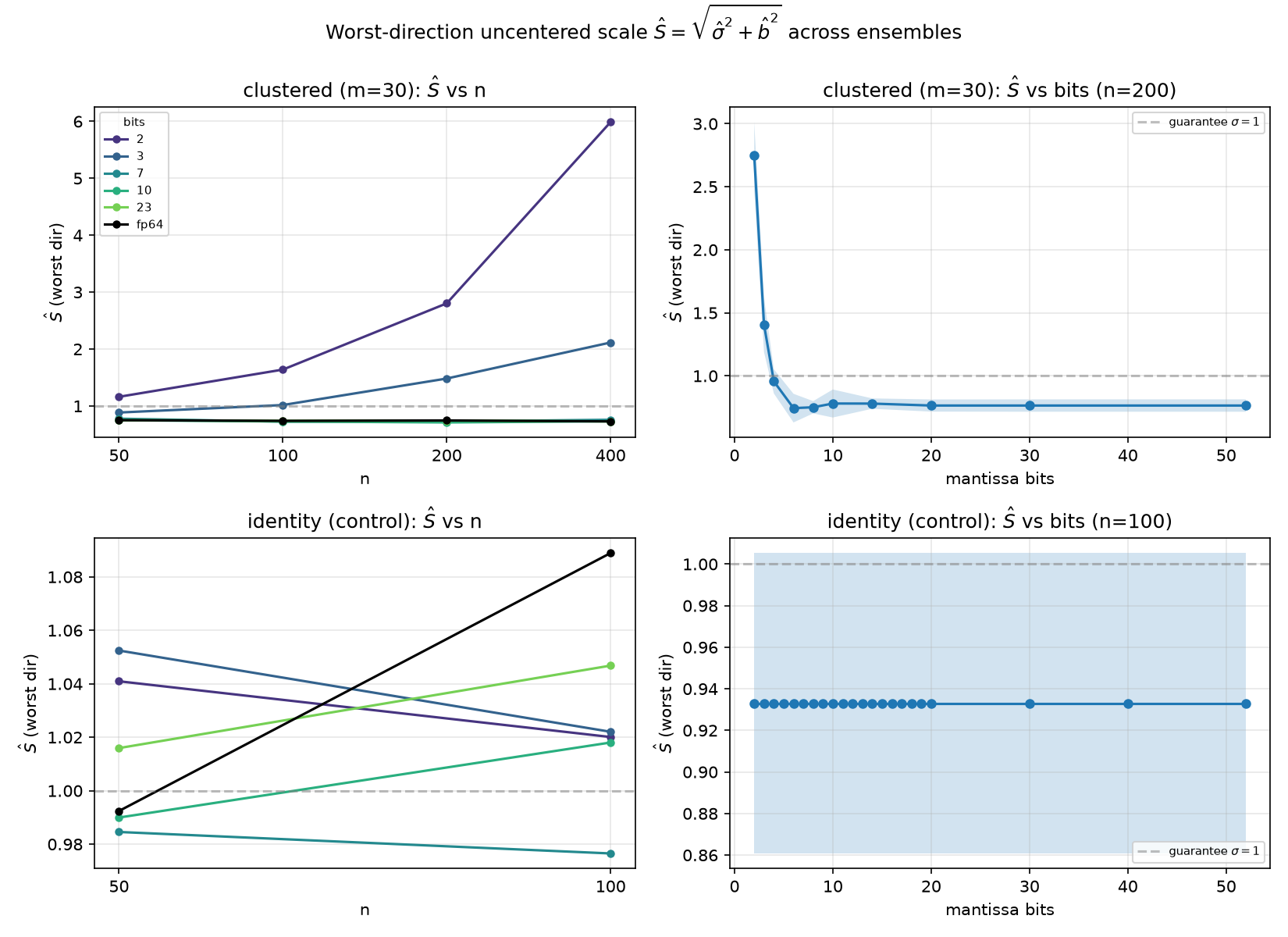}
    \caption{\textbf{Worst-direction uncentered scale $\hat{S} =\sqrt{\hat{\sigma}^2 + \hat{b}^2_{max}}$ (1000 walks per cell, 64 directions)}. \textbf{Top:} clustered matrix $B\in\mathbb R^{30\times n}$: (\emph{left}), $\hat{S}$ vs $n$ at $6$ precisions (fp64 black) (\emph{right}) $\hat{S}$ vs mantissa bits at $n=200$, and the shaded band shows
$\pm 1.96$ standard deviations of five batchwise direction-maxima of
$\hat{\sigma}$. The precision knee at $b\approx 5$ matches the $n\epsilon$ scaling observed on Higgs. \textbf{Bottom:}  identity matrix $B=I_n$ is a degenerate control: $e_p$ is orthogonal to the active block, so the least-squares step is exact at every precision and $\hat S$ is flat across $n$ and bits.}
    \label{fig:placeholder}
\end{figure}

\textbf{Adversarial stress test.}
Figure~\ref{fig:collapse} is a numerical stress test on
Theorem~\ref{thm:bounded-error-lower-bound}: we inject
$a = 2^{-b}$ on active non-pivot coordinates for the first
$\lfloor n/4 \rfloor$ steps \emph{without} the cap $a < 1/(8n)$ (violated cells
are drawn as open markers), and directions, probabilities, and updates are rounded at $b$ bits rather than exact. In our simulations, runs may also terminate within $n$ steps with unfrozen coordinates, on which we still report $B_0 z$. It thus measures
adversarial bias combined with rounding error.

\subsection{Dtype-based solver comparisons}
\label{subsection: dtype comparisons}
In this subsection, a walk is classified as failed if it encounters a nonfinite
direction or branch probability, a Cholesky or rank-one downdate
breakdown, or exhausts the step budget before reaching a complete
signing. 
The experiments we provide here use dtype-based float64, float32, and float16
arithmetic, including pairwise reductions in the selected dtype.
The test families comprise Gaussian matrices, matrices with
controlled condition numbers, near-collinear column pairs, and
Harshaw designs, with $n\in\{50,200,500\}$. Attempted sample counts
vary by solver and configuration. The repository reports
6.9 hours of wall-clock time with 32 workers and 211 CPU-hours
of walks for the EPYC grid; this compute budget is separate
from that of the main precision sweeps.\looseness=-1
 
Here, the failure rates include all attempted walks. Orlicz and bias
diagnostics exclude failed walks and therefore describe outputs
conditional on successful execution, and discrepancy is evaluated using the original
float64 matrix. 
Table~\ref{tab:solver-refactoring} reproduces
selected configurations from the stored EPYC-grid summaries, and
the complete configuration tables are provided in the repository.
\begin{table}[t]
\centering
\footnotesize
\setlength{\tabcolsep}{4pt}
\begin{tabular}{rlrrrrr}
\hline
$n$ & Solver & Attempts & Failures & Fail.\ (\%)
    & Successes & $\widehat K_{\mathrm{succ}}$ \\
\hline
50  & QR             & 2000 &    0 &  0.00 & 2000 & 0.756 \\
50  & Cholesky       & 2000 & 1948 & 97.40 &   52 & 1.24  \\
50  & Cholesky, $k=10$ & 2000 &  0 &  0.00 & 2000 & 1.37  \\
50  & Explicit inverse & 2000 & 0 &  0.00 & 2000 & 0.754 \\
\hline
200 & QR             & 1000 &    0 &   0.00 & 1000 & 0.653 \\
200 & Cholesky       & 2000 & 2000 & 100.00 &    0 & ---   \\
200 & Cholesky, $k=10$ & 2000 &  1 &   0.05 & 1999 & 6.72  \\
200 & Explicit inverse & 2000 & 0 &   0.00 & 2000 & 1.19  \\
\hline
500 & QR             &   20 &    0 &   0.00 &   20 & 0.582 \\
500 & Cholesky       & 2000 & 2000 & 100.00 &    0 & ---   \\
500 & Cholesky, $k=10$ & 2000 & 35 &   1.75 & 1965 & 19.2  \\
500 & Explicit inverse & 300 &  0 &   0.00 &  300 & 3.36  \\
\hline
\end{tabular}
\caption{
Float16 solver comparisons on correlated Harshaw designs with
$d=5$ and $\phi=0.01$.
QR recomputes the least-squares direction from scratch;
$k=10$ denotes Cholesky refactoring every ten steps.
$\widehat K_{\mathrm{succ}}$ is the maximum empirical
Orlicz estimate over 20 fixed random unit directions and the
top and bottom left singular-vector directions, using only
successful walks. A dash indicates no successful walks. The QR estimate at $n=500$ uses  20 walks.
}
\label{tab:solver-refactoring}
\end{table} For these configurations, refactoring substantially reduces
the observed failure rate but does not eliminate discrepancy
inflation among successful walks. At $n=500$, its 35 failures
comprise 26 Cholesky breakdowns and 9 downdate breakdowns.
Across the 36 float16 Harshaw configurations with refactoring,
the largest observed failure rate is 1.75\%, and all remaining
configurations have rates at most 0.05\%.\looseness=-1

\end{document}